\newtheorem{theorem}{Theorem}[section] 
\newtheorem{lemma}[theorem]{Lemma}
\newtheorem{proposition}[theorem]{Proposition}
\newtheorem{definition}[theorem]{Definition}
\newtheorem{rem}[theorem]{Remark}
\theoremstyle{definition}
\DeclareMathOperator{\Tr}{Tr}
\setlist[itemize]{label=\textbullet}
\theoremstyle{definition}
\newcommand{\di}{\mathrm{d}}
\newcommand{\WF}{\mathrm{WF}}
\begin{document}
\par 
\bigskip 
\LARGE 
\noindent 
\textbf{Equilibrium states for non relativistic Bose gases with condensation} 
\bigskip \bigskip
\par 
\rm 
\normalsize 
 
\large
\noindent 
{\bf Stefano Galanda$^{1,2,a}$}, {\bf Nicola Pinamonti$^{1,2,b}$}
\\
\par
\small

\noindent$^1$ Dipartimento di Matematica, 
Universit\`a di Genova - Via Dodecaneso, 35, I-16146 Genova, Italy. \smallskip

\noindent$^2$ Istituto Nazionale di Fisica Nucleare - Sezione di Genova, Via Dodecaneso, 33 I-16146 Genova, Italy. \smallskip


\smallskip

\noindent E-mail: 
$^a$stefano.galanda@edu.unige.it, 
$^b$nicola.pinamonti@unige.it
\\

\normalsize
${}$ \\ \\
 {\bf Abstract} \ \ 
In this paper we present the construction of the equilibrium states at finite temperature in the presence of a condensation phase for a Gas of non relativistic Bose particles on an infinite space interacting through a localised two body interaction. 
We use methods of quantum field theory in the algebraic formulation to obtain this result and in order to prove convergence of the partition function and of the generating function of the correlation functions, we introduce an auxiliary stochastic gaussian field which mediates the interaction of the Bose particles  (Hubbard-Stratonovich transformation). The construction of the equilibrium state and of the partition function in the presence of the condensate treating the auxiliary stochastic field as external potential can be achieved using and adapting ideas and methods of Araki.
Explicit formulas for the relative entropy of the equilibrium state with the external potential with respect to the equilibrium state of the free theory are obtained adapting known Feynman-Kac formulas for the propagators of the theory.
If the two-body interaction is sufficiently weak, the proof of the convergence of the partition function after evaluation of the external stochastic field on a suitable Guassian state can be given utilizing the properties of the relative entropy mentioned above.
Limits where the localisation of the two-body interaction is removed are eventually discussed in combination of the limits of vanishing temperature and or in the weakly interacting regime.

\color{black}

\bigskip
${}$

\tableofcontents


\section{Introduction}
\subsection{Overview}
Bose-Einstein condensation (BEC) is a phase transition that occurs at sufficiently low temperatures, where a macroscopic number of bosonic particles accumulate in the lowest energy state of the system \cite{Bose, Einstein}. This phenomenon, in the thermodynamic limit, is characterized by the spontaneous breaking of the global $U(1)$ symmetry, indicated by a non-vanishing expectation value of the bosonic field operator (one-point function) \cite{Bog47, HugenholtzPines, GoldstoneSalamWeinberg, Strocchi, BFP21}. This mechanism underlies key phenomena, such as superfluidity, and provides a fundamental setting for the analysis of emergent behavior in many-body quantum systems.

BEC was first observed experimentally in dilute alkali gases in 1995, confirming long-standing theoretical predictions \cite{And95, Bra95, Dav95}. Following its experimental realization, theoretical research on BEC intensified significantly \cite{PS16}, with notable contributions in the mathematical physics literature aimed at rigorously understanding its underlying mechanisms for the case of particles which interact through a two body force \cite{LiebSeiringerYngvason05, Schlein}. In particular, significant progress has been made in the rigorous analysis of interacting quantum many-body systems at zero temperature, where the excitation spectrum of the Hamiltonian on the ground state \cite{Seiringer, DerNap, BoccatoBrenneckeCenatiempoSchlein, BoccatoBrenneckeCenatiempoSchlein1, BoccatoBrenneckeCenatiempoSchlein2, BOSS, HainzlSchleinTriay}, and the emergence and dynamics of condensation \cite{BreSch, LS02, ErScYa, ErScYa1} have been studied with increasing precision.

The corresponding situation at thermal equilibrium, at positive temperature for weakly interacting particles, has been less thoroughly investigated, though it has attracted growing interest in recent years, particularly with the aim of understanding the behavior of the bosonic gas near the critical temperature. 

Thermal equilibrium in quantum statistical mechanics is characterized by Gibbs states, e.g. grand canonical density matrices specified by the inverse temperature $\beta$ and the chemical potential $\mu$. In particular, to describe the effective behavior of the bosonic gas slightly above the critical temperature, recent works have investigated the emergence of classical Gibbs measures for nonlinear field theories as high-temperature or mean-field limits of grand canonical quantum many-body bosonic systems. These results provide a rigorous derivation of such measures, often associated with nonlinear Schrödinger-type equations, as effective descriptions for bosonic gases with two-body non-local interactions \cite{FroehlichKnowlesSchleinSohinger2, FKSS3, LPR05}. 
In particular, the analysis of Gibbs measure for field theories as limits of certain gran canonical states of the many body systems in the regime of high temperature has been studied in 
\cite{FroehlichKnowlesSchleinSohinger2}, 
where Borel summability of the resulting series expansion has been proven.
The finite temperature case for the two dimensional theory has been discussed in 
\cite{FroehlichKnowlesSchleinSohinger}.
The rigorous derivation of the Gibbs measure as limits of gran canonical states for many body theory, in the case of vanishing background, has been recently presented in \cite{LPR05}.
In all these works, the presence of a finely tuned external trapping potential is assumed together with the absence of a condensate phase in the underlying many-body system. A first step in the direction of including an existing background condensate phase was done in \cite{DNN}. There, trace-norm approximations of the grand canonical Gibbs state are constructed.\\\\

While Gibbs states accurately describe thermal equilibrium in finite or confined systems, their thermodynamic limit typically exhibits divergences. One of the achievements and advantages of the algebraic approach, see \cite{KMS}, was to overcome this problem by defining equilibrium states through the KMS condition, which generalises Gibbs states and provides the correct description of equilibrium in generic infinite-volume settings. The algebraic approach is essential as equilibrium states at the same temperature correspond to inequivalent representations of the same theory; in particular, the existence of multiple, mutually inequivalent KMS states at a fixed temperature is the situation occurring in presence of phase transitions. The definition of perturbed equilibrium systems in statistical mechanics was given in \cite{Araki, BratteliRobinson, OhyaPetz, DJP}, and based on these foundations, extended to quantum field theories became within the algebraic framework \cite{HaagLQP, HaagKastler}. This was achieved through the formalism of perturbative algebraic quantum field theory (pAQFT) \cite{HollandsWald2001, HW02, BDF09, BrunettiFredenhagen00, Fredenhagen2015, KasiaBook, Dutsch:2019aa}, where interacting equilibrium states were constructed order by order in formal perturbation theory \cite{FredenhagenLindnerKMS_2014, DragoHackPinamonti, Galanda}. More recently, techniques from pAQFT have been successfully combined with methods from constructive quantum field theory to construct interacting equilibrium states beyond formal power series, establishing the existence of the theory in \cite{BPR}. Furthermore, in different setups, construction of the theory was obtained for other different models in \cite{BFMThirring, GlimmJaffe, GawedzkiKupiainen, FeldmanMassiveGrossNeveu}.

\subsection{Set up of the problem}
In this paper we consider a system of non relativistic quantum particles satisfying the Bosonic statistics and propagating on $\mathbb{R}^3$. These particles are weakly interacting with a two body force described by a suitable potential, and the model incorporates the presence of a background Bose-Einstein condensate.
We approach this problem using methods of quantum field theory to describe the gas of quantum particles, constructing equilibrium states, following the aforementioned literature, for the interacting system at positive temperature both in the presence and absence of condensation.
The quantum field theory we are considering occupies the whole space, without confinements, and the self-interaction is localised to a compact region.
At least when the interaction among different particles takes place on compact space regions conditions for the existence of the KMS (Gibbs) correlation functions are obtained. Finally, limits where the interaction occupies the whole space are considered. Therefore, using methods of algebraic quantum field theory, in respect to known literature we analyze the construction of the equilibrium states with or without the background taking the large space limits in another way.



The set up is the following. Smeared fields of the theory are denoted by $\Phi(f)$, where $f$ is a generic element of the one particle Hilbert space $L^{2}(\Sigma)$ with $\Sigma=\mathbb{R}^3$. Then, the associated algebra of observables to this field, its adjoint $\Phi^*(\overline{f})$ and its Wick square $|\Phi|^2(h)$, $h\in C^\infty_0(\Sigma;\mathbb{R})$ can be given at fixed time and it is denoted by $\mathcal{F}$. 

The model we will be focusing on is specified by the interacting time evolution that, in a representation of $\mathcal{F}$, is generated by the Hamiltonian 
\[
H=H_0+V
\]
where $H_0$ is the Hamiltonian of the free theory $e^{\mathrm{i}t H_0 }\Phi(f) e^{-\mathrm{i}t H_0} = \Phi(e^{\mathrm{i} t K})$ with associated operator on the one particle Hilbert space
\[
K=-\frac{\Delta}{2m}-\mu
\]
for $m > 0$ the mass, $\mu$ the chemical potential and $\Delta$ the Laplace operator on $\Sigma$. $H_0$ cannot be given within $\mathcal{F}$ (due to infrared divergences), nevertheless, considering the associated time evolution $\tau_t$ ($*$-automorphism on $\mathcal{F}$), equilibrium states at finite inverse temperature $\beta$ are characterized by the Kubo Martin Schwinger (KMS) condition \cite{KMS} and denoted as $\omega^\beta$. 

Instead, the interaction Hamiltonian can be given within $\mathcal{F}$ as
\[
V= \int g |\Phi|^2 v* |\Phi|^2 
\]
here $v \in \mathcal{C}^{\infty}_0(\Sigma, \mathbb{R})$ positive and symmetric is the potential of the two-body force and $g$ is a smooth compactly supported function on $\Sigma=\mathbb{R}^{3}$ (which constrains the support of the interaction to a compact space region). 

Similarly to $H_0$, also $H$ cannot be constructed within the algebra of observables $\mathcal{F}$. In spite of this fact, the cocycle $U(t)$ which intertwines the free and interacting time evolution can be given within $\mathcal{F}$ as the formal time ordered exponential of $V$, which is formally $e^{\mathrm{i}tH}e^{-\mathrm{i}tH_0}$. More explicitly, denoting the interacting time evolution with $\tau^V_t$, we have:
\[
\tau^V_t(A) = U(t)\tau_t(A) U(t)^*.
\]
Even if at this stage the convergence of the power series defining $U(t)$ is not under control, 
we may use ideas similar to those given in \cite{Araki, FredenhagenLindnerKMS_2014} to construct the equilibrium state relative to the interacting evolution. By proceeding in this way, the relative partition function of the interacting equilibrium is obtained by means of an analytic continuation of $\omega^\beta(U(t))$ as $Z= \omega^{\beta}(U(\mathrm{i}\beta))$ (formally $\Tr(e^{-\beta H})/\Tr (e^{-\beta H_0})$). This is given again as a series in $\mathcal{F}$ in powers of the interaction Hamiltonian $V$ and the analytic continuation can be taken at all orders in perturbation theory thanks to the KMS property of $\omega^\beta$. The convergence of this power series needs and is carefully discussed in this paper. 

In analyzing this problem, it becomes immediately clear that a new technique based on constructive methods must be developed. Indeed, upon initial inspection, the $N$-th order term in the power series defining the relative partition function is equivalent to a sum over all possible graphs with $N$ vertices. However, in this sum, the corresponding number of diagrams that needs to be estimated grows, at order $N$, as $(2N)!$ (see \cite{Cvitanovic}, or Appendix \ref{ap:counting}). 
Hence, even though the power series defining $\omega^\beta(U(\mathrm{i}\beta))$ has a prefactor $1/N!$, we cannot achieve absolute convergence in this way. 

\subsection{Strategy adopted}
To prove that the relative partition function $Z=\omega^{\beta}(U(\mathrm{i}\beta))$ and the corresponding correlation functions can be given in terms of a convergent series we proceed as follows.
We introduce an auxiliary stochastic Gaussian field $A$, and we decompose the quartic non-local interaction Hamiltonian $V$ as a product of two potentials  quadratic in the field $\Phi$ and linear in $A$
\[
Q_A = \int \di^3 x \, g |\Phi|^2 A.
\]
In the case of a non vanishing background the field decomposes as $\Phi=\phi_0+\Psi$, with $\phi_0$ the classical part describing the condensate and $\Psi$ quantum fluctuations on it. Then, the at most quadratic potential considered is
\[
Q_A = \int \di^3 x \, g (|\Psi|^2 +\phi_0(\Psi+\Psi^*))A.
\]
Furthermore, the covariance of the Gaussian field is chosen to match the non-local interaction potential. 
This trick, referred to as Hubbard–Stratonovich (HS) transformation, is well known in the literature, see e.g. \cite{Rivasseau, RivasseauBook}, and it has already been used in a similar context in \cite{FroehlichKnowlesSchleinSohinger2}. Here we want to push its use further ahead, to analyze the construction of KMS (Gibbs) states for the case of non vanishing background, when a condensate is present.

The first step consists in treating $A$ as external potential. This can be done directly, because the corresponding theory is linear in the quantum field (quadratic Hamiltonian), leading to a partial re-summation of the original series. 
Then, in order to get the correct expression for the relative partition function $Z=\omega^\beta(U(\mathrm{i}\beta))$ (up to an inessential constant), we need to evaluate the external field in a Gaussian state with covariance $v$. We actually prove in Proposition \ref{prop:auxiliary-field} and in Proposition \ref{prop:auxiliary-partition-function}, a priori just as formal power series, that
\[
\omega^{\beta }(U(\mathrm{i}\beta)) = \mathbb{E}(\omega^\beta (U_A(\mathrm{i}\beta))).
\]
Here $Z_A=\omega^\beta (U_A(\mathrm{i}\beta))$ is the relative partition function for the quadratic perturbation $Q_A$ and $\mathbb{E}$ denotes to the evaluation of $A$ in a Gaussian state with suitable covariance.

At this stage, the relative partition function of the auxiliary theory $U_A(\mathrm{i}\beta)$ still depends in a complicated way from $A$. Hence, the naive evaluation on the Gaussian state of the power series corresponding to $\omega^\beta(U_A(i\beta))$ still leads to a number of contribution that is not yet summable. In order to overcome this last problem we decompose $U_A(\mathrm{i}\beta)$ and the relevant object in the theory in sum of powers of exponential of $A$. 
In this way, the combinatoric arising when the evaluation in the auxiliary Gaussian state is realized, is simpler and directly under control. 
In this process it plays an important role the observation that $\log(\omega^\beta(U_A(\mathrm{i}\beta)))$ is related to the relative entropy of the equilibrium state $\omega^{\beta A}$ of the time evolution perturbed with $Q_A$ relative to the free equilibrium state $\omega^{\beta}$ (see e.g. \cite{OhyaPetz, ArakiEntropy}). Therefore, as part of this analysis, an explicit expression for the relative entropy is obtained similar to the results in \cite{BruFrePin25}.

\subsection{Main result}

The considered system of interacting particles is described by the set of physical parameters $\phi_0, \beta, v, g$ where $\phi_0$ is the background part of the field $\Phi$ corresponding to the condensate. $\beta$ is the inverse temperature of the considered thermal equilibrium state, $v$ is the potential of the repulsive two-body force among particles and $g$ a cutoff function which localises the interaction on a compact region. The main result of this paper is Theorem \ref{th:convergence}, where we prove that, for the above set of physical parameters satisfying certain bounds, the relative partition function and the correlation functions converge. More precisely, convergence is implied by the following (non optimal) constraint 
    \[
    1+\exp\left(- \frac{\phi_0^4}{2} \beta v(0) \|g\|_1^2\right) > \exp\left(\frac{ \sqrt{\beta v(0) \|g\|_1^2}}{{\beta^{\frac{3}{2}}}}  \tilde{{C}}\right)
    \]
where $\tilde{C}$ is a given numerical constant. We see in particular that when $\phi_0, v, g$ are kept fixed there is a range of inverse temperature $\beta \in (\beta_0,\beta_1)$ where existence of the partition function of the system is implied by Theorem \ref{th:convergence}. We furthermore notice that when $\phi_0=0$, namely when no background is considered, this range increases.

It is furthermore possible to observe that limits where some of this parameter diverge keeping convergence of the partition function can be taken. 
In particular, denoting $\beta v(0) \|g\|_1^2$ by $\gamma$, in the limit where $\|g\|_1\to \infty $ and $ \beta v(0) \to 0$ and
keeping $\gamma$ constant we have that we get convergence of the partition function and of the correlation function if 
$\beta \in (\beta_0,+\infty)$ where here 
\[
\beta_0= \left(\frac{\gamma}{\tilde{C}} \log(1+e^{-\gamma \phi_0^4/2}) \right)^{\frac{3}{2}}
\]
Although these bounds are not optimal, this behavior of $\beta_0$ as a function of $\phi_0^2$, the condensate density, reminds  the form of the critical temperature as a function of the condensate density.


\subsection{Structure of the paper}

In the next Section we present the algebra of the field theory we are considering.
We remind how to enlarge the observable algebra to include normal ordered Wick squares, we recall the form of the free equilibrium state and how to obtain expressions for the interacting equilibrium state and for the corresponding relative partition function.

In Section \ref{se:3} we present the auxiliary theory with the external gaussian field $A$ and how to connect it to the problem of the construction of equilibrium state of the interacting theory. We furthermore recognise that the logarithm of the partition function is controlled by the expectation value of the interaction Hamiltonian and the relative entropy of the equilibrium state of the auxiliary theory relative to the free equilibrium state.
In section \ref{se:4} we construct the propagators of the auxiliary theory and we give expressions for the relative partition function. 
Section \ref{se:5} contains the discussion of the path integral representation of the propagators of the auxiliary theory.
In Section \ref{se:6} we show how to evaluate the external gaussian field $A$ to recover the relative partition function and the correlation function of the theory, we furthermore present the theorems ensuring convergence of this procedure.  

\section{Algebraic relations}

Following \cite{ArakiWoods}, we start considering the CCR algebra of free non-relativistic quantum fields.
The fields we shall consider are smeared with test function defined on the three dimensional space which we denote by $\Sigma = \mathbb{R}^{3}$. The definition is the following.

\begin{definition}\label{def:free-abstract-algebra} The {\bf algebra of free non-relativistic fields} $\mathcal{A}$ is the $*$-algebra generated by the identity $\mathbb{1}$ and the linear complex bosonic scalar fields $\Phi(h)$, $\Phi^*(f)$ smeared with 
Schwartz functions $f, h\in \mathcal{S}(\Sigma;\mathbb{C})=\mathcal{S}$,
%
that satisfy:

\begin{itemize}
	\item {\bf Hermicity:} The $*$-involution acts for any 
$f\in \mathcal{S}$ as: 
\begin{equation}\label{eq:operation-*}
(\Phi(f))^{*} =
\Phi^{*}(\overline{f}), \qquad 
(\Phi^{*}(f))^{*} =
\Phi(\overline{f}).
\end{equation}
	\item {\bf Canonical Commutation Relations:} For any 
$f, h\in \mathcal{S}$:
\begin{equation}\label{eq:commutation-relations}
    [ \Phi(h),\Phi^*(f)] = \langle h,f \rangle = \int_{\mathbb{R}^3} \overline{h}f\;  \di^3 x,
    \qquad
    [\Phi(f), \Phi(h)] = 0
    \qquad
    [\Phi^*(f), \Phi^*(h)] = 0.
\end{equation}
\end{itemize}

\end{definition}
Notice that since Schwartz function $\mathcal{S}(\mathbb{R}^3;\mathbb{C})$ are dense in the {\bf one particle Hilbert space} $\mathcal{H}=L^2(\mathbb{R}^3,\mathbb{C})$, the above definition of the free algebra can be completed to include generators smeared with elements of $\mathcal{H}$. Furthermore, in the following, we shall often use the following formal notation
\begin{equation}\label{eq:linear-fields-integrals}
\Phi(h) = \int \overline{h}(x) \Phi(x) \di^3 x,
\qquad 
\Phi^*(f)  
=
\int \Phi^*(x)f(x) \di^3 x,
\end{equation}
to indicate the generators, which highlights that the labels of the abstract fields are elements of the one-particle Hilbert space $\mathcal{H}$.
 
%
%
%

%

The first relevant example of state on $\mathcal{A}$, following again \cite{ArakiWoods}, is the {\bf vacuum state}, denoted by $\omega^\infty$. This is a quasi-free state on $\mathcal{A}$, which is constructed out of the following two-point functions
\[
\omega^\infty(\Phi(h)
\Phi^{*}(f)) = \langle h,f \rangle
,
\qquad
\omega^\infty(
\Phi^{*}(f)\Phi(h)) = 0
,
\qquad
\omega^\infty(\Phi(f)
\Phi(h)) =  0, 
\qquad
\omega^\infty(\Phi^{*}(f)
\Phi^{*}(h)) =  0
\]
from which one can see that the associated GNS representation is on a Fock space constructed over the one-particle Hilbert space $\mathcal{H}$ of before. Moreover, on this Fock space, the field is represented just by an annihilation operator while its adjoint by a creation operator.

\subsection{Time evolution and equilibrium state}

We specify the time evolution on $\mathcal{A}$, necessary to determine the correlation functions of the free equilibrium states at positive temperature. 
To this end, consider a self-adjoint operator ${K}$ on the one particle Hilbert space $\mathcal{H}={L^2}(\Sigma;\mathbb{C})$, with $\Sigma = \mathbb{R}^3$, that generates the free time evolution. More precisely, the operator we consider has the following form  
\begin{equation}\label{eq:def-K}
K \coloneqq \mathcal{K}_{\mu} = \mathcal{K}-\mu = -\frac{\Delta}{2m} - \mu
\end{equation}
where $m>0$ is the mass of the particle, $\mu$ the 
chemical potential and $\Delta$ the Laplace operator on $\Sigma$ with the convention that $-\Delta$ is positive.
We consider the case where the spectrum of ${K}$ is contained in $[\epsilon,+\infty)$ for $\epsilon >0$. This means that $\mu<0$ (eventually, we shall take the limit where $\epsilon \to 0$ or similar limits). 
By Stone theorem, $K$ generates a strongly continuous one parameter group of unitary transformations $e^{i t {K}}$, in the parameter $t \in \mathbb{R}$, which is used to describe the free time evolution
\[
f(t,x) \coloneqq f_t(x) \coloneqq e^{i t {K}} f(x) \qquad \forall f \in \mathcal{H}. 
\]
Raising this definition at the algebraic level gives a corresponding one parameter group of automorphisms $\tau_t$ of $\mathcal{A}$, whose action on the generators is
\begin{equation}\label{eq:time-evolution}
\tau_t(\Phi^*(f)) = \Phi^*(e^{\mathrm{i}tK}f),
\qquad
\tau_t(\Phi(f)) = \Phi(e^{\mathrm{i}tK}f).
\end{equation}
The operator $K$ and the corresponding automorphisms $\tau_t$ play the role of the free Hamiltonian of the system and that of the automorphism implementing the free time evolution.
Both the commutation relations and the expectation values in the vacuum state are left invariant under the action of $\tau_t$. 
Given $\tau_t$, the corresponding {\bf equilibrium states} $\omega^\beta$ at inverse temperature $\beta$ are the quasi-free states characterized by the following two-point functions
\begin{equation}\label{eq:omegabeta}
\begin{aligned}    \omega^\beta(\Phi(f)
\Phi^{*}(h)) &= \langle f,B_- h \rangle ,
\\
\omega^\beta(
\Phi^{*}(h)\Phi(f)) &= \langle f,B_+h \rangle
\\
\omega^\beta(\Phi(f)
\Phi(h)) &=  0, 
\\
\omega^\beta(\Phi^{*}(f)
\Phi^{*}(h)) &=  0
\end{aligned}    
\end{equation}
where the Bose factors, given in terms of $K$, were denoted by
\[
B_{\pm} \coloneqq \mp \frac{1}{1-e^{\pm\beta K}}.
\]
Notice that, for $\mu<0$ the spectrum of $K$ is positive and supported away from $0$. Hence, $B_\pm$ are bounded operators on $\mathcal{H}=L^2(\mathbb{R}^3;\mathbb{C})$.  

%
%

\subsection{Normal ordering, Wick squares and enlarged Algebra}
In order to describe the interaction Lagrangian of the two-body force among the bosonic fields at an algebraic level we need to enlarge the algebra we are working with. We shall realise this by adding to the set of generators the square, normal ordered, fields.
We recall that the normal ordering of quadratic fields is obtained subtracting the expectation value on the vacuum. In particular, the following formal formula is employed
\begin{equation}\label{eq:formal-product}
\begin{aligned}
:\!\Phi^*\Phi\!:\!(f) 
&\coloneqq
\int 
\left(
\Phi^*(x)\Phi(y) - \omega^\infty(\Phi^*(x)\Phi(y)) \right)f(x,y) \di^3 x \di^3 y
\\
:\!\Phi\Phi^*\!:\!(f)
&\coloneqq
\int 
\left(
\Phi(y)\Phi^*(x) - \omega^\infty(\Phi(y)\Phi^*(x)) \right)f(y,x) \di^3 x \di^3 y
\\
&\coloneqq
\int 
\left(
\Phi^*(x)\Phi(y) - \omega^\infty(\Phi^*(x)\Phi(y)) \right)f(y,x) \di^3 x \di^3 y\; 
\end{aligned}
\end{equation}
where the object used to smear these elements is $f\in \mathcal{S}'(\mathbb{R}^3\times\mathbb{R}^3)$ with a constraint on its singularity. Namely, we demand that $f=f_s \circ e^{\mathrm{i}t_1K}\otimes e^{\mathrm{i}t_2 K}$ for some $t_1,t_2 \in \mathbb{R}$ with $f_s\in \mathcal{E}'(\mathbb{R}^3\times\mathbb{R}^3)$ and the {\bf wavefront set} \cite{HormanderI} of $f$ is such that $\WF(f)\subset  Y := \{(x,y;k_x,k_y)\subset T^*\Sigma^2 \, |\, x=y, k_x=-k_y\neq 0\} =\WF(\delta)$ where $\delta$ is the Dirac delta distribution supported on the diagonal of $\mathbb{R}^3\times \mathbb{R}^3$. We denote this {\bf set of labels} by $E$.

We observe that $:\!\Phi^*\Phi\!:\!(f)=:\!\Phi\Phi^*\!:\!(sf)$, with $sf(x,y) = f(y,x)$. Furthermore, if $\WF(f)=\emptyset$, namely if
$f$ is a smooth function,  $:\!\Phi^*\Phi\!:\!(f)$ can be obtained as sum of products of $\Phi(f_1)$, $\Phi(f_2)$ and the identity, up to a completion which is the same used to obtain $C^{\infty}_0(\mathbb{R}^3\times\mathbb{R}^3)$ as sum of elements of  $C^{\infty}_0(\mathbb{R}^3)\times C^{\infty}_0(\mathbb{R}^3)$. Finally, notice that the Wick square is contained within this class of objects choosing 
\[
:\!\!|\Phi|\!\!:^2(h) = :\!\Phi\Phi^*\!:\!(h\delta) = :\!\Phi^*\Phi\!:\!(h\delta) 
\]
where $h\in{C}_0^\infty(\mathbb{R}^3)$.
%
%
%
%
%

The observables obtained in this way are only formal. In particular, the coinciding point limits acquire direct  meaning only if they are taken after evaluation on suitable states. 
In spite of this problem, the commutation relations of normal ordered local fields with other elements of the algebra can be analyzed. See \cite{HollandsWald2001,HW02} for a discussion in the case of relativistic fields. 
The obtained local fields can thus be used together 
with the linear fields as generators of an extended algebra of fields, presented in the following definition.  
\begin{definition} \label{def:extended-algebra}
The {\bf abstract extended algebra} $\mathcal{F}$ is the $*$-algebra generated by the identity $\mathbb{1}$, the linear complex bosonic scalar fields $\Phi(h)$, $\Phi^*(g)$ smeared with elements of the one-particle Hilbert space $g, h\in L^2(\mathbb{R}^3;\mathbb{C})$
and by the normal ordered products $:\!\Phi^*\Phi\!:^2(f)$,  $:\!\Phi\Phi^*\!:^2(f)$ which are smeared with $f\in E:= \{f=f_s \circ e^{\mathrm{i}t_1K}\otimes  e^{\mathrm{i}t_2K} | f_s \in \mathcal{E}'(\mathbb{R}^3\times \mathbb{R}^3), t_1,t_2 \in\mathbb{R},  \WF(f)\subset \WF(f_s)\subset Y \}$, with $Y=\WF(\delta)$.
The product among these objects and the $*$-operation is obtained from the formal definition given in equation \eqref{eq:formal-product} and the corresponding product in $\mathcal{A}$ as given in Definition \ref{def:free-abstract-algebra}.

%
%
%
\end{definition}
We observe that $\mathcal{A}\subset \mathcal{F}$. 
Furthermore, the Wick square $:\!\!|\Phi|\!\!:^2(f)$ with $f\in C^\infty_0(\mathbb{R}^3)$ is contained in $\mathcal{F}$. 
The product in $\mathcal{F}$ is non commutative and the non commutative part of the product is read from the commutation relations enumerated in Definition \ref{def:free-abstract-algebra} together with the formal definition of normal ordered products given in equation \eqref{eq:formal-product}. In particular, we have:
\begin{equation} \label{eq:extended-commutation}
    \begin{aligned}
    [:\!\!|\Phi|\!\!:^2(f),:\!\!|\Phi|\!\!:^2(h)] &= 0, \qquad \qquad \, f,h \in C^\infty_0(\mathbb{R}^3)\\
    [:\!\!|\Phi|\!\!:^2(f),\Phi(h)] &= \Phi(\overline{f}h) \qquad \, \, f \in C^\infty_0(\mathbb{R}^3), h\in \mathcal{S}
    \\
        [:\!\!|\Phi|\!\!:^2(f),\Phi^*(h)] &= \Phi^*({f}h) \qquad f \in C^\infty_0(\mathbb{R}^3), h\in \mathcal{S}
    \end{aligned}
\end{equation}
and the $*$-operation acts on $:\!\!|\Phi|\!\!:^2(f)$ in the following way
\begin{equation}\label{eq:extended-*}
    (:\!\!|\Phi|\!\!:^2(f))^* =
    :\!\!|\Phi|\!\!:^2(\overline{f}).
\end{equation}
Similarly, the $*$-automorphisms describing the time evolution originally given in $\mathcal{A}$ can be extended to $\mathcal{F}$, its action is given only implicitly through the formal definition of the added generators described by the normal ordered fields.

We shall present below a concrete realization of this algebra.

\color{black}

\subsubsection{Extended algebra, normal ordering and the state}

The abstract extended algebra introduced in Definition \ref{def:extended-algebra} is generated by formal objects, like the normal ordered Wick square, which acquires meaning only when the evaluation on a suitable state is considered. With ideas developed in the context of pAQFT \cite{BDF09, Fredenhagen2015, KasiaBook}, we can find a faithful representation of this algebra in terms of concrete and finite objects. We actually represent the extended algebra $\mathcal{F}$
with a product adapted to the state we are considering.
\begin{definition}
The {\bf extended thermal representation} is denoted by $\mathcal{F}_\beta$ and it is the $*$-algebra formed by the smaller set of functionals over off-shell field configurations $\Phi \in C^{\infty}(\mathbb{R}^3)\cap \mathcal{S}'(\mathbb{R}^3)$ and $\Phi^*\in C^{\infty}(\mathbb{R}^3)\cap \mathcal{S}'(\mathbb{R}^3)$, which is generated by the identity and the following 
\[
\Phi(f)= \int   \overline{f} \Phi(x)\di^3 x, \qquad \Phi^*(f)= \int   {f} \Phi^*(x)\di^3 x, \qquad 
\Phi^*\Phi(h) = \int \Phi^*(x) \Phi (y) h(x,y) \di^3 x \di^3 y 
\]
where $f\in \mathcal{S}(\mathbb{R}^3)$, $h\in E$ as given in Definition \ref{def:extended-algebra}. The product is defined as
\begin{equation}\label{eq:star-product}
A\star_\beta B = Me^{D_{12}}
A\otimes B, \qquad A,B\in\mathcal{F}_\beta
\end{equation}
where $M(A\otimes B)= AB$, pointwise product, and
\begin{equation}\label{eq:Dij}
D_{ij} = \langle   \omega^\beta_{2,-},\frac{\delta}{\delta \Phi_i} \otimes \frac{\delta}{\delta \Phi_j^*{}}
\rangle
+
\langle  \omega^\beta_{2,+}, \frac{\delta}{\delta \Phi_i^*} \otimes \frac{\delta}{\delta \Phi_j{}} \rangle
\end{equation}
with functional derivatives $\delta/\delta \Phi_i$ acting on the $i$-th factor of the tensor product.
Furthermore, $\omega_{2,\pm}^{\beta}$ are the distributions (integral kernels) associated to the operators $B_{\pm}$. The $*$-operation in $\mathcal{F}_\beta$ corresponds to the complex conjugation, and it maps $\Phi(f)$ to $
\Phi^*(\overline{f})$. 
\end{definition}
As we shall prove in the next proposition, in order to ensure that $\mathcal{F}_\beta$ is a well defined algebra, it is useful to expand the $\star_\beta$ product among $n$ factors in terms of a sum over simple (unoriented) graphs among $n$ vertices. 
We recall that a graph $G$ is formed by a set of vertices $N(G)$ (points) and by a set of edges $E(G)$ (lines connecting the vertices). Each edge $e=\{p,q\}\subset E(G)$ connects two endpoints $p,q\in V(G)$. 
Furthermore, a graph is simple if it has no loops nor multiple lines joining the same couple of vertices. 
We say that a graph contains a loop if at least one of its edges starts and ends in the same vertex. 
We furthermore denote the general set of these graphs with $\mathcal{G}_n$, notice that it contains also the graph with no edges. 
The subset $\mathcal{G}_n^c \subset \mathcal{G}_n$, is the subset of simple connected graphs. Recalling equations \eqref{eq:star-product} and the definition of $D_{ij}$ given in equation \eqref{eq:Dij}, we have that
\[
A_1\star_\beta \dots \star_\beta A_n
=Me^{\sum_{i<j} D_{ij}}
A_1\otimes \dots \otimes  A_n , \qquad 
A_i\in \mathcal{F}_\beta
\]
where the $i$-th and $j$-th functional derivative present in $D_{ij}$ act respectively on $A_i$ and $A_j$ and $M(A_1\otimes\dots \otimes A_n ) = A_1\cdots A_n$.
We can now obtain an expansion of the previous formula as a sum over graphs.
\begin{equation}\label{eq:graph-decomposition}
\begin{aligned}
A_1\star_\beta \dots \star_\beta A_n
&= 
 M\left(\prod_{i<j}(e^{D_{ij}}-1+1)\right)
A_1\otimes \dots \otimes  A_n
\\
&= 
M \sum_{G\in \mathcal{G}_n}  \left(\prod_{\{i,j\}\in E(G)\atop i<j}(e^{D_{ij}}-1)\right)
A_1\otimes \dots \otimes  A_n , 
\end{aligned}
\end{equation}
where in the second equality we expanded the product of the sums of  $(e^{D_{ij}}-1)$ with $1$ and we have observed that it results in a sum over all possible graphs joining the vertices $\{1,\dots, n\}$. Notice that, evaluating on a state, the vertices of each graph are in correspondence with entries of the $n$-point function (ordered in the same way) and the edges to $e^{D_{ij}}-1$ (ordered in ascending order, namely $i<j$).

The contribution to every $G \in \mathcal{G}_n$, given in the expansion in equation \eqref{eq:graph-decomposition}, can be further decomposed into a sum over the connected components of $G$. Rearranging the sum we thus obtain the following decomposition
\begin{align}\label{eq:decomposition-connected}
A_1 \star_\beta \dots \star_\beta A_n 
&= 
\sum_{\pi\in \mathcal{P}_n}\prod_{I\in \pi}
\left(
 M\sum_{G\in \mathcal{G}^c(I)}  
\left( \prod_{\{i, j\}\in E(G) \atop i<j}(e^{D_{ij}}-1)\right)
 A_{I_1}\otimes \dots \otimes  A_{I_{|I|}} \right), 
\end{align}
where $\mathcal{P}_n$ is the set of partitions of $\{1,\dots, n\}$ and
$\mathcal{G}^c(I)$, with $I\in \pi\in \mathcal{P}_n$, is the set of simple connected graphs among the vertices contained in $I$ and as above the $i$-th and $j$-th functional derivative present in $D_{ij}$ act respectively on $A_i$ and $A_j$. We can now present and prove the following proposition ensuring the well posedness of the construction.

\begin{proposition}
The product in $(\mathcal{F}_\beta,\star_\beta)$ is well defined.   
\end{proposition}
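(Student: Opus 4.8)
The plan is to use the graph expansions \eqref{eq:graph-decomposition} and \eqref{eq:decomposition-connected} to reduce the well-posedness of $\star_\beta$ to a statement about the individual edge contractions, and then to control those contractions microlocally through the wavefront set constraint built into the label set $E$. First I would observe that every element of $\mathcal{F}_\beta$ is a polynomial of finite degree in $\Phi$ and $\Phi^*$, the generators being linear or bilinear, so that each $e^{D_{ij}}$ in \eqref{eq:star-product} acts on finitely many factors and truncates: the functional derivatives in $D_{ij}$ lower the field degree, hence $e^{D_{ij}}-1$ contributes only finitely many nonzero powers of $D_{ij}$ between any pair of vertices. Consequently, for fixed $A_1,\dots,A_n$ the sum over $\mathcal{G}_n$ in \eqref{eq:graph-decomposition} is finite, and it suffices to prove that the contribution attached to each connected graph $G\in\mathcal{G}^c(I)$ is a well-defined functional lying again in $\mathcal{F}_\beta$.

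Second, I would establish the microlocal input. Since in Fourier space $B_\pm$ are smooth bounded multipliers with $B_-(k)\to 1$ and $B_+(k)\to 0$ exponentially as $|k|\to\infty$ — a consequence of $\mu<0$, so that $K$ is bounded away from $0$ — the kernels split as $\omega^\beta_{2,-}=\delta+g_-$ and $\omega^\beta_{2,+}=g_+$ with $g_\pm$ smooth and rapidly decaying. Hence $\WF(\omega^\beta_{2,-})=\WF(\delta)=Y$ and $\WF(\omega^\beta_{2,+})=\emptyset$. Each application of $D_{ij}$ pairs one or two such kernels with functional derivatives of the generators, producing compositions and products of the label distributions $h\in E$ with the $\omega^\beta_{2,\pm}$. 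Because every $h$ and every $\omega^\beta_{2,-}$ carry wavefront set inside $Y=\{(x,x;k,-k):k\neq 0\}$, I would verify Hörmander's criterion for the multiplication of distributions together with the theorem on composition of kernels: the relevant singular directions never sum to zero thanks to the sign constraint $k_x=-k_y$ in $Y$, while the smoothing kernel $\omega^\beta_{2,+}$ only improves the situation. This yields that each edge contraction is a well-defined distribution.

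Third, I would check closure and global convergence. One verifies that the contracted output again has wavefront set contained in $Y$ and retains the structure $f_s\circ e^{\mathrm{i}t_1K}\otimes e^{\mathrm{i}t_2K}$ with $f_s\in\mathcal{E}'$ characterizing $E$, so that the connected-graph contributions are genuine generators, or finite sums thereof, of $\mathcal{F}_\beta$; the decomposition \eqref{eq:decomposition-connected} then assembles these into an element of $\mathcal{F}_\beta$, confirming closure of the product. The non-compactness of $\Sigma=\mathbb{R}^3$ is handled by the rapid decay of the smooth parts $g_\pm$ — note that a full contraction of a bilinear vertex necessarily involves one $\omega^\beta_{2,+}$, which is Schwartz and supplies decay in the connecting direction — together with the localisation encoded in the compactly supported $f_s$ transported by the unitaries $e^{\mathrm{i}tK}$, which preserve temperedness. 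This guarantees that the spatial integrations defining the contractions converge absolutely.

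The step I expect to be the main obstacle is the microlocal bookkeeping of the second paragraph: precisely because both the labels and $\omega^\beta_{2,-}$ are singular on the diagonal with conormal directions, one must carefully track how these wavefront sets transform under the successive compositions induced by a connected graph and confirm that no configuration of contracted legs produces an ill-defined product, the delicate case being contractions built from the $\delta$-part of $\omega^\beta_{2,-}$. Everything else — the truncation, the finiteness of the graph sum, and the bookkeeping of which kernel pairs with which derivative — is routine once the wavefront estimates are in place.
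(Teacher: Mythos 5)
Your overall strategy coincides with the paper's: expand the product over simple connected graphs via \eqref{eq:decomposition-connected}, note that the at-most-quadratic generators truncate the expansion of $e^{D_{ij}}-1$, and then control each contraction through the singular structure $\WF(\omega^\beta_{2,+})=\emptyset$, $\WF(\omega^\beta_{2,-})=Y=\WF(\delta)$. The gap is in the step you yourself flag as the main obstacle, and the mechanism you propose there does not work. You argue that the dangerous contractions are admissible because ``the relevant singular directions never sum to zero thanks to the sign constraint $k_x=-k_y$ in $Y$.'' But $Y$ is invariant under $(k_x,k_y)\mapsto(-k_x,-k_y)$, so $Y+Y$ does contain the zero section: for two kernels both with wavefront set $Y$ (two copies of the $\delta$-part of $\omega^\beta_{2,-}$), H\"ormander's product criterion fails outright, exactly as for $\delta(x-y)^2$. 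The sign constraint in $Y$ is what makes \emph{compositions} $\omega^\beta_{2,-}\circ f\,\omega^\beta_{2,-}$ well defined (Theorem 8.2.14 in \cite{HormanderI}, as the paper uses for the ``big lines''), but it does not rescue pointwise products of two singular kernels at the same pair of points, nor a closed cycle of $\omega^\beta_{2,-}$'s, whose trace would diverge like $\delta(0)$.

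What actually saves the construction is combinatorial, not microlocal cancellation: each generator contains at most one $\Phi$ and one $\Phi^*$, and in $D_{ij}$ the kernel $\omega^\beta_{2,-}$ is tied to $\delta/\delta\Phi_i\otimes\delta/\delta\Phi^*_j$ while $\omega^\beta_{2,+}$ is tied to $\delta/\delta\Phi^*_i\otimes\delta/\delta\Phi_j$ with the fixed ordering $i<j$. Hence the only surviving term of $D_{ij}^2$ is the cross term $\omega^\beta_{2,-}\,\omega^\beta_{2,+}$, and in any closed loop the vertex of maximal index must absorb one $\delta/\delta\Phi$ through an $\omega^\beta_{2,+}$ edge; in both dangerous configurations a smooth kernel is forced to appear, which is the paper's resolution (its classification into big lines and big loops). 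Note also that your auxiliary claim that ``a full contraction of a bilinear vertex necessarily involves one $\omega^\beta_{2,+}$'' is false for interior vertices of a chain, where both edges can carry $\omega^\beta_{2,-}$; it is only the extremal vertex of a cycle for which it holds, and that is precisely the observation your argument is missing.
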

\begin{proof}
To prove that the product in $\mathcal{F}_\beta$  is well defined, we consider a generic element of $\mathcal{F}_\beta$ obtained multiplying various generators of the form
\[
   F=A_1 \star_\beta\dots \star_\beta A_n
\]
with
\[
   A_i\in\{\Phi(f),\Phi^*(\overline{f}),|\Phi|^2(f)| f\in C^{\infty}_0(\Sigma, \mathbb{C})\}.
\]
The extension to the case where $\Phi(h)$ and $\Phi^*(h)$ are smeared with Schwartz functions
and to the case where $\Phi^*\Phi$ is smeared with an element of $E$ ,given in Definition \ref{def:extended-algebra}, can be done by completion arguments.
%
%
%
Recalling equation \eqref{eq:decomposition-connected} we can decompose the $\star_{\beta}$-product present in $F$ as a sum over simple connected graphs whose edges are in correspondence to $e^{D_{ij}}-1$ and whose vertices to $A_i$.
%
%
%
	The factors $A_i$ considered here are at most quadratic in the fields, hence the diagrams in $\mathcal{G}^c(I)$ which have non vanishing contributions in $F$ have at most two edges starting or ending in a single vertex. For this reason, there are at most three types of simple connected diagrams which contributes to $F$:
    The trivial diagrams formed by a single vertex and no edges;
    Those which have $|I|-1$ edges in which there are two vertices reached by a single edge (we call them big lines) and the other $|I|-2$ vertices reached by two edges.
    Finally, the third set of diagrams is formed by those graphs which have $|I|$ edges corresponding to a big loop.
    
    For the same reason, in the expansion of $\prod_{i<j}(e^{D_{ij}}-1)$ in powers of $D_{ij}$, for each diagram with more than two vertices only the first order has non vanishing contribution; while for the diagram with one edge and two vertices we have only two contributions $D_{ij}$ and $D_{ij}^2$. 
   
    Consider now the distributions $\omega_{2,\pm}^\beta$ present in $D_{ij}$. We observe that the integral kernel of $\omega_{2,+}^\beta$ is actually a smooth function while $\omega_{2,-}^\beta$ is characterized by a singularity similar to the one present in the two-point function of the vacuum. Actually, $\omega^\beta_{2,-}-\omega_{2,-}^\infty$ is also a smooth function and $\omega_{2,-}^\infty(x,y) = \delta(x-y)$. Hence, in terms of their wave front set $\WF(\omega^{\beta}_{2,+})=\emptyset$ and $\WF(\omega^{\beta}_{2,-})=Y=\{(x,k_x;y,k_y)\in T^*\Sigma^2| x=y, k_x=-k_y\neq 0\}$.
    
    With this structure, and thanks to the compact support of $A_i$ for all $i$, we observe that the operations in terms of $\omega_{2,\pm}^\beta$ associated to big lines with more than two vertices is actually a composition of distributions with integral kernel $\omega_{2,\pm}^\beta$ which is then applied to the test functions associated to the end points of the big lines. These operations are always well defined thanks to the singular structure discussed above. More precisely, in each single vertex $i$ reached by two edges we are actually considering a composition of distributions of the form  
    \[
    \sum_{\{\pm,\pm\}} \omega^\beta_{2,\pm} \circ f \omega^\beta_{2,\pm},
%
    \]
    for some compactly supported smooth function $f$ which is the smearing function of $|\Phi|^2(f)=A_i$.
    To prove that this is a well defined distribution we use Theorem 8.2.14 in \cite{HormanderI}.
    The very same Theorem implies that $\omega^\beta_{2,+}\circ f\omega^\beta_{2,\pm}$ and $\omega^\beta_{2,\pm}\circ f\omega^\beta_{2,+}$ are distribution with smooth integral kernel and 
    that
    \[\WF(\omega^\beta_{2,-}\circ f \omega^\beta_{2,-}) \subset \WF(\omega^\beta_{2,-}).
    \]
Hence with the same reasoning we have that  multiple compositions are well defined too.

    Similarly, in the expansion in terms of $\omega^\beta_{2,\pm}$, associated to big loops, we observe that at least one edge is associated to $\omega^{\beta}_{2,+}$ which is smooth. Hence the  operation in terms of distributions corresponding to big loops is always well defined.
    It correspond to the distributions discussed for the big lines tested on $\omega^\beta_{2,+}$ after localisation with a smooth function on suitable compact spacetime region.

    The term corresponding to  the diagram with two vertices and one edge which is proportional to  one single $D_{ij}$ can be treated as the big lines while the contribution which is proportional to $D_{ij}^2$ can be treated as a big loop. The factors corresponding to trivial diagrams are not joined with any distributions hence there are no distribution to consider.
    This concludes the proof.
\end{proof}

The map which realises the representation of $\mathcal{F}$ to $\mathcal{F}_\beta$ is the inverse of
\begin{equation}\label{eq:alphabeta}
\alpha_{\beta}^{-1} := e^{-\frac{1}{2}\int \left(\omega^{\beta}_{2,+}(y,x)+\omega^{\beta}_{2,-}(x,y)\right) \di^3 x \di^3 y  \frac{\delta^2}{\delta \varphi(x) \delta{\varphi^*(y)}}}
\end{equation}
where
$\omega^{\beta}_{2,\pm}(x,y)$ are the two-point functions of the free equilibrium state. 

Notice in particular that, while $\alpha_\beta(\Phi) = \Phi$ and $\alpha_\beta(\Phi^*) = \Phi^*$, the normal ordered product of fields satisfies an affine transformation law under the application of $\alpha_\beta$. Actually
\begin{equation}\label{eq:ren-finite-Wick-square}
    \alpha_\beta :\!\!|\Phi|\!\!:^2 = |\Phi|^2 +  \int \di^3p \frac{e^{-\frac{\beta p^2}{2m}+\beta \mu} }{1-e^{-\frac{\beta p^2}{2m}+\beta \mu}} = |\Phi|^{2}+c_\beta
\end{equation}
where, for $\mu\leq 0$, $c_\beta$ is a suitable finite number. We observe that $\alpha_\beta:\mathcal{F}\to(\mathcal{F}_\beta,\star_\beta)$ is a $*$-homomorphisms of algebras that, for every $A, B \in \mathcal{F}$, is also an algebra $*$-isomorphism
\[
\alpha_\beta (AB) = \alpha_\beta A \star_\beta \alpha_\beta B, \qquad \alpha_\beta (A^*) = (\alpha_\beta A)^*.
\]
Furthermore, the free equilibrium state $\omega^\beta$ extended on $\mathcal{F}_\beta$ takes a particularly simple form.
Namely for every $F\in \mathcal{F}$ it holds that 
\begin{equation}\label{eq:eval-state-beta}
\omega^\beta(F) = \left.\alpha_\beta F \right|_{\Phi=0,\Phi^*=0}.
\end{equation}
Namely, it is 
the push forward of $\omega^\beta$ to $(\mathcal{F}_\beta,\star_\beta)$ under $\alpha_\beta$, consisting to the evaluation on the vanishing field configurations.

In conclusion, when working with $\mathcal{F}_\beta$ we are dealing with finite, concrete objects and the free equilibrium state takes a particularly simple representation. The drawback is that the product we are considering depends on the state and the normal ordered fields acquire an extra finite renormalization depending on the temperature and on the chemical potential, see $c_\beta$ in \eqref{eq:ren-finite-Wick-square}.

\color{black}


\subsection{Interacting time evolution}\label{se:interacting-time-evolution}
Consider the equilibrium state $\omega^\beta$, extended on $\mathcal{F}$, introduced above in \eqref{eq:omegabeta}.
The interacting time evolution is obtained perturbing the free time evolution with the interaction  Hamiltonian which is described by the following formally self adjoint element of $\mathcal{F}$
\begin{equation}\label{eq:interaction-hamiltonian}
\begin{aligned}
V &:= \frac{1}{2}\int :\!\Phi(x)\Phi^*(x) \Phi(y)\Phi^*(y)\!: v(x-y) g(x) g(y) \di^3 x\di^3 y
\\
 &= \frac{1}{2}\int :\!\!|\Phi|\!\!:^2(x) :\!\!|\Phi|\!\!:^2(y) v(x-y) g(x) g(y) \di^3 x\di^3 y
 -\int :\!\!|\Phi|\!\!:^2(x)  v(0) g(x) g(x) \di^3 x + c
\end{aligned}
\end{equation}
where $g\in C^\infty_0(\mathbb{R}^3;\mathbb{R}^+)$ is an infrared cutoff function, positive and equal to $1$ on large regions of space and $v \in C^{\infty}_0(\mathbb{R}^3;\mathbb{R}^+)$ is the symmetric potential of the two-body force among the particles forming the field theory we are considering. In this work, we shall also assume that the two-point function constructed with $v$
\[
{v}_2(f,g)  := \int \di^3 x \di^3 y f(x)v(x-y)g(y)
\]
defines a positive product on $\mathcal{D}(\Sigma)={C}^\infty_0(\Sigma, \mathbb{C})$.
Furthermore, in $V$, $c$ is an inessential constant. In the limit where $g\to 1$, the second contribution at the right hand side of the second line of equation \eqref{eq:interaction-hamiltonian}, which appears because of normal ordering, can be reabsorbed in a redefinition of the chemical potential $\mu$. 
The properties required for the two-body force potential $v$, make $V$ formally self adjoint and are going to be crucial to get convergence of the perturbative construction of the equilibrium state. As a consequence, in the GNS representation of $\omega^\beta$, $V$ is described by a self-adjoint operator.\\\\

Therefore, the interaction Hamiltonian of the system we are considering is described by $V$ and
%
we use it to perturb the free dynamics described by the automorphisms $\tau_t$ generated by $K=\mathcal{K}-\mu$ and whose action on linear fields is given in equation \eqref{eq:time-evolution}. Indeed, the dynamics perturbed by the interaction Hamiltonian $V$, is described by the algebraic automorphisms $\tau^V$ and it is such that for any $t  \in \mathbb{R}$ 
\[
\tau^V_t (A) = U(t)\tau_t (A) U(t)^*, \qquad A \in \mathcal{F}
\]
where $U(t)$ is a cocycle which satisfies
\[
U(t+s)= U(t)\tau_t(U(s)).
\]
Its infinitesimal generator is $V$, namely
\[
\frac{d}{\di t}U(t) = \mathrm{i} U(t)\tau_t(V).
\]
$U(t)$ given above is only a formal object which intertwines the automorphisms $\tau$ and $\tau^V$. Nevertheless, we observe that these automorphisms can be implemented in the GNS representation of various state. In particular, if we consider $(\mathcal{H},\pi,\psi)$, the GNS representation of $\omega^\beta$ and if we denote by $H_0$ the self adjoint operator which implements $K$ in $(\mathcal{H},\pi,\psi)$,
we have that 
\[
\pi (U(t)) = e^{\mathrm{i} t (H_0+\pi (V))}e^{-\mathrm{i} t H_0}.
\]
Where we denoted by $H_0+\pi (V)$ the operator which implements the interacting time evolution. 
Furthermore, the action of the time evolution gets represented as
\[
\pi (\tau_t (A)) = e^{\mathrm{i}t H_0} \pi (A) e^{-\mathrm{i}t H_0}, \qquad
\pi (\tau^V_t (A)) = e^{\mathrm{i}t (H_0+\pi (V))} \pi (A) e^{-\mathrm{i}t (H_0+\pi (V))}, \qquad A \in \mathcal{F}. 
\]
Even if to construct $e^{\mathrm{i}t(H_0+\pi V)}$ we made use of the GNS representations, if the cutoff $g$ in the interaction Hamiltonian $V$ is kept of finite spatial support, the cocycle $U(t)$ can be constructed at an algebraic level. More precisely, $U(t)$ is obtained as a power series in the coupling constant whose coefficients are well defined elements of $\mathcal{F}$ (or of $\mathcal{F}_\beta)$. The convergence of this power series has been proven only in few examples. We shall discuss below how to obtain it in the considered framework.

\subsection{Interacting equilibrium state}

The interacting equilibrium state can be constructed a la Araki by 
\begin{equation}\label{eq:state-araki}
\omega^{\beta V}(A) = \frac{\omega^\beta(A U(\mathrm{i}\beta))}{\omega^\beta(U(\mathrm{i}\beta)) }, \qquad A \in \mathcal{F}\;.
\end{equation}
See \cite{Araki,OhyaPetz,FredenhagenLindnerKMS_2014,Galanda}. 
The expectation values in this state are given in terms of the connected or truncated  correlation functions
\begin{equation} \label{eq:state-araki-connected}
    \omega^{\beta V}(A) =
\omega^{\beta }(A)
+
\sum_{n\geq 1} (-1)^n
\int_{\beta {S}_n} \di^nu\;
\omega_C^\beta(A; \underbrace{\tau_{\mathrm{i}u_1}V;\dots \tau_{\mathrm{i}u_n}V}_{n})
\end{equation}
where $\omega^\beta_C$ denotes the connected correlation functions of the state $\omega^\beta$ and $\tau_{\mathrm{i}u}$ is the analytic continuation of the time evolution to the imaginary time. Moreover, $\beta {S}_n$ denotes  the $n$-dimensional simplex of edge $\beta$, defined as $\beta {S}_n = \{(u_1,\dots, u_n) | 0<u_1 <\dots <u_n < \beta\}$. 

The {\bf connected correlation functions} are recursively defined by the relations \cite{BKR78},
\begin{equation}\label{eq:recursive-relations-connected}
\omega^{\beta}(A_1\dots A_n)
=\sum_{\pi\in \mathcal{P}_n}\prod_{I\in \pi}\omega^\beta_C(A_{I_1};\dots;A_{I_{|I|}}), \qquad A_i\in \mathcal{F}
\end{equation}
where $\mathcal{P}_n$ is the set of partitions of $\{1,\dots, n\}$ and the elements of the subset $I\in \pi\in  \mathcal{P}_{n}$ are ordered in ascending order.
A direct expression for the connected correlation functions in $\mathcal{F}_\beta$, which we shall use below is obtained in the following:

\begin{proposition}\label{pr:connected-simple-graphs} The connected correlation functions admit the following expansion
\begin{equation}\label{eq:connnected-correlations-graph-expansion}
\omega_C^{\beta}(\alpha_\beta^{-1}A_1; \dots ;\alpha_\beta^{-1}A_n) = 
\sum_{G\in \mathcal{G}^{c}_n}\left. M \prod_{\{i,j\}\in E(G) \atop i<j}(e^{D_{ij}}-1)
A_1\otimes \dots \otimes  A_n \right|_{\Phi=0,\Phi^*=0},  \qquad A_i\in \mathcal{F}_\beta
\end{equation}
where $\mathcal{G}^c_n\subset \mathcal{G}_n$ is the set of simple connected graphs among $n$ vertices and the $i$-th and $j$-th functional derivative in $D_{ij}$ act respectively on $A_i$ and $A_j$.
\end{proposition}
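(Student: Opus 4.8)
The plan is to prove the identity by showing that its right-hand side satisfies the very recursion \eqref{eq:recursive-relations-connected} that defines the connected correlation functions, and then to invoke the uniqueness of the solution of that recursion (Möbius inversion over the partition lattice $\mathcal{P}_n$). The natural starting point is to rewrite the full, non-connected correlator in terms of the $\star_\beta$ product. Since $\alpha_\beta:\mathcal{F}\to(\mathcal{F}_\beta,\star_\beta)$ is a $*$-homomorphism and the free state is the evaluation at vanishing field \eqref{eq:eval-state-beta}, setting $B_i:=\alpha_\beta A_i$ I would write
\[
\omega^\beta(\alpha_\beta^{-1}B_1\cdots\alpha_\beta^{-1}B_n) = \left.\left(B_1\star_\beta\cdots\star_\beta B_n\right)\right|_{\Phi=0,\Phi^*=0}.
\]
Well-definedness of each term appearing below is inherited from the previous proposition, after the usual reduction to compactly supported generators and completion.

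First I would apply the connected-component decomposition \eqref{eq:decomposition-connected} of the $n$-fold $\star_\beta$ product, obtaining a sum over partitions $\pi\in\mathcal{P}_n$ of products over blocks $I\in\pi$ of the corresponding sums over simple connected graphs $\mathcal{G}^c(I)$. The crucial observation is that the evaluation at $\Phi=\Phi^*=0$ factorizes across the blocks: in \eqref{eq:decomposition-connected} each block $I$ carries its own multiplication $M$, so once the functional derivatives in the $D_{ij}$ with $i,j\in I$ have been applied and the factors in $I$ collapsed, the block produces a function of $(\Phi,\Phi^*)$, and the outer product over blocks is the pointwise product of these functions. Since the value of a pointwise product at a fixed field configuration is the product of the values, I obtain
\[
\omega^\beta(\alpha_\beta^{-1}B_1\cdots\alpha_\beta^{-1}B_n) = \sum_{\pi\in\mathcal{P}_n}\prod_{I\in\pi} T(B_{I_1};\dots;B_{I_{|I|}}),
\]
where $T(B_{I_1};\dots;B_{I_{|I|}})$ denotes precisely the right-hand side of \eqref{eq:connnected-correlations-graph-expansion} evaluated on the block $I$ (with $\mathcal{G}^c(I)$ in place of $\mathcal{G}^c_n$).

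This last identity has exactly the form of the defining recursion \eqref{eq:recursive-relations-connected} for the connected correlation functions, with $T$ in place of $\omega^\beta_C$. Since \eqref{eq:recursive-relations-connected} determines the connected functions uniquely from the full correlators — the system is triangular with respect to the refinement order on $\mathcal{P}_n$ and is solved inductively on $n$, the single-block partition $\pi=\{\{1,\dots,n\}\}$ isolating the top cumulant — comparing the two expansions term by term forces $T=\omega^\beta_C$, which is the assertion of the proposition.

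The step I expect to require the most care is the factorization of the evaluation at $\Phi=\Phi^*=0$ over the connected components. One must verify that edges belonging to distinct connected components act on disjoint tensor slots, so that the contractions encoded by the $D_{ij}$ never link two different blocks, and that only fully contracted terms survive the evaluation at zero; this is what guarantees that the block contributions genuinely factor and that no cross terms between components are lost. The remaining bookkeeping — matching the ordering convention $i<j$ in the edge products and the ascending ordering of the blocks $I\in\pi$ — is routine once this factorization has been established.
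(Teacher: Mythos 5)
Your proposal is correct and follows essentially the same route as the paper's own proof: both rewrite the full correlator as the evaluation at $\Phi=\Phi^*=0$ of the $\star_\beta$ product, invoke the decomposition \eqref{eq:decomposition-connected} into partitions and connected graphs, and identify the result with the defining recursion \eqref{eq:recursive-relations-connected}. The only difference is that you spell out the uniqueness of the solution of the recursion and the factorization of the evaluation over blocks, which the paper leaves implicit.
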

\begin{proof}
Recalling equation \eqref{eq:eval-state-beta}, we have that 
\[
\omega^\beta(\alpha_\beta^{-1}A_1\dots \alpha_\beta^{-1}A_n)
=\left. A_1\star_\beta \dots \star_\beta A_n \right|_{\Phi=0,\Phi^*=0}.
\]
We have seen that from the definition of the $\star_\beta$ product given in equation \eqref{eq:star-product} and the definition of $D_{ij}$ given in equation \eqref{eq:Dij}, the decomposition in terms of a sum over connected graphs given in equation \eqref{eq:decomposition-connected} holds, hence
\begin{align}\label{eq:decomposition-connected-state}
\omega^{\beta}(\alpha_{\beta}^{-1}A_1 \dots \alpha_\beta^{-1}A_n) 
&= 
\sum_{\pi\in \mathcal{P}_n}\prod_{I\in \pi}
\left(
 M\sum_{G\in \mathcal{G}^c(I)}\left.  
 \prod_{\{i, j\}\in E(G) \atop i<j}(e^{D_{ij}}-1)
 A_{I_1}\otimes \dots \otimes  A_{I_{|I|}} \right)
\right|_{\Phi=0,\Phi^*=0}\!\!\!\!\!\!\!\!\!\!\!\!\!\!\!\!\!\!\!\!\!\!. 
\end{align}
Comparing equation \eqref{eq:decomposition-connected-state} with the recursive relations for the connected correlation functions given in equation \eqref{eq:recursive-relations-connected}, and recalling that the one-point functions coincide with the truncated one-point functions ($\omega^\beta(A_i) = \omega^\beta_C(A_i)$), we have the thesis.
\end{proof}


The action of the automorphisms $\tau_t$, on factors of the connected correlation functions at the right hand side of \eqref{eq:state-araki-connected}, are taken into account shifting the time at which the two-point functions are considered. For this reason, the analytic continuation can be taken thanks to the analytic property of the connected correlation functions which descends from the KMS condition.
In particular, the thermal propagator we are introducing below is used in place of the two-point functions $\omega_{2,\pm}^\beta$ in the expansion in terms of connected correlation functions in \eqref{eq:state-araki-connected}.
Notice furthermore that $\omega^\beta(U(\mathrm{i}\beta))$ denotes the  partition function of the state $\omega^{\beta V}$ and hence $U(\mathrm{i}\beta)$ can be understood as the relative partition function of $\omega^{\beta V}$ with respect to $\omega^\beta$.

\subsection{Description of the relative partition function $U(\mathrm{i}\beta)$}

Notice that $U(\mathrm{i}\beta)$ is the (anti) time ordered exponential with respect to the imaginary time of the interaction Hamiltonian. In the previous section, the imaginary time has been used as an external parameter of the theory. Here, in order to concretely present  $U(\mathrm{i}\beta)$ and to study its convergence, it is useful to use the imaginary time as a coordinate and to use fields which depend and are smeared also on this imaginary time. 

Considering the propagators and the correlation functions of fields depending on the imaginary time $u$, corresponds to consider a Euclidean quantum field theory. The original quantum field theory at finite temperature is then recovered in suitable limits at $u=0$ of the correlation functions of this euclidean theory. So, the field configurations we shall consider to describe the extended Euclidean quantum field theory are denoted by $\Phi$ and $\Phi^*$ and are complex valued functions on $\Sigma \times [0,\beta]$ (extended by periodicity to $\Sigma \times \mathbb{R}$).

The starting point is the following definition.
\begin{definition}
The algebra of the {\bf Euclidean field theory} $\mathcal{E}$
 is the commutative (polynomial) $*$-algebra generated by linear fields $\Phi(f)$, $\Phi^*(h)$ smeared with elements of $f,h\in L^2(\mathbb{R}^3\times[0,\beta])$ and by the normal ordered Wick squares $:\!\!|\Phi|\!\!:^2(g)$ smeared with $g\in C^\infty_0(\mathbb{R}^3\times[0,\beta])$.
\end{definition} 
The linear fields in $\mathcal{E}$ are thus formally described as
\begin{equation}\label{eq:euclidean-fields-integrals}
\Phi(f) = \int_{\Sigma \times [0,\beta]} \Phi(x,u) \overline{f(x,u)} \di^3 x \di u,
\qquad
\Phi^*(f) = \int_{\Sigma \times [0,\beta]} \Phi^*(x,u) f(x,u) \di^3 x \di u,
\end{equation}
where $f\in L^2 (\Sigma \times [0,\beta])$, while the Wick square as
\[
:\!\!|\Phi|\!\!:^2(g) = \int :\Phi\Phi^*:(x,u) g(x,u) \di^3 x \di u  
\]
where $g \in C^{\infty}_0(\Sigma \times [0,\beta])$. 
The product in this algebra $\mathcal{E}$ is commutative 
and the $*$-operation is also naturally implemented in $\mathcal{E}$.
 
%
%



It holds in particular that the fields of $\mathcal{F}$ given in \eqref{eq:linear-fields-integrals} 
%
evolved with the $*$-automorphisms describing the time evolution and analytically continued at imaginary time $u$, correspond to the fields of the Euclidean theory given in \eqref{eq:euclidean-fields-integrals}. Hence, at least formally we have
%
\begin{equation}\label{eq:tau-field}
\Phi(x,u) := \tau_{iu} \Phi(x), \qquad \Phi^*(x,u) := \tau_{iu} \Phi^*(x).
\end{equation}

The free Euclidean theory we are discussing has to be equipped with a covariance described by means of the 
thermal propagator of the theory.

\begin{definition}\label{de:thermal-propagator}
The {\bf thermal propagator} of the free theory associated with $K$ given in \eqref{eq:def-K} and denoted by $\Delta^\beta$ is the integral kernel of the operator valued periodic function of $u$ with period $\beta$ and which is 
\begin{equation}\label{eq:thermal-propagator}
\Delta^\beta(u)= \frac{e^{-u K}}{1-e^{-\beta K}}
 \qquad u\in [0,\beta), \qquad
 \Delta^\beta(u+\beta)=\Delta^\beta(u), \qquad u\in \mathbb{R}
\end{equation}
where 
its spatial Fourier transform (for $u\in [0,\beta)$) is 
\[
\hat{\Delta}^\beta(u,p) = \frac{e^{-u(\frac{p^2}{2 m}-\mu)}}{1-e^{-\beta(\frac{p^2}{2m}-\mu)}}. 
 \]
Notice that $\Delta^\beta(u)$ is regular for $u\neq 0$ and has a discontinuity in $u=0$. 
The {\bf covariance} in $\mathcal{E}$ associated to $\Delta^\beta(u)$ is described by the operator 
\begin{equation}\label{eq:covariance}
 \Delta^\beta \psi(x,u) = \int \di \overline{u} \Delta^\beta(\overline{u}-u)\psi(\cdot,\overline{u}))(x) 
 \qquad \psi \in L^{2}(\mathbb{R}^3\times \mathbb{R}).
\end{equation}
\end{definition}

The thermal propagator is used to describe the covariance of the Euclidean quantum field theory we are considering and, at the same time, to realize the time ordering with respect to the imaginary time necessary to describe $U(i\beta)$. 

It is also used to obtain a representation of $\mathcal{E}$ in terms of objects which do not contain divergences. Namely, similarly to the representation of $\mathcal{F}$ in terms $\mathcal{F}_\beta$, we denote by $\mathcal{E}_\beta$ the representation of $\mathcal{E}$ and we describe it as an algebra of functionals over suitable field configurations.
The field configurations we shall consider are denoted by  $\Phi$ and $\Phi^*=\overline{\Phi}$, complex valued functions on $\Sigma \times [0,\beta]$ (extended by periodicity to $\Sigma \times \mathbb{R}$). Then,
the map which implements the representation of $\mathcal{E}$ to $\mathcal{E}_\beta$ (similar to $\alpha_\beta:\mathcal{F}\to\mathcal{F}_\beta$ introduced in \eqref{eq:alphabeta}) is the map which also realises the (anti) {\bf  time ordering} with respect to the imaginary time
\begin{equation}\label{eq:Taubeta}
    \mathcal{T}_\beta F=e^{\gamma}F
\end{equation}
where 
\[
\gamma = \int \di^3 x \di^3 x'\di u\di u'  \Delta^\beta(x-x',u'-u) \frac{\delta^2}{\delta \Phi(x,u) \delta \Phi^*(x',u')}
\]
given in terms of the integral kernel of the covariance in \eqref{eq:covariance}, given in terms of the thermal propagator in \eqref{eq:thermal-propagator}.

This map acts on the polynomial algebra of smeared linear fields. It can be further extended to algebra generated by linear fields and the Wick square described above with the observation that  $|\Phi|^{2} = \Phi\Phi^* - c$ where $c$ does not depend on fields and implements the normal ordering. Then, the generators of $\mathcal{E}_\beta$ are now $\mathcal{T}_\beta \{\mathbb{1}, \Phi, \Phi^*, |\Phi|^2\}$,
and the map $\mathcal{T}_\beta$ intertwines the  
{\bf commutative product} on $\mathcal{E}$ with the 
commutative product in $\mathcal{E}_\beta$ which 
is here denoted by $\cdot_T$
\[
\mathcal{T}_\beta ( A B) = 
(\mathcal{T}_\beta A) \cdot_T
(\mathcal{T}_\beta B).
\]
Moreover, since $\mathcal{T}_\beta \Phi(f)=\Phi(f)$ it also holds that
\[
\Phi(f_1) \cdot_T ... \cdot_T\Phi(f_n)\cdot_T \Phi^*(g_1)\cdot_T...\cdot_T\Phi^*(g_n)
=
\mathcal{T}_\beta (\Phi(f_1)...\Phi(f_n)\Phi^*(g_1)...\Phi^*(g_n) ). 
\]
and the evaluation on the free equilibrium states $\omega^{\beta}$ of $F\in\mathcal{E}$ consists in testing $\mathcal{T}_\beta F \in \mathcal{E}_\beta$ on the vanishing field configurations
\[
\left. \mathcal{T}_\beta F \right|_{\Phi=0\Phi^*=0}.
\]

Let us discuss how the original theory is recovered. The two-point function of the Euclidean theory is
\[
\omega^{\beta}(\Phi(x,u)\Phi^*{(x',u'})) = \Delta^\beta(x-x',u'-u)
= \left. \mathcal{T}_\beta \Phi(x,u)\Phi^*{(x',u'})
\right|_{\Phi=\Phi^*=0}.
\]
Therefore, the imaginary time $u=0$ correlation functions (of $\omega^\beta$ on $\mathcal{F}$) are obtained as suitable limits of the correlation functions of the Euclidean theory. In particular, it holds that 
\[
\lim_{u\to 0^\pm} \Delta^\beta(u) = B_\mp.
\]
Hence, for the two-point function of the theory, 
\[
\omega^{\beta}(\Phi(x)\Phi^*{(x'})) = 
\lim_{u'-u\to 0^+}\Delta^\beta(x-x',u'-u),
\qquad
\omega^{\beta}(\Phi^*(x)\Phi{(x'})) = 
\lim_{u'-u\to 0^-}\Delta^\beta(x-x',u'-u)
\]
and the commutation relations of $\mathcal{F}$ are reconstructed as
\[
\begin{aligned}
\omega^{\beta}(\Phi(x)\Phi^*{(x'}))
-
\omega^{\beta}(\Phi^*(x')\Phi(x))
&= 
\lim_{u'-u\to 0^+}\Delta^\beta(x-x',u'-u)
-
\lim_{u'-u\to 0^-}\Delta^\beta(x-x',u'-u)
\\
&= 
\lim_{\epsilon \to 0^+}
\left(\Delta^\beta(x-x',\epsilon)
-
\Delta^\beta(x-x',\beta-\epsilon)
\right).
\end{aligned}
\]
In the case where there is no condensate, the functional whose expectation value in the state $\omega^\beta$ describes the relative partition function can now be written in $\mathcal{E}_\beta$ in terms of 
the interaction Hamiltonian $\tau_{\mathrm{i}u}V$ given in terms of $V$ in \eqref{eq:interaction-hamiltonian} and of 
\eqref{eq:tau-field}. It is thus seen as an element of $\mathcal{E}$, where the fields are smeared with functions on $\mathbb{R}^3\times [0,\beta]$. Namely, we have that
\[
U(\mathrm{i}\beta) = \mathcal{T}_\beta e^{-\int_0^\beta \di u \, \tau_{\mathrm{i}u}V}
\]
and, up to the convergence of the exponential which will be discussed below for the expectation values, we obtain an element of $\mathcal{E}_\beta$.
In particular, as a consequence of the above discussion, we have
\[
\omega^\beta(U(\mathrm{i}\beta))  = \left. \mathcal{T}_\beta e^{-\int_0^\beta \di u \tau_{\mathrm{i}u} V} \right|_{\Phi=0,\Phi^*=0}.
\]
Finally, the correlation functions of the interacting equilibrium state are obtained as suitable limits towards $u=0$. Namely, 
\[
\omega^\beta (\Phi(x) \Phi^*(x') U(\mathrm{i}\beta)) = \lim_{\epsilon \to 0^+} \left. 
\mathcal{T}_\beta
\left(
\Phi(x,u) \Phi^*(x',u+\epsilon) U(\mathrm{i}\beta) 
\right)
\right|_{\Phi=\Phi^*=0}
\]
and thus in the Euclidean theory there are all the elements to estimate the correlation function of the interacting equilibrium state. In particular, for the two-point function, we have
\[
\omega^{\beta V} (\Phi(x)\Phi^*(x')) 
=\frac{\omega^{\beta}(\Phi(x)\Phi^*(x') U(\mathrm{i}\beta))}{\omega^{\beta}( U(\mathrm{i}\beta))}.
\]










\section{Auxiliary theory: fields interacting with an external Gaussian field}\label{se:3}

In this section we discuss in which way the Gaussian auxiliary field (HS transformation) is introduced in order to split the study of convergence of the perturbative series defining the partition function in two steps. 

\subsection{Auxiliary field for vanishing background field}

We introduce an auxiliary classical field $A$ which is used to reconstruct a posteriori $U(i\beta)$. This field is a classical (commuting) external Gaussian stochastic potential periodic in the imaginary time of period $\beta$, of vanishing mean and characterized by a covariance expressed by the two-point function with integral kernel $\delta(u-u')v(x-x')$.
(here $\delta$ is the periodic delta function of period $\beta$).

We start discussing the case where no condensation is present. More precisely, instead of using directly $V$ in \eqref{eq:interaction-hamiltonian} to write $U(i\beta)$ and thus $\omega^{\beta V}$, we consider the following element of $\mathcal{E}$
\[
\mathcal{V}_A:=\int_0^{\beta} \di u  \int \di^3x :\!\!|\Phi|\!\!:^2(x,u)A (x,u) g(x).
\]
to construct the relative partition function. Namely, we consider it at the place of $\int \di u \tau_{\mathrm{i}u}V$ in the imaginary time ordered exponential, constructed with $\mathcal{T}_\beta$ and given in \eqref{eq:Taubeta}, with fields $\Phi(x,u) = \tau_{\mathrm{i} u} \Phi(x)$ and squared fields $:\!\!|\Phi|\!\!:^2(x,u) = \tau_{\mathrm{i} u} :\!\!|\Phi|\!\!:^2(x)$. Therefore, we study
\[
 U_A(\mathrm{i}\beta) = \mathcal{T}_\beta \exp \left( -(\mathcal{V}_A)\right),
\]
where $U_A(\mathrm{i}\beta)$ is a power series in the coupling constant whose coefficients are well defined elements of $\mathcal{E}_\beta$. We furthermore have that
\[
\omega^\beta (U_A(\mathrm{i}\beta)) = \left.  \mathcal{T}_\beta \exp \left( -(\mathcal{V}_A)\right) \right|_{\Phi=0,\Phi^*=0}.
\]
Starting from this expression, the relative partition function $\omega^{\beta}(U(\mathrm{i}\beta))$ and the correlation functions of the corresponding state, are later reconstructed adding a posteriori the internal lines corresponding to $v$ at the place of all possible couples $AA$. This step is realised evaluating the Gaussian field $A$ on the Gaussian state with zero mean and covariance $\delta(u-u')v(x-x')$, by applying the operation $e^{\Gamma}$, where 
\[
\Gamma = -\frac{1}{2} \int \di^3 x \di^3 x' \di u \di u'     v(x-x')\delta(u-u') \frac{\delta^2}{\delta A( x',u')\delta A( x,u)}
\]
and then by setting $A=0$. The consistency of this procedure is proven in the following proposition.

\begin{proposition}\label{prop:auxiliary-field}
Consider the power series in the coupling constant of the theory with coefficients in $\mathcal{E}_\beta$:
\[
U(\mathrm{i}\beta):= \mathcal{T}_\beta \exp (-\int_0^\beta    \tau_{\mathrm{i}u}{V} \di u )
\]
and
\[
U_A(\mathrm{i}\beta) :=  \mathcal{T}_\beta \exp (- {\mathcal{V}}_{A}).
\]
It holds that 
\begin{align*}
U(\mathrm{i}\beta)&= e^{\Gamma} \left. U_A(\mathrm{i}\beta)\right|_{A=0}
\\
&=  \left. e^{\Gamma} \mathcal{T}_\beta e^{- \mathcal{V}_{A}}\right|_{A=0}.
\end{align*}
\end{proposition}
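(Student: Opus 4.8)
The plan is to exploit that the two resummation operations act on disjoint sets of variables. The time-ordering $\mathcal{T}_\beta=e^{\gamma}$ is a second-order functional differential operator in $\Phi,\Phi^*$ with the field-independent kernel $\Delta^\beta$, whereas $e^{\Gamma}$ is one in the auxiliary variable $A$ alone, with the field-independent kernel $v(x-x')\delta(u-u')$. Because neither kernel depends on the fields, the two exponentials commute, so
\[
e^{\Gamma}\,\mathcal{T}_\beta\, e^{-\mathcal{V}_A}\big|_{A=0}
=\mathcal{T}_\beta\Big(e^{\Gamma}\,e^{-\mathcal{V}_A}\big|_{A=0}\Big),
\]
and the task reduces to performing the inner $A$-average first and recognising its outcome before time ordering.

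For the inner average I would write $\mathcal{V}_A=\langle J,A\rangle$ with $J(x,u)=\,:\!\!|\Phi|\!\!:^2(x,u)\,g(x)$, which is inert under $\delta/\delta A$. Since $\mathcal{V}_A$ is affine in $A$, the identity $e^{\Gamma}e^{-\langle J,A\rangle}=e^{-\frac12\langle J,CJ\rangle}\,e^{-\langle J,A\rangle}$ holds exactly, with $C(x,u;x',u')=v(x-x')\delta(u-u')$ the kernel appearing in $\Gamma$; setting $A=0$ gives
\[
e^{\Gamma}e^{-\mathcal{V}_A}\big|_{A=0}
=\exp\!\Big(-\tfrac12\!\int\!\di u\,\di^3x\,\di^3x'\;
:\!\!|\Phi|\!\!:^2(x,u)\,g(x)\,v(x-x')\,:\!\!|\Phi|\!\!:^2(x',u)\,g(x')\Big).
\]
The correct sign is delivered precisely by the minus sign built into $\Gamma$: contracting the $A$'s against $-C$ turns the positive two-body potential $v$ into the exponent of an \emph{attractive}-looking quartic, which is what is needed to match $-\int_0^\beta\tau_{\mathrm{i}u}V\,\di u$. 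Using $:\!\!|\Phi|\!\!:^2(x,u)=\tau_{\mathrm{i}u}:\!\!|\Phi|\!\!:^2(x)$ and the covariance chosen to reproduce the potential in \eqref{eq:interaction-hamiltonian}, the exponent is recognised as $-\int_0^\beta\tau_{\mathrm{i}u}V\,\di u$ (up to the normal-ordering terms discussed below), and applying $\mathcal{T}_\beta$ returns $U(\mathrm{i}\beta)$.

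Since both sides are only formal power series in the coupling, I would read all the above identities coefficient-wise. This is legitimate because at each order only finitely many $A$- and $\Phi$-contractions contribute: expanding $e^{-\mathcal{V}_A}$, the operation $e^{\Gamma}|_{A=0}$ Wick-pairs the $A$'s, only even powers $\mathcal{V}_A^{2k}$ survive, and the combinatorial count $\tfrac{(2k-1)!!}{(2k)!}=\tfrac1{2^k k!}$ reassembles the series into the $\tfrac1{k!}$ expansion of the exponential of $k$ identical two-body vertices, each carrying one factor $v(x-x')\delta(u-u')$ from a single $A$-contraction. The $A$-lines thus play exactly the role of the internal $v$-lines in the graph bookkeeping of Proposition \ref{pr:connected-simple-graphs}.

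The step I expect to be the main obstacle is reconciling the normal orderings. The $A$-average produces the product of two \emph{separately} normal-ordered Wick squares $:\!\!|\Phi|\!\!:^2(x):\!\!|\Phi|\!\!:^2(x')$, whereas $V$ in \eqref{eq:interaction-hamiltonian} is the fully normal-ordered quartic. By Wick's theorem these differ by a single-contraction term, which one computes to be exactly the quadratic counterterm $-\int:\!\!|\Phi|\!\!:^2(x)\,v(0)\,g(x)^2$ already displayed in the second line of \eqref{eq:interaction-hamiltonian}, together with a field-independent double contraction absorbed in the inessential constant $c$. I would verify that these are precisely the terms present in \eqref{eq:interaction-hamiltonian} (the former being the contribution reabsorbed in $\mu$ as $g\to1$), so that the identification of the exponent with $-\int_0^\beta\tau_{\mathrm{i}u}V\,\di u$, and hence the claimed equality, holds up to the inessential constant.
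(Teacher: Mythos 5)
Your proof is correct and follows essentially the same route as the paper's: both rest on the observation that $e^{\Gamma}$ and $\mathcal{T}_\beta=e^{\gamma}$ act on independent fields and therefore commute, after which the Gaussian evaluation of $A$ applied to the exponential of the $A$-linear vertex $\mathcal{V}_A$ reproduces $e^{-\int_0^\beta\tau_{\mathrm{i}u}V\,\di u}$. Your additional bookkeeping (the completion-of-the-square identity, the $(2k-1)!!/(2k)!=1/(2^k k!)$ count, and the reconciliation of the separately versus fully normal-ordered quartic via the quadratic counterterm in \eqref{eq:interaction-hamiltonian}) only makes explicit steps the paper leaves implicit.
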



\begin{proof}
Notice that $e^{\Gamma}$ and $\mathcal{T}_\beta=e^\gamma$, given in \eqref{eq:Taubeta}, act on independent fields. Respectively on $A$ and on $(\Phi,\Phi^*)$. Hence, they commute. More precisely, considering ${V}$ and ${\mathcal{V}}_A$ given by hypothesis, recalling that 
\[
\tau_{\mathrm{i}u} V = \frac{1}{2}\int :\Phi(x,u)\Phi^*(x,u) \Phi(y,u)\Phi^*(y,u): v(x-y) g(x) g(y) \di^3 x\di^3 y,
\]
we have that
\begin{align*}
e^{\Gamma}\left.
\mathcal{T}_\beta e^{-{\mathcal{V}}_A  }
\right|_{A=0}
&=
e^{\Gamma}\left.
e^{\gamma} e^{-{\mathcal{V}}_A  }
\right|_{A=0}
\\
&=
e^{\gamma}
e^{\Gamma}\left.
e^{-{\mathcal{V}}_A  }
\right|_{A=0}
\\
&=
e^{\gamma}
e^{-\int_0^\beta  \tau_{\mathrm{i}u}{V} \di u }.
\end{align*}
\end{proof}

We observe that $\mathcal{V}_A$ is quadratic in the field $\Phi$, for every choice of external field $A$. Furthermore, $\mathcal{V}_A$ is linear in $A$.

Following this strategy, in the next section, we construct the equilibrium state of the interacting theory introducing the auxiliary field $A$, evaluating the fields $\Phi$ and $\Phi^*$ in the corresponding state, and applying the evaluation of $A$ a posteriori. In order to implement this procedure, we observe that the interaction Hamiltonian $\mathcal{V}_A$ is quadratic in the fields $\Phi$ and for that field theory $A$ plays the role of an external potential. Hence, taking into account its influence on the state $\omega^\beta$ can be done exactly and not perturbatively, and the equilibrium state for the theory with the external potential $A$ can be studied exactly. The last step to discuss, is the prove that $e^{\Gamma}$ can be applied leading to a convergent series. 

\subsection{The case with non vanishing background, the condensate}

In the case with the condensate, the construction needs to be slightly modified in order to take into account a non vanishing background over which the bosonic quantum field theory is developed. We shall see that ideas similar of those discussed in the previous section are used. 

\subsubsection{Preliminary analysis}

Following the discussion given in section \ref{se:interacting-time-evolution}, the abstract Hamiltonian of the system is described by an Hamiltonian density given in terms of the field $\Phi$ as
%
%
%
\begin{align*}
\tilde{H} 
&=   \Phi^* (\mathcal{K}-\tilde\mu) \Phi +\frac{1}{2} |\Phi|^2 v*|\Phi|^2 
\end{align*}
where $\tilde\mu$ is the {\bf renormalized chemical potential} and, in the case where no adiabatic cutoffs are considered, it takes the form
\begin{equation}\label{eq:ren-mu}
\tilde\mu = \mu + v(0)
\end{equation}
see \eqref{eq:interaction-hamiltonian}.
We analyze how $\tilde{H}$
changes under a field redefinition
\[
\Phi = \phi_0 + \Psi, \qquad \Phi^*=\phi_0+ \Psi^*
\]
where $\phi_0$ is a classical real function which is the {\bf background value} of the field, representing the presence of a condensate, over which perturbation theory is analyzed. Correspondingly, $\Psi$ describes the fluctuations around the background $\phi_0$. We decompose the formal Hamiltonian density in contributions homogeneous in the fluctuations as follows
\[
\tilde{H}=\tilde{H}_0+\tilde{H}_1+\tilde{H}_2+\tilde{H}_3+\tilde{H}_4
\]
where
\begin{align*}
\tilde{H}_0&= \phi_0(\mathcal{K}-\tilde\mu)\phi_0 + \frac{1}{2}\phi_0^2 v *(\phi_0^2 )    
\\
\tilde{H}_1&= \phi_0(\mathcal{K}-\tilde\mu)(\Psi+\Psi^*) + \phi_0^2 v *(\phi_0(\Psi+\Psi^*))
\\
\tilde{H}_2&= \Psi^* (\mathcal{K}-\tilde\mu) \Psi + |\Psi|^2 v*(\phi_0^2) + \frac{1}{2}  \phi_0 (\Psi+\Psi^*) v* (\phi_0(\Psi+\Psi^*))
\\
\tilde{H}_3&= |\Psi|^2 v* ((\Psi+\Psi^*)\phi_0)
\\
\tilde{H}_4&= \frac{1}{2} |\Psi|^2 v* |\Psi|^2.
\end{align*}
At this stage, we choose the classical function $\phi_0$ as a local stationary minimiser of $\tilde{H}_0$. With this choice, $\tilde{H}_1$ is a total derivative on $\Sigma$ and hence $\tilde{H}_1$ vanishes once integrated on $\Sigma$. We shall thus choose constant $\phi_0$ which satisfies
\[
\|v\|_1\phi^2_0 =\tilde\mu ,
\]
and hence 
\begin{equation}\label{eq:phi0}
\phi^2_0 =\frac{\tilde\mu}{\|v\|_1}. 
\end{equation}
Moreover, in order to shift some quadratic contributions of the interaction Hamiltonian into the free Hamiltonian, modifying the thermal propagators accordingly, we will make use of the perturbative agreement \cite{HW05, DragoHackPinamonti}. 

With this in mind, we consider the following splitting into free Hamiltonian density
\begin{equation}\label{eq:free-hamiltonian-condensate}
H_{20} = \Psi^* (\mathcal{K}- \tilde{\tilde{\mu}})\Psi,
\end{equation}
where $\tilde{\tilde{\mu}} = \tilde{\mu}-\phi_0^2 \|v\|_1$ is the renormalized chemical potential, and interaction Hamiltonian density
\begin{equation}\label{eq:interaction-Hamiltonian-condensate}
{H}_I =  \frac{1}{2}  \phi_0 (\Psi+\Psi^*) v* (\phi_0(\Psi+\Psi^*)) 
+ 
|\Psi|^2 v* ((\Psi+\Psi^*)\phi_0)
+ 
\frac{1}{2} |\Psi|^2 v* (|\Psi|^2)
\end{equation}



\subsection{Auxiliary field for non vanishing background quantum fields}

In this section we shall introduce the auxiliary external field $A$, which will be eventually removed with the application of $e^{\Gamma}$.
%
We start considering the following redefined Hamiltonian density decomposed in homogeneous contributions in $\Psi$ and $\Psi^*$
\[
\tilde{\tilde{H}} = 
\tilde{\tilde{H}}_0 +
\tilde{\tilde{H}}_1
+\tilde{\tilde{H}}_2
\]
where
\begin{align*}
\tilde{\tilde{H}}_0&=
\tilde{H}_0
\\
\tilde{\tilde{H}}_1&= \tilde{H}_1+
g \phi_0 A(\Psi+\Psi^*) = g  A\phi_0(\Psi+\Psi^*)
\\
\tilde{\tilde{H}}_2&= \Psi^* (\mathcal{K}-\tilde{\mu}) \Psi + |\Psi|^2v*(\phi_0^2)+ g A |\Psi|^2. 
\end{align*}
where $\tilde{H}_1$ is a total derivative and thus, it can be discarded in the analysis of the relative partition function. Furthermore, we added an {\bf adiabatic cutoff} described by a smooth function $g \geq 0$ on $\Sigma$ which has compact support, it is equal to $1$ on large regions of space and which bounds the spatial domain where the interaction takes place.

Notice that if $\phi_0$ is constant, the chemical potential is renormalized in the quadratic theory
\[
\tilde{\tilde{H}}_2 = \Psi^* (\mathcal{K}-\tilde{\tilde\mu}) \Psi + gA|\Psi|^2 
\]
where now we are considering
\begin{equation}\label{eq:renormalizaed-chemical-potential}
\tilde{\tilde\mu} = \tilde{\mu}- \phi_0^2\|v\|_1 - \epsilon= -\epsilon.
\end{equation}
Here, we have used the fact that with $\tilde{\mu}$ given in \eqref{eq:ren-mu}, and with the choice of $\phi_0$ made in \eqref{eq:phi0}, $\tilde{\tilde{\mu}} = -\epsilon$ with $\epsilon>0$ an artificial {\bf regularization parameter} used to keep the theory finite. In this way, the renormalized chemical potential $\tilde{\tilde{\mu}}=-\epsilon$ is finite and strictly negative. We shall take the limit $\tilde{\tilde{\mu}} \to 0$, ($\epsilon \to 0$) only a posteriori, on a general level, together with the limit $g\to1$, the latter only for certain regimes. 

Notice that the Hamiltonian density of the free theory is $H_{20} = \Psi^* (\mathcal{K}- \tilde{\tilde{\mu}}) \Psi$, which coincides with $H_{20}$ given in equation \eqref{eq:free-hamiltonian-condensate} up to the $\epsilon$ regularization introduced in this step, used to make the spectrum of the free Hamiltonian bounded away from $0$.

With this in mind, we discuss now how to construct the relative partition function of this interacting auxiliary theory denoted by $U_A^{\phi_0}(\mathrm{i}\beta)$. The contribution which contains $\tilde{\tilde{H}}_0$ is a constant factor and thus it can be discarded in the analysis of the relative partition function.
%
%
The interaction Hamiltonian of the auxiliary theory which is then used to get $U_A^{\phi_0}(\mathrm{i}\beta)$ is related to the contributions in $\tilde{\tilde{H}}$ which are linear in $A$ and it has the form
\[
V_A = \int_0^\beta Q_A(u) \di u.
\]
where 
\begin{equation}\label{eq:QA}
Q_A(u) = \int_\Sigma \di^3 x  g(x) A(x,u) (|\Psi|^2(x,u) + (\Psi(x,u)+\Psi^*(x,u))\phi_0 ).
\end{equation}
Notice that $V_A$ is a well defined element of $\mathcal{E}$ and furthermore it is at most quadratic in the field $\Psi$. $Q_A(u)$ is thus understood as a distribution over $C^\infty([0,\beta])$ with values in $\mathcal{E}$.
%
The following proposition discusses the connection of the relative partition function of the auxiliary theory with that corresponding to $\tau_{\mathrm{i}u }V$.

\begin{proposition}\label{prop:auxiliary-partition-function}
Considering the equilibrium state $\omega^\beta$ associated to the free time evolution generated by $H_{20}$, given in \eqref{eq:free-hamiltonian-condensate} in terms of $\mathcal{K}-\tilde{\tilde{\mu}}$ where $\tilde{\tilde{\mu}}$ is the one given in \eqref{eq:renormalizaed-chemical-potential}, it holds that
\[
 U(\mathrm{i}\beta) = \mathcal{T}_\beta e^{- \int_0^\beta \di u \int \di^3 x H_I}
\]
in the sense of power series in the coupling constant  with coefficients in $\mathcal{E}_\beta$, where the interaction Hamiltonian is
\begin{equation}\label{eq:HI}
H_I =  \frac{1}{2}  g\phi_0 (\Psi+\Psi^*) v* (g\phi_0(\Psi+\Psi^*)) 
+ 
g|\Psi|^2 v* ((\Psi+\Psi^*)g\phi_0)
+ 
\frac{1}{2} g|\Psi|^2 v* (g|\Psi|^2)
\end{equation}
as given in \eqref{eq:interaction-Hamiltonian-condensate}.
The associated relative partition function is given by the expectation value of
\[
U(\mathrm{i} \beta) = \left. e^{\Gamma} U_A^{\phi_0}(\mathrm{i}\beta) \right|_{A=0},
\]
where 
\[
U_A^{\phi_0}(\mathrm{i}\beta) =
\mathcal{T}_\beta e^{-V_A},
\qquad \text{with}\qquad
V_A =\int_0^\beta \di u \int \di^3 x\;  g(x) A(x,u)  (|\Psi|^2+(\Psi+\Psi^*)\phi_0) 
\]
and defined in the sense of formal power series with coefficients in $\mathcal{E}_\beta$
\end{proposition}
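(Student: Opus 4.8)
The plan is to follow the blueprint of Proposition~\ref{prop:auxiliary-field}, adapting it to the condensate splitting $\Phi=\phi_0+\Psi$. Two identities must be established: first, that once the free dynamics is fixed to be the one generated by $H_{20}$ in \eqref{eq:free-hamiltonian-condensate} (built from $\mathcal{K}-\tilde{\tilde\mu}$ with $\tilde{\tilde\mu}=-\epsilon$), the relative partition function is the imaginary-time-ordered exponential $U(\mathrm{i}\beta)=\mathcal{T}_\beta e^{-\int_0^\beta\di u\int\di^3x\,H_I}$ with $H_I$ as in \eqref{eq:HI}; and second, the Hubbard--Stratonovich identity $U(\mathrm{i}\beta)=\left.e^{\Gamma}U_A^{\phi_0}(\mathrm{i}\beta)\right|_{A=0}$. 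The first identity is essentially a bookkeeping statement about the free/interaction split. After the field redefinition one has $\tilde{\tilde H}=\tilde{\tilde H}_0+\tilde{\tilde H}_1+\tilde{\tilde H}_2$; I would discard $\tilde{\tilde H}_0$ as a field-independent constant and $\tilde H_1$ as a total derivative, absorb the term $|\Psi|^2 v*(\phi_0^2)=\phi_0^2\|v\|_1|\Psi|^2$ into the chemical potential by the perturbative agreement (exactly what produces $\tilde{\tilde\mu}=-\epsilon$ in \eqref{eq:renormalizaed-chemical-potential} and the associated modification of $\Delta^\beta$), and collect the residual cross-quadratic piece of $\tilde{\tilde H}_2$ together with $\tilde H_3,\tilde H_4$, dressed by the cutoff $g$, into $H_I$ of \eqref{eq:HI}. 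Invoking the Euclidean representation of the relative partition function established in the preceding section then yields the first identity as a formal power series with coefficients in $\mathcal{E}_\beta$.

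For the second identity I would read off from \eqref{eq:QA} the source $J(x,u)=g(x)\bigl(|\Psi|^2+(\Psi+\Psi^*)\phi_0\bigr)(x,u)$, so that $V_A=\int_0^\beta\di u\int\di^3x\,A(x,u)J(x,u)$ is linear in $A$. As in Proposition~\ref{prop:auxiliary-field}, the crucial observation is that $\Gamma$ differentiates only with respect to $A$, while $\mathcal{T}_\beta=e^{\gamma}$ (see \eqref{eq:Taubeta}) differentiates only with respect to $\Psi,\Psi^*$; since $A$ and $(\Psi,\Psi^*)$ are independent fields the two operators commute, giving $\left.e^{\Gamma}\mathcal{T}_\beta e^{-V_A}\right|_{A=0}=\mathcal{T}_\beta\left.e^{\Gamma}e^{-V_A}\right|_{A=0}$. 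Because $V_A$ is linear in $A$ and $J$ is a commuting element of $\mathcal{E}$, completing the square (equivalently, Wick-contracting the $A$'s, each pair contributing $-v(x-x')\delta(u-u')$ from $\Gamma$) gives
\[
\left.e^{\Gamma}e^{-V_A}\right|_{A=0}=\exp\left(-\tfrac{1}{2}\int_0^\beta\di u\int\di^3x\,\di^3x'\;J(x,u)\,v(x-x')\,J(x',u)\right).
\]

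The proof then reduces to expanding this quadratic form in the two pieces of $J$ and matching with \eqref{eq:interaction-Hamiltonian-condensate}--\eqref{eq:HI}. Writing $P=g|\Psi|^2$ and $L=g(\Psi+\Psi^*)\phi_0$, the product $P\,v*P$ reproduces the quartic term $\tfrac12 g|\Psi|^2 v*(g|\Psi|^2)$, the product $L\,v*L$ the quadratic term $\tfrac12 g\phi_0(\Psi+\Psi^*)v*(g\phi_0(\Psi+\Psi^*))$, while the two mixed products $P\,v*L$ and $L\,v*P$ coincide by the symmetry $v(x-x')=v(x'-x)$ and combine into the single cubic term $g|\Psi|^2 v*((\Psi+\Psi^*)g\phi_0)$. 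Hence the exponent equals $-\int_0^\beta\di u\int\di^3x\,H_I$, and applying $\mathcal{T}_\beta$ makes the two claimed expressions for $U(\mathrm{i}\beta)$ coincide.

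I expect the main delicate point to be precisely this combinatorial bookkeeping of the mixed term, which is absent in the background-free case of Proposition~\ref{prop:auxiliary-field}: one must verify that the two cross products add up rather than yielding two inequivalent contributions, and that the symmetry of $v$ together with the $\tfrac12$ prefactors reproduces exactly the coefficients in \eqref{eq:HI}. A secondary point needing care is that the completing-the-square identity is invoked only at the level of formal power series with coefficients in $\mathcal{E}_\beta$; since $J$ is quadratic (not merely linear) in the fields, each order already couples several copies of $\Psi,\Psi^*$, but because $\mathcal{E}$ is commutative and $A$ is independent of $(\Psi,\Psi^*)$ the manipulation is purely algebraic, and convergence issues play no role here, being deferred to the subsequent evaluation of $e^{\Gamma}$.
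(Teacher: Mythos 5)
Your proposal is correct and follows essentially the same route as the paper's proof: commuting $e^{\Gamma}$ with $\mathcal{T}_\beta$ because they act on independent fields, then evaluating the Gaussian integral over $A$ of the exponential linear in $A$ to recover $\exp\bigl(-\int_0^\beta \di u \int \di^3 x\, H_I\bigr)$. The paper states this more tersely (simply asserting the resulting identity as in Proposition~\ref{prop:auxiliary-field}), whereas you spell out the completing-the-square step and the matching of the three terms of \eqref{eq:HI} via the symmetry of $v$ — a welcome elaboration, but not a different argument.
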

\begin{proof}
The result is similar to proposition \ref{prop:auxiliary-field}. In particular the application of $e^{\Gamma}$ and of $\mathcal{T}_\beta$ commute because they act on independent fields. 
Therefore, recalling the form of $H_I$ in \eqref{prop:auxiliary-partition-function} 
and the form of $Q_A$ given in \eqref{eq:QA}, we get that 
\[
e^{\Gamma} \mathcal{T}_\beta e^{-\int_0^\beta \di u\, Q_A(u)} = 
e^{\Gamma} \mathcal{T}_\beta e^{-\int_0^\beta \di u\, Q_A(u)}
=
\mathcal{T}_\beta e^{-  \int_0^\beta \di u \int \di^3 x H_I }
\]
thus getting the thesis.
\end{proof}

\begin{rem}
In the theory presented so far, we are considering a non vanishing background $\phi_0$ which is constant on the whole space. To get a well defined interaction Hamiltonian and control its exponential, we have inserted an adiabatic cutoff $g$ in $H_I$. In this way, $\int du \int \di^3 x H_I$ is a well defined element of $\mathcal{E}$ and its time ordered exponential can be considered. We might eventually remove this cutoff taking the limits where $g\to 1$ everywhere on $\Sigma=\mathbb{R}^3$ in a suitable way. The role of the regularization parameter $\epsilon = - \tilde{\tilde{\mu}}$ in the renormalized chemical potential is similar. It is used to make the free theory well behaved in the infrared regime. The limit where $\epsilon\to0$ is eventually taken after having proven the convergence of the series defining the Gibbs states at finite temperature. 
\end{rem}

Therefore, in view of Proposition \ref{prop:auxiliary-partition-function}, we shall furthermore use an auxiliary state for the intermediate theory to built $\omega^{\beta V}$, constructed as
\begin{equation}\label{eq:omegabetaA}
\omega^{\beta A}(B) := \frac{\omega^{\beta}(BU^{\phi_0}_A(\mathrm{i}\beta))}{\omega^{\beta}(U^{\phi_0}_A(\mathrm{i}\beta))}
=\sum_{{n}\geq 0} (-1)^n\int_{\beta S_n} \di^n u\;\omega_C^\beta(B;\underbrace{Q_A(u_1);\dots; Q_A(u_n)}_n) , \qquad B\in \mathcal{E}.
\end{equation}
We furthermore observe that, since the $V_A$ and $Q_A$ are quadratic in the fields, the two-point function of this state can be directly constructed and is quasi-free. Here, we are implicitly applying the principle of perturbative agreement \cite{HW05, DragoHackPinamonti}. 


\subsection{Intermediate theory and evaluation of the auxiliary field $A$}

We consider the case $\phi_0\neq 0$ and analyze the relative partition function 
\[
Z=  e^{W} = \omega^\beta( U(\mathrm{i}\beta))
\]
and among all possible correlation functions of the original model, without loss of generality, the two-point function of the interacting theory
\[
\langle f, Sh \rangle :=\omega^{\beta V} (\Psi(f)\Psi^*(h)) = \frac{\omega^{\beta}
(\Psi(f)\Psi^*(h) U(\mathrm{i}\beta))}
{\omega^\beta( U(\mathrm{i}\beta))} .
\]
The following proposition connects the expectation values of the interacting theory in the state $\omega^{\beta V}$ to expectation values of the theory with the external auxiliary field $A$ in the state $\omega^{\beta A}$ given in equation \eqref{eq:omegabetaA} for the case with non vanishing background $\phi_0$. 

\begin{proposition}\label{pr:what-we-compute}
Denote by 
\[
\langle f,  \Omega^{\phi_0}  g  \rangle = 
\omega^{\beta A} (\Psi(f)\Psi^*(g)),
\]
and by 
\[
Z_A=e^{W_A} = \omega^{\beta } (U^{\phi_0}_A(\mathrm{i}\beta)).
\]
The evaluation at $\Psi,\Psi^*=0$ before the application of $e^{\Gamma}$ can be considered and it holds that
\begin{equation}\label{eq:S}
S = \left.\frac{e^{\Gamma}\left(\Omega^{\phi_0} e^{W_A}
\right)}{e^\Gamma (e^{W_A})}\right|_{A=0},
\end{equation}
as well as 
\begin{equation}\label{eq:partition-fuction}
Z = \left.e^{\Gamma} e^
{W_A}\right|_{A=0}.
\end{equation}
\end{proposition}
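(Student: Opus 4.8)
The plan is to reduce both identities to the single relation $U(\mathrm{i}\beta) = \left.e^{\Gamma}U_A^{\phi_0}(\mathrm{i}\beta)\right|_{A=0}$ established in Proposition \ref{prop:auxiliary-partition-function}, exploiting the fact (already central there) that $e^{\Gamma}$ differentiates only with respect to the auxiliary field $A$, whereas the free equilibrium evaluation $\omega^\beta(\cdot) = \left.\cdot\right|_{\Psi=\Psi^*=0}$ acts on the fluctuation fields $\Psi,\Psi^*$. Since these are independent fields, $e^{\Gamma}$ and the evaluation on $\omega^\beta$ commute, and all the identities below are to be read order-by-order as formal power series in the coupling constant.

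For the partition function \eqref{eq:partition-fuction} I would apply $\omega^\beta$ to both sides of the Proposition \ref{prop:auxiliary-partition-function} identity. Commuting $\omega^\beta$ past $e^{\Gamma}$ gives
\[
Z = \omega^\beta(U(\mathrm{i}\beta)) = \left.e^{\Gamma}\,\omega^\beta(U_A^{\phi_0}(\mathrm{i}\beta))\right|_{A=0} = \left.e^{\Gamma}e^{W_A}\right|_{A=0},
\]
where the last step is merely the definition $e^{W_A} = Z_A = \omega^\beta(U_A^{\phi_0}(\mathrm{i}\beta))$. This also establishes the assertion that the evaluation at $\Psi=\Psi^*=0$ may be performed \emph{before} applying $e^{\Gamma}$.

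For the two-point function \eqref{eq:S} I would treat numerator and denominator separately. The denominator is exactly the $Z$ just computed. For the numerator, insert the Proposition \ref{prop:auxiliary-partition-function} identity into $\omega^\beta(\Psi(f)\Psi^*(h)U(\mathrm{i}\beta))$, commute $e^{\Gamma}$ past the $\Psi$-evaluation, and then use the definition \eqref{eq:omegabetaA} of $\omega^{\beta A}$ to factor
\[
\omega^\beta(\Psi(f)\Psi^*(h)U_A^{\phi_0}(\mathrm{i}\beta)) = \omega^{\beta A}(\Psi(f)\Psi^*(h))\,\omega^\beta(U_A^{\phi_0}(\mathrm{i}\beta)) = \langle f, \Omega^{\phi_0}h\rangle\, e^{W_A}.
\]
Applying $e^{\Gamma}$ to this $A$-dependent product and dividing by the denominator yields \eqref{eq:S} at the level of matrix elements, hence as an operator identity.

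The only genuine point to justify is the commutation of $e^{\Gamma}$ with the field evaluation, together with the consistency of these manipulations order-by-order; this follows from the independence of $A$ and $(\Psi,\Psi^*)$ exactly as in Propositions \ref{prop:auxiliary-field} and \ref{prop:auxiliary-partition-function}. The subtlety to keep track of is that $e^{\Gamma}$ must be applied to the full $A$-dependent products $\Omega^{\phi_0}e^{W_A}$ and $e^{W_A}$, \emph{not} factor by factor, since both $\Omega^{\phi_0}$ and $W_A$ carry dependence on the auxiliary field and $e^{\Gamma}$ contracts $A$'s across them. I expect the main obstacle to be purely bookkeeping: ensuring that the rearrangement of the series into the $\omega^{\beta A}$-normalized form is legitimate, which is guaranteed by \eqref{eq:omegabetaA} once one observes that $\omega^\beta(U_A^{\phi_0}(\mathrm{i}\beta))$ is an invertible formal power series with constant term $1$.
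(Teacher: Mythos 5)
Your proposal is correct and follows essentially the same route as the paper's own proof: both reduce everything to the identity of Proposition \ref{prop:auxiliary-partition-function}, commute $e^{\Gamma}$ with the evaluation at $\Psi=\Psi^*=0$ using the independence of $A$ and $(\Psi,\Psi^*)$, and factor the numerator via the definition \eqref{eq:omegabetaA} of $\omega^{\beta A}$. Your closing remarks on applying $e^{\Gamma}$ to the full $A$-dependent products and on the invertibility of the formal series are consistent with (and slightly more explicit than) the paper's argument.
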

\begin{proof}
To prove these claims, we observe that for any $F = F(\Psi,\Psi^*)$:
\[
\omega^{\beta} (F) = \left. F   \right|_{\Psi=0,\Psi^*=0}.
\]
Hence, considering a generic functional $G = G(A,\Psi,\Psi^*)$:
\begin{align*}
\omega^{\beta} (e^{\Gamma} G) &=\left. (e^{\Gamma} G) \right|_{\Psi=0,\Psi^*=0}
\\
&=\left. e^{\Gamma} (G \right|_{\Psi=0,\Psi^*=0}) = e^{\Gamma} \omega^{\beta}(G),
\end{align*}
because $A$ and $\Psi$ are independent fields and the corresponding functional derivatives in $\Gamma$ and in $\mathcal{T}_\beta$ commute. With this observation, and recalling the result of Proposition \ref{prop:auxiliary-partition-function}, we notice that 
\[
Z=\omega^{\beta}(U(i\beta)) 
=
\left. \omega^{\beta}(e^{\Gamma} U^{\phi_0}_A(\mathrm{i}\beta))
\right|_{A=0}
= \left. e^{\Gamma}\omega^{\beta}(U^{\phi_0}_A(\mathrm{i}\beta))
\right|_{A=0} = \left. e^{\Gamma}Z_A
\right|_{A=0}.
\]
Furthermore,
\[
\omega^{\beta V}(\Psi(f)\Psi^*(g))
=\frac{\omega^{\beta}(\Psi(f)\Psi^*(g) U(\mathrm{i}\beta))}{\omega^{\beta }(U(\mathrm{i}\beta))}
=\left. \frac{\omega^{\beta}(\Psi(f)\Psi^*(g) e^{\Gamma} U^{\phi_0}_A(\mathrm{i}\beta))}{\omega^{\beta }(e^{\Gamma} U^{\phi_0}_A(\mathrm{i}\beta))}
\right|_{A=0}
=\left.\frac{e^{\Gamma}\omega^{\beta}(\Psi(f)\Psi^*(g) U^{\phi_0}_A(\mathrm{i}\beta))}{e^{\Gamma}\omega^{\beta }(U^{\phi_0}_A(\mathrm{i}\beta))}\right|_{A=0}.
\]
Then, recalling that
\[
\omega^{\beta A} (\Psi(f)\Psi^*(g)) = \frac{\omega^{\beta} (\Psi(f)\Psi^*(g)U^{\phi_0}_A(\mathrm{i}\beta))}{\omega^{\beta} (U^{\phi_0}_A(\mathrm{i}\beta))}
\]
we obtain that 
\[
\omega^{\beta} (\Psi(f)\Psi^*(g)U^{\phi_0}_A(\mathrm{i}\beta))=
\omega^{\beta A}(\Psi(f)\Psi^*(g)) \omega^{\beta} (U^{\phi_0}_A(\mathrm{i}\beta)).
\]
Combining these two observations we have 
\[
\omega^{\beta V}(\Psi(f)\Psi^*(g))
=
\left.\frac{e^{\Gamma}\left(\omega^{\beta A}(\Psi(f)\Psi^*(g))  \omega^{\beta}(U^{\phi_0}_A(\mathrm{i}\beta))\right)}{e^{\Gamma}\omega^{\beta }(U^{\phi_0}_A(\mathrm{i}\beta))}
\right|_{A=0}
=
\left.\frac{e^{\Gamma}\left(\langle f,\Omega^{\phi_0} g \rangle Z_A\right)}{e^{\Gamma}Z_A}\right|_{A=0},
\]
from which the thesis follows.
\end{proof}

\begin{rem}\label{rem: 2puntinpunti}
The result of this proposition implies that an hypothetical convergence established for the two-point function, is extended to all $n$-point functions of the state $\omega^{\beta V}$. The key idea is to exploit the quasifree property of the equilibrium state of the auxiliary theory $\omega^{\beta A}$, along with the convergence estimates already established for its one-point and two-point functions.
\end{rem}

We furthermore observe that because of the evaluation
of the field $A$ in a gaussian state of vanishing mean, namely setting $A=0$ after applying $e^{\Gamma}$,
the contribution with an odd number of $A$ vanish both 
in \eqref{eq:S} and in \eqref{eq:partition-fuction}. 


We conclude this section recalling that the following expansions in terms of connected correlations hold
\begin{equation}
\label{eq:logZA}
W_A=\log Z_A = \log(\omega^{\beta} (U_A^{\phi_0})) 
=
\sum_{n\geq 0} (-1)^n\int_{\beta S_n}\di^n u\, \omega_C^\beta(Q_A(u_1);\dots; Q_A(u_n)).
\end{equation}
where ${V}_A = \int_0^\beta du Q_A \in \mathcal{E}$ and
\begin{equation}\label{eq:evomegaA}
\omega^{\beta A}(F) 
= \sum_{n\geq 0} (-1)^n 
\int_{\beta S_n} \di^n u \, \omega_C^\beta (F;Q_A(u_1);\dots ;Q_A(u_n) ).
\end{equation}

\subsection{Relative entropy}
We notice in this section that 
\[
W_A=\log Z_A = \log \omega^{\beta}(U_A^{\phi_0}(\mathrm{i}\beta))
\]
has a direct interpretation in terms of the relative entropy of $\omega^{\beta A}$ relative to $\omega^{\beta}$. 
Actually, see e.g. the book of Ohya and Petz \cite{OhyaPetz} but also \cite{DrFaPi} for the analysis of that formula in the context of field field theories, we have
\begin{equation}\label{eq:relative-entropy-basic}
\mathcal{S}_{\phi_0}:=\mathcal{S}(\omega^\beta,\omega^{\beta A}) =  \omega^\beta( {V}_A)  + \log(\omega^\beta(U^{\phi_0}_{A}(\mathrm{i}\beta))).
\end{equation}
In particular, by the linked cluster theorem, $\omega^{\beta}(U^{\phi_0}_{A}(\mathrm{i}\beta))$ can be written as the exponential of the sums of connected components (sum of connected graphs obtained by means of the imaginary time ordered exponential of equation \eqref{eq:logZA}). Moreover, since the interaction Hamiltonian is particularly simple, there are three types of graphs which are summed. The contributions corresponding to $-\omega^{\beta}({V}_A)=T_0$. The contributions corresponding to lines with $\phi_0A(\Psi+\Psi^*)$ at the extreme points and with $A|\Psi|^2$ as internal points. This correspond to $T_1$. The contributions $T_2$ formed by the sum of all possible loops with at least two vertices and with $A|\Psi|^2$ as internal points. Hence, 
\[
\omega^{\beta}(U^{\phi_0}_{A}(\mathrm{i}\beta)) =
e^{T_0+T_1+T_2}
\]
and therefore in terms of the relative entropy
\[
T_2 = \left. \mathcal{S}(\omega^\beta,\omega^{\beta A})\right|_{\phi_0=0} = 
\mathcal{S}_0(\omega^\beta,\omega^{\beta A}),
\qquad 
T_1= \mathcal{S}_{\phi_0}-
\mathcal{S}_0,
\qquad 
T_0=-\omega^{\beta}({V}_A),
\]
where $\mathcal{S}_{\phi_0}  = \mathcal{S}(\omega^\beta,\omega^{\beta A})$ for $\phi_0\neq 0$.
We shall analyze the precise form of these contributions in terms of the thermal propagators below in order to analyze how they precisely depend on $A$.

\section{Quadratic interaction}\label{se:4}


In this section we discuss the role of the quadratic Hamiltonian ${V}_A$ and obtain the explicit form of the correlation function of $\omega^{\beta A}$ introduced in \eqref{eq:omegabetaA}. This is a necessary prerequisite 
to study the form of the expectation values of observables of the original interacting theory.

\subsection{Two-point function of the state $\omega^{\beta A}$}

We discuss how the correlation functions associated to the state
$\omega^{\beta A}$ correspond to correlation function of the linear theory described by the Hamiltonian $K+A$
and how they
are related to those of the free theory where the Hamiltonian is $K$. In this section $A$ is a compactly supported smooth external potential which depends also on the imaginary time $u$. Later we shall substitute $gA$ at the place of $A$ and the Hamiltonian $K=\mathcal{K}-\tilde{\tilde{\mu}}$.\\\\



Let us denote by 
\[\Omega_0 = \frac{1}{1-e^{-\beta K}}
\] 
the operator on the one particle Hilbert space $L^2(\mathbb{R}^3;\mathbb{C})$ which realises the (truncated) two-point function of the free theory
\[
\langle f,\Omega_0 h \rangle = \omega^{\beta }(\Psi(f)\Psi^*(h)) ,
\]
and by $\Omega$ the operator realising the (truncated) two-point function of the perturbed theory
\[
\langle f,\Omega h \rangle
=
\langle f,\Omega^{\phi_0} h \rangle
-
\omega^{\beta A}(\Psi(f))
\omega^{\beta A}(\Psi^*(h))
=\omega^{\beta A}(\Psi(f)\Psi^*(h))
-
\omega^{\beta A}(\Psi(f))
\omega^{\beta A}(\Psi^*(h))
\]
where $\Omega^{\phi_0}$ is the operator realising the two-point function introduced in Proposition \ref{pr:what-we-compute}. We shall prove that they correspond to having an Hamiltonian $K+A$ on the one particle Hilbert space where $A$ acts as an external potential.

To get this claim, we use ordinary perturbation theory to connect $\Omega$ with $\Omega_0$ considering $A$ as an external potential. The state $\omega^{\beta A}$ is obtained as in \eqref{eq:state-araki} and in particular with the formula \eqref{eq:state-araki-connected} with the perturbation Lagrangian density $\Phi^*\Phi A$. More precisely, \eqref{eq:state-araki-connected} results in an ordinary Dyson series given in terms of the thermal propagator of the free theory and the perturbation external potential $A$.

\begin{proposition}\label{prop:one-point-two-point}
For a compactly supported function $A$, consider $Q_A$ given in \eqref{eq:QA} and its decomposition in contributions homogeneous in the number of fields $\Psi$
    \[
    Q_A(u) = 
    Q_{A,1}(u)+Q_{A,2}(u)\;
\qquad
    Q_{A,1}=    \int_\Sigma \di^3 x  A(u,x) (\Psi+\Psi^*)\phi_0 
\qquad
    Q_{A,2}=    \int_\Sigma \di^3 x   A(u,x) |\Psi|^2 .
    \]
Consider the connected correlation functions $\omega^\beta_C$ of the state $\omega^\beta$ given in \eqref{eq:omegabeta} and the following functions 
\[
(u_1,\dots,u_n)\mapsto \omega^{\beta}_C(\tau_{\mathrm{i}u_1}F_1(u_1);\dots  ; \tau_{\mathrm{i}u_n}F_n(u_n)))
\]
defined by the analytic property of the correlation function of the state $\omega^{\beta}$ on $\beta S_n$, and extended by symmetry to $(0,\beta)^n$ for various function $F_i(u)$ with values in the observable algebra $\mathcal{F}$ (recall that $\int du {\tau}_{\mathrm{i}u} F(u)$ is in $\mathcal{E}$). 
    
We have that 
\[
\omega^{\beta A}(\Psi(f)) = -\sum_{n\geq 0}\frac{(-1)^n}{n!}
    \int_0^\beta \di u_0
    \int_{B_n} \di^n u\,
    \omega_C^\beta( \Psi(f); Q_{A,2}(u_1);\dots; Q_{A,2}(u_n);Q_{A,1}({u_0}))
\]
\[
\omega^{\beta A}(\Psi^{*}(h)) = -\sum_{n\geq 0}\frac{(-1)^n}{n!}
    \int_0^\beta \di u_0
    \int_{\beta B_n} \di^n u\,
    \omega_C^\beta( \Psi^*(h); Q_{A,2}(u_1);\dots; Q_{A,2}(u_n);Q_{A,1}({u_0}))
\]
\[
\langle f, \Omega h\rangle
=
\sum_{n\geq 0}\frac{(-1)^n}{n!}
    \int_{\beta B_n} \di^n u\,
    \omega_C^\beta( \Psi(f)\Psi^*(h); Q_{A,2}(u_1);\dots; Q_{A,2}(u_n))
\]
where $\omega^\beta_C{}$ are the connected correlation function of the state \eqref{eq:omegabeta} and $\beta B_n=(0,\beta)^n$.
In particular, it holds that 
the linear contribution in $Q_A$ has no influence in $\Omega$.
Furthermore,  
\begin{equation}\label{eq:truncated-two-point}
\langle f, \Omega h\rangle = 
\langle f,\Omega_0 h\rangle
+\sum_{n\geq 1} (-1)^n \int_{(0,\beta)^n} \di^n u\,\langle f, \Delta^\beta(u_1)A(u_1)\Delta^\beta(u_2-u_1)\dots \Delta^\beta(u_{n}-u_{n-1}) A(u_n)\Delta^\beta(-u_n) h\rangle 
\end{equation}
where $\Delta^\beta$ is the thermal propagator given in \eqref{eq:thermal-propagator} and $(A(u)f)(x)=A(x,u)f(x)$ for any $f$ in the one particle Hilbert space.
\end{proposition}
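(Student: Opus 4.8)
The plan is to start from the cumulant expansion \eqref{eq:evomegaA} of $\omega^{\beta A}$, insert the splitting $Q_A=Q_{A,1}+Q_{A,2}$ from \eqref{eq:QA}, and exploit the quasi-freeness of $\omega^\beta$. Concretely, I would expand each connected correlation $\omega_C^\beta(\,\cdot\,;Q_A(u_1);\dots;Q_A(u_m))$ by means of Proposition \ref{pr:connected-simple-graphs}, so that it becomes a sum over simple connected graphs whose edges are the two-point contractions $e^{D_{ij}}-1$ and whose vertices are the inserted fields. The only bookkeeping that matters is the number of legs at each vertex: $\Psi(f)$, $\Psi^*(h)$ and $Q_{A,1}$ carry a single field each, while $Q_{A,2}$ carries two. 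Since every vertex has degree at most two, each connected graph is forced to be either a simple path or a simple cycle.

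Next I would carry out the topological classification. For $\omega^{\beta A}(\Psi(f))$ the external leg $\Psi(f)$ sits at one end of a path; the opposite end must be the only other admissible degree-one vertex, namely a single $Q_{A,1}$ (contracting through its $\phi_0\Psi^*$ part), while all interior vertices are $Q_{A,2}$. This is precisely the chain recorded in the first two displayed formulas, and the argument for $\omega^{\beta A}(\Psi^*(h))$ is identical with the roles of $\Psi$ and $\Psi^*$ exchanged. For the truncated two-point function it is cleanest to write $\langle f,\Omega h\rangle$ as the fully connected cumulant with $\Psi(f)$ and $\Psi^*(h)$ as separate entries; then both are degree-one and are therefore forced to be the two endpoints of the path, leaving no room for any further degree-one vertex. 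Hence no $Q_{A,1}$ can occur, which is exactly the claim that the linear part of $Q_A$ does not influence $\Omega$. Reading this same path as a single cycle through the composite vertex $\Psi(f)\Psi^*(h)$ reproduces the stated expansion in terms of $\omega_C^\beta(\Psi(f)\Psi^*(h);Q_{A,2}(u_1);\dots)$, with the $n=0$ contribution equal to $\langle f,\Omega_0 h\rangle$.

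The combinatorial factors would then be fixed as follows. Because all the $Q_A(u_i)$ in \eqref{eq:evomegaA} are identical, the integrand is symmetric in $u_1,\dots,u_m$, so $\int_{\beta S_m}=\frac{1}{m!}\int_{(0,\beta)^m}$. Expanding $Q_A=Q_{A,1}+Q_{A,2}$ under the symmetric cube integral, a configuration with one $Q_{A,1}$ and $n=m-1$ copies of $Q_{A,2}$ is produced $\binom{m}{1}=m$ times; this turns the prefactor $(-1)^m/m!$ into $(-1)^{n+1}/n!$, accounting for both the overall minus sign and the $1/n!$ in the one-point formulas, while for the two-point function only the all-$Q_{A,2}$ term survives and gives $(-1)^n/n!$ over $(0,\beta)^n$.

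Finally I would resum to the Dyson form \eqref{eq:truncated-two-point}. Evaluating the chain contractions in the Euclidean representation $\mathcal{E}_\beta$ through $\mathcal{T}_\beta$, each edge becomes a thermal propagator $\Delta^\beta$ from \eqref{eq:thermal-propagator} and each $Q_{A,2}(u_i)$ inserts multiplication by $A(u_i)$. The connected cumulant with $n$ identical $Q_{A,2}$ insertions is the sum over the $n!$ chain orderings, and pairing this with the $1/n!$ from the cube integral leaves the single ordered product $\langle f,\Delta^\beta(u_1)A(u_1)\cdots A(u_n)\Delta^\beta(-u_n)h\rangle$ integrated over $(0,\beta)^n$; the closing factor is $\Delta^\beta(-u_n)=\Delta^\beta(\beta-u_n)$ by the periodicity in \eqref{eq:thermal-propagator}, and the $n=0$ term is $\langle f,\Omega_0 h\rangle$ since $\lim_{u\to0^+}\Delta^\beta(u)=B_-=\Omega_0$. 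I expect the main obstacle to be exactly this last resummation: one must match the imaginary-time ordering implicit in the simplex $\beta S_n$ to the periodic, time-ordered propagator $\Delta^\beta$, so that the ordered operator product together with the wrap-around factor emerges correctly. This is where the analytic continuation and the KMS/periodicity of the free correlation functions (used to define the integrands on $(0,\beta)^n$) do the real work, and where the combinatorial signs and factors are most easily mishandled.
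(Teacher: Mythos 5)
Your proposal is correct and follows essentially the same route as the paper: both start from the cumulant expansion \eqref{eq:evomegaA}, symmetrize the simplex integrals to $(0,\beta)^n$ to extract the $1/n!$, invoke quasi-freeness via Proposition \ref{pr:connected-simple-graphs} to reduce to connected graphs, classify them by the degree of each vertex (forcing exactly one $Q_{A,1}$ in the one-point functions and none in the truncated two-point function), and resum the $n!$ chain orderings against the $1/n!$ using the periodicity of $\Delta^\beta$ to reach the Dyson form \eqref{eq:truncated-two-point}. Your explicit bookkeeping of the factor $\binom{m}{1}=m$ converting $(-1)^m/m!$ into $(-1)^{n+1}/n!$ is a slightly more detailed rendering of a step the paper leaves implicit, but the argument is the same.
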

\begin{proof}
The expectation values in the state $\omega^{\beta A}$ can be given in terms of connected correlation functions as in equation \eqref{eq:evomegaA} where integrals over the $n$ dimensional simplex of edge $\beta$, defined as $\beta S_n=\{u| 0<u_1<\dots<u_n<\beta \}$, appear.
The extension of these integrals from $\beta S_n$, to $\beta B_n = (0,\beta)^n$ is done keeping the connected correlation function symmetric in the exchange of arguments similar to \cite{JoaoNicNic}. In particular, it holds that for every $F, F_i\in\mathcal{F}$
\[
\int_{\beta S_n} \di^n u\, \, \omega^\beta_C (F,\tau_{\mathrm{i} u_1}F_1(u_1),\dots ,\tau_{\mathrm{i} u_n}F_n(u_n) )
=
\frac{1}{n!}
\int_{\beta B_n} \di^n u\, \, \omega^\beta_C (F,\tau_{\mathrm{i} u_1}F_1(u_1),\dots ,\tau_{\mathrm{i} u_n}F_n(u_n) ).
\]

The state $\omega^\beta$ is quasi free, hence, the corresponding connected correlation functions admit an expansion in terms of a sum over  connected graphs whose edges correspond to the two-point function of the free theory (the operator $B_{\pm}$) and the vertices correspond to the various arguments in the connected correlation functions. This follows from the result of Proposition \ref{pr:connected-simple-graphs} considering the further expansion of $(e^{D_{ij}}-1)$ in powers of $D_{ij}$. With this in mind, each edge replaces two fields $\Psi$ and $\Psi^*$ present in different vertices and no fields $\Psi$ or $\Psi^*$ can remain in the final expressions of $\omega^{\beta A}(\Psi(f))$,  $\omega^{\beta A}(\Psi^*(h))$ and $\omega^{\beta A}(\Psi(f)\Psi^*(h))$.

The graphs in this expansion are connected, and thus there are simplifications in the expansion of 
$\omega^{\beta A}(\Psi(f))$ and 
$\omega^{\beta A}(\Psi^*(h))$
in terms of a sum over connected correlation functions evaluated at 
$(\Psi(f);\tau_{\mathrm{i}u_1}Q_{A}(u_1);\dots \tau_{\mathrm{i}u_n}Q_A(u_n))$ or 
$(\Psi^*(h);\tau_{\mathrm{i}u_1}Q_{A}(u_1);\dots \tau_{\mathrm{i}u_n}Q_A(u_n))$, and the subsequence decomposition of the entries  $Q_A=Q_{A,1}+Q_{A,2}$. In particular, the factor $Q_{A,1}$ appears one and only one time in each connected correlation function and the remaining factors need to be of the form $Q_{A,2}$.
With this considerations, we thus get the expressions $\omega^{\beta A}(\Psi(f))$ and $\omega^{\beta A}(\Psi^* (h))$ of the thesis. 

Similarly, in the expansion of 
$\omega^{\beta A}(\Psi(f)\Psi^*(h))$ as a sum over connected correlation functions and the subsequent decomposition of 
$Q_{A}$ in $Q_{A,1}+Q_{A,2}$, 
$Q_{A,1}$ appears either  two-times and the remaining factors are again all $Q_{A,2}$ 
or no elements corresponding to $Q_{A,1}$ are present. The elements where two factors $Q_{A,1}$ are considered are taken into account in $\omega^{\beta A}(\Psi(f))\omega^{\beta A}(\Psi^{*}(h))$. The remaining terms, involving $Q_{A,2}$ only, are then summed up in the truncated two-point function, hence to $\langle f, \Omega h\rangle$.




We finally analyze directly the expansion of the connected correlation functions in the expansion of $\omega^{\beta A}(\Psi(f)\Psi^*(h))$ which involves only $Q_{A,2}$. This correspond to the expansion of the operator $\Omega$ given in the thesis. To prove this we start again by Proposition \ref{pr:connected-simple-graphs} observing that in the expansion of
$
\omega_C^\beta( \Psi(f)\Psi^*(h); \tau_{\mathrm{i}u_1}Q_{A,2}(u_1);\dots; \tau_{\mathrm{i}u_n}Q_{A,2}(u_n))
$
in terms of connected graphs, the edges correspond to the thermal propagators given in Definition \ref{de:thermal-propagator} and the vertices to $A$. Furthermore, there are $n!$ terms (corresponding to the possible reordering of the $Q_{A,2}$) and each term is formed by $n+1$ vertices and $n+1$ edges. Once integrated in $(u_1,\dots u_n)$ over $(0,\beta)^n$ and exploiting the periodicity of the thermal propagators, all these terms are equal.
This furnishes the expression of $\Omega$ given in the thesis. 
\end{proof}
In the previous proposition, specifically in equation \eqref{eq:truncated-two-point}, we have seen that the series defining the operator $\Omega$ is actually a Dyson series written in terms of the thermal propagator $\Delta^{\beta}$, and $A$ acts as an external potential. We thus study the property of the operator $\Omega$ in the next proposition. Notice that similar results for the case of Fermi fields have been recently obtained in \cite{BruFrePin25} with the caveat that there $A$ is constant in $u$.

\begin{proposition}\label{pr:thermal-propagator}
Consider the thermal propagator $\Delta^\beta$ of the free theory, which is a linear operator on $\mathcal{H}\times L^2(0,\beta)$ that  is the fundamental solution of the equation
\[
\left(\frac{\partial}{\partial u} + K\right)\psi=0,
\]
periodic in $u$ and such that 
\[
\lim_{u\to0^+}\Delta^\beta(u) = \Omega_0.
\]

Consider $A(x,u)$ a function of $u$ and $x$ (smooth and compactly supported in $x$) such that $\beta \|A\|_\infty < 1$. The thermal propagator with the influence of the external potential ${V}_A$ is denoted by $\Delta^{\beta A}$ and it is the fundamental solution of
\[
\left(\frac{\partial}{\partial u} + K + A(u)\right)\psi=0,
\]
obtained as the Dyson series
\begin{equation}
\label{eq:thermal-interacting}
\Delta^{\beta A} = \sum_{n\geq 0} (-1)^n (\Delta^\beta A)^n \Delta^\beta.
\end{equation}
Its integral kernel can be used to reconstruct the truncated two-point function given in equation \eqref{eq:truncated-two-point} 
\[
\Omega = \lim_{u\to 0^{+}}\Delta^{\beta A} (0,u). 
\]
and it is equal to 
\begin{equation}\label{eq:DeltaO}
\Delta^{\beta A}(\overline{u},u)
=\mathcal{O}_1(\overline{u},u)
\end{equation}
where, for $u,\overline{u} \in [0,\beta)$ 
\begin{align*}
\mathcal{O}_{\lambda}(\overline{u},u) 
&=
\overline{\mathcal{T}}e^{-\int_{\overline{u}}^u (K+\lambda A(s))\di s}
\theta(u-\overline{u})
+
\overline{\mathcal{T}}e^{-\int_{\overline{u}}^\beta (K+\lambda A(s))\di s}
(1-\overline{\mathcal{T}}e^{-\int_0^\beta (K+\lambda A(s))\di s})^{-1 }
\overline{\mathcal{T}}e^{-\int_0^u (K+\lambda A(s))\di s}
\\
&=
\overline{\mathcal{T}}e^{-\int_{\overline{u}}^u (K+\lambda A(s))\di s}
\theta(u-\overline{u})
+
\sum_{n\geq 1}
\overline{\mathcal{T}}e^{-\int_{\overline{u}}^{n\beta+u} (K+\lambda A(s))\di s},
\end{align*}
in which $\overline{\mathcal{T}}$ denotes the anti time ordered product with respect to $u$. $\mathcal{O}_{\lambda}$ is thus anti time ordered exponential of $K+\lambda A$.
\end{proposition}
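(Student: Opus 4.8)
The plan is to characterise $\Delta^{\beta A}$ intrinsically as the unique $\beta$-periodic Green's function of $\partial_u + K + A(u)$ on the circle $[0,\beta]$, and then to verify that both the Dyson series \eqref{eq:thermal-interacting} and the closed expression $\mathcal{O}_1$ meet this characterisation, so that they must agree. First I would record the properties of the free propagator: differentiating $\Delta^\beta(u)=e^{-uK}/(1-e^{-\beta K})$ gives $(\partial_u+K)\Delta^\beta(u-\bar u)=0$ for $u\neq\bar u\pmod\beta$, the function is $\beta$-periodic, and its one-sided limits at coincidence differ by the identity, $\lim_{u\to\bar u^+}\Delta^\beta(u-\bar u)-\lim_{u\to\bar u^-}\Delta^\beta(u-\bar u)=(1-e^{-\beta K})\Omega_0=\mathbb{1}$, with $\lim_{u\to 0^+}\Delta^\beta(u)=\Omega_0$. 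Hence $\Delta^\beta$ is exactly the right inverse of $\partial_u+K$ on $\beta$-periodic functions, i.e. $(\partial_u+K)\Delta^\beta(\cdot-\bar u)=\mathbb{1}\,\delta(\cdot-\bar u)$ on the circle.

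Next I would treat the Dyson series. Writing $(\Delta^\beta)^{-1}=\partial_u+K$ on the circle, the resolvent identity $(\mathbb{1}+\Delta^\beta A)\Delta^{\beta A}=\Delta^\beta$ gives the Neumann series $\Delta^{\beta A}=(\mathbb{1}+\Delta^\beta A)^{-1}\Delta^\beta=\sum_{n\ge 0}(-1)^n(\Delta^\beta A)^n\Delta^\beta$, which is \eqref{eq:thermal-interacting}; differentiating term by term confirms that it solves $(\partial_u+K+A(u))\Delta^{\beta A}=\mathbb{1}$ with the same periodic boundary data. Convergence follows from $\|\Delta^\beta A\|<1$: the spectral bound $\sigma(K)\subset[\epsilon,\infty)$ makes $\Delta^\beta$ bounded (in the Matsubara representation it acts by $1/(i\omega_n+\lambda)$ with $\lambda\ge\epsilon$, so $\|\Delta^\beta\|=1/\epsilon$), while $A$ acts by multiplication with norm $\|A\|_\infty$.

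It then remains to identify $\mathcal{O}_1$ with this Green's function. For $u\neq\bar u$ both summands of $\mathcal{O}_\lambda(\bar u,u)$ are built from the homogeneous propagator $\overline{\mathcal{T}}e^{-\int(K+\lambda A)}$ and thus solve $(\partial_u+K+\lambda A(u))\mathcal{O}_\lambda=0$ off the diagonal; the only discontinuity is carried by $\theta(u-\bar u)$, and since the second summand is continuous in $u$ the jump at $u=\bar u$ equals $\overline{\mathcal{T}}e^{-\int_{\bar u}^{\bar u}(K+\lambda A)}=\mathbb{1}$, reproducing the unit jump and the periodicity. Setting $\bar u=0$ and letting $u\to 0^+$ collapses the two terms,
\[
\lim_{u\to 0^+}\mathcal{O}_1(0,u)=\mathbb{1}+P(\mathbb{1}-P)^{-1}=(\mathbb{1}-P)^{-1},\qquad P:=\overline{\mathcal{T}}e^{-\int_0^\beta(K+A)},
\]
which is the announced $\Omega=(\mathbb{1}-\overline{\mathcal{T}}e^{-\int_0^\beta(K+A)})^{-1}$ and specialises to $\Omega_0=(1-e^{-\beta K})^{-1}$ for $A=0$. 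Expanding $(\mathbb{1}-P)^{-1}=\sum_{m\ge0}P^m$ in the second summand and gluing exponentials through the composition law $\overline{\mathcal{T}}e^{-\int_a^b}\,\overline{\mathcal{T}}e^{-\int_b^c}=\overline{\mathcal{T}}e^{-\int_a^c}$ together with the $\beta$-periodicity of $A$ produces $\sum_{n\ge1}\overline{\mathcal{T}}e^{-\int_{\bar u}^{n\beta+u}(K+\lambda A)}$, i.e. the second displayed form of $\mathcal{O}_\lambda$.

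The delicate point is the analytic control behind these manipulations: I must ensure that $\mathbb{1}-P$ is invertible, so that $\mathcal{O}_\lambda$ is well defined, and that both the Dyson and the geometric series converge. The Duhamel estimate $\|P\|\le e^{-\beta\epsilon}e^{\beta\|A\|_\infty}$ shows $\|P\|<1$, and hence invertibility, once $\|A\|_\infty$ is controlled relative to $\epsilon=-\tilde{\tilde\mu}$; I therefore expect to use the strict positivity $\sigma(K)\subset[\epsilon,\infty)$ in addition to the smallness $\beta\|A\|_\infty<1$, the latter guaranteeing absolute convergence of the expansion in powers of $A$. Uniqueness of the periodic Green's function (two candidates differ by a periodic homogeneous solution, which must vanish when $\mathbb{1}-P$ is invertible) then forces $\Delta^{\beta A}=\mathcal{O}_1$, with the order-by-order agreement of the two $A$-expansions as an independent check. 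I expect this convergence and invertibility bookkeeping, rather than the algebraic identities, to be the main obstacle.
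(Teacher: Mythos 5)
Your proof is correct in its essential structure, but it takes a genuinely different route from the paper. The paper never invokes uniqueness of the periodic Green's function: it first derives the Dyson series \eqref{eq:thermal-interacting} directly from the expansion \eqref{eq:truncated-two-point} using periodicity of $\Delta^\beta$, and then identifies $\Delta^{\beta A}$ with $\mathcal{O}_1$ by a \emph{parametric} argument in the coupling $\lambda$: it computes $\frac{d}{d\lambda}\mathcal{O}_\lambda(\overline{u},u)=-\int_0^\beta \di s_1\,\mathcal{O}_\lambda(\overline{u},s_1)A(s_1)\mathcal{O}_\lambda(s_1,u)$, observes that the $n$-th $\lambda$-derivatives at $\lambda=0$ reproduce ($n!$ times) the coefficients of the Dyson expansion of $\Delta^{\beta,\lambda A}$, and concludes equality from absolute convergence of the power series. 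Your argument instead characterises both objects as the unique $\beta$-periodic fundamental solution of $\partial_u+K+A(u)$ (unit jump at coincidence, invertibility of $\mathbb{1}-P$ with $P=\overline{\mathcal{T}}e^{-\int_0^\beta(K+A)}$), which buys a cleaner conceptual statement and makes the two displayed forms of $\mathcal{O}_\lambda$ and the limit $\Omega=(\mathbb{1}-P)^{-1}$ fall out of the geometric series $\sum_m P^m$; the paper's route avoids any uniqueness discussion and stays entirely at the level of order-by-order comparison, which meshes more directly with the rest of the paper where everything is organised as expansions in powers of $A$. One caveat, which you flag honestly and which is present in the paper as well: the stated hypothesis $\beta\|A\|_\infty<1$ does not by itself give $\|P\|<1$ or summability of the Neumann series; your Duhamel bound $\|P\|\le e^{-\beta\epsilon}e^{\beta\|A\|_\infty}$ shows that what is really used is smallness of $\|A\|_\infty$ relative to the spectral gap $\epsilon=-\tilde{\tilde{\mu}}$, so your version at least makes the required condition explicit rather than leaving it implicit in the phrase ``the power expansion is absolutely convergent.''
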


\begin{proof}

Consider the thermal propagator $\Delta^\beta(u)$
given in Definition \ref{de:thermal-propagator} and the corresponding covariance which promotes $\Delta^\beta(u)$ to an operator on $\mathcal{H}\times L^2(0,\beta)$ as in \eqref{eq:covariance} and it is such that
%
\[
\Delta^{\beta}\psi(x,u) = \int\di\overline{u} \Delta^{\beta}(\overline{u}-u)\psi(x,\overline{u}). 
\]
By direct inspection an inverse of this operator is $\partial_u + K$ and furthermore, the limit of its integral kernel $\Delta^\beta(u)$ for $u\to0^+$ is $\Omega_0$. 
Exploiting the periodicity of the thermal propagator, as in Proposition \ref{prop:one-point-two-point}, and considering equation \eqref{eq:truncated-two-point} we also have that 
\begin{align*}
\Omega 
&= \Delta^\beta(0) + 
\sum_{n\geq 1}(-1)^n
\int_{\beta {B}_n}\di^n u\,
\Delta^\beta(u_1) A(u_1)
\Delta^\beta(u_2-u_1)
A(u_2)
\dots 
\Delta^\beta(u_n-u_{n-1}) A(u_n) \Delta^\beta(-u_n).
\end{align*}
We may thus introduce the interacting thermal propagator as the result of the Dyson series
\begin{align*}
\Delta^{\beta A}(\overline{u},u)
=&
\Delta^\beta(u-\overline{u})  \\
+&\sum_{n\geq 1}(-1)^n
\int_{\beta {B}_n}\di^n u\,
\Delta^\beta(u_1-\overline{u}) A(u_1)
\Delta^\beta(u_2-u_1)
A(u_2)
\dots 
\Delta^\beta(u_n-u_{n-1}) A(u_n) \Delta^\beta(u-u_n).
\end{align*}
In the sense of  operators on functions in $\Sigma \times [0,\beta]$,  interpreting $A$ as a multiplicative operator, this is 
\begin{equation}
\label{eq:thermal-interacting}
\Delta^{\beta A} = \sum_{n\geq 0} (-1)^n (\Delta^\beta A)^n \Delta^\beta
\end{equation}
Comparing with the expansion of the truncated two-point function given in equation \eqref{eq:truncated-two-point} we have that 
\[
\Omega = \lim_{u\to 0^{+}}\Delta^{\beta A} (0,u). 
\]
We now discuss an expansion of this operator in terms of anti time ordered products.
Consider now $\mathcal{O}_\lambda(\overline{u},u)$.
we recall that
\begin{align*}
\mathcal{O}_{\lambda}(\overline{u},u) 
:=
\sum_{n\geq 0}
\overline{\mathcal{T}}
e^{-\int_{\overline{u}}^{u+n\beta}(K+\lambda A(s))\di s}.
\end{align*}
Exploiting the periodicity of $A(s)$, we have that 
\begin{align*}
\mathcal{O}_{\lambda}(0,u) 
=
(1-\overline{\mathcal{T}}e^{-\int_0^\beta (K+\lambda A(s))\di s})^{-1 }
\overline{\mathcal{T}}e^{-\int_0^u (K+\lambda A(s))\di s}.
\end{align*}
For non vanishing $\overline{u}$ with $u,\overline{u} \in [0,\beta]$, we have
\begin{align*}
\mathcal{O}_{\lambda}(\overline{u},u) 
=
\overline{\mathcal{T}}e^{-\int_{\overline{u}}^u (K+\lambda A(s))\di s}
\theta(u-\overline{u})
+
\overline{\mathcal{T}}e^{-\int_{\overline{u}}^\beta (K+\lambda A(s))\di s}
(1-\overline{\mathcal{T}}e^{-\int_0^\beta (K+\lambda A(s))\di s})^{-1 }
\overline{\mathcal{T}}e^{-\int_0^u (K+\lambda A(s))\di s}.
\end{align*}
Deriving with respect to $\lambda$ , for non vanishing $\overline{u}$,  
we obtain after a straightforward computation
\begin{align*}
\frac{d}{d\lambda} \mathcal{O}_\lambda(\overline{u},u)
&=
-\int_0^\beta \di s_1 
\mathcal{O}_\lambda(\overline{u},s_1)
A(s_1)
\mathcal{O}_\lambda(s_1,u)
\end{align*}
Computing the $n$-th order derivative we get ($n!$ times) the coefficients of the power expansion around $\lambda=0$ of $\mathcal{O}_\lambda(\overline{u},u)$. 
This coefficients coincides with  the contributions of the expansion of $\Delta^{\beta, \lambda A}(u,\overline{u})$.
We have proven that 
the  operator $\mathcal{O}_\lambda (\overline{u},u)$ has the same power expansion in terms of powers of $\lambda$ as $\Delta^{\beta \lambda A} (u,\overline{u})$. The power expansion is absolutely convergent and thus they coincide.
\end{proof}
We conclude this section, observing that in the limit of vanishing background $\phi_0\to0$, $\Phi=\Psi$, the truncated two-point function are equal to $\omega^{\beta A} (\Phi(f)\Phi^*(h)) = \langle f,\Omega h \rangle$ and $\omega^{\beta A}$ is quasi free.

    


\subsection{Relative partition function and relative entropy of the auxiliary theory}

We recall that, according to equation \eqref{eq:relative-entropy-basic}, to compute the relative entropy we have to evaluate
\begin{equation}\label{eq:rel-entropy}
\begin{aligned}
\mathcal{S}_{\phi_0}
&= 
\log{} 
(\omega^\beta
(U_A^{\phi_0}(i\beta))+\int_0^\beta \omega^\beta(\tau_{\mathrm{i}u }Q_A(u)) \di u
 \\
&= \sum_{n>1}(-1)^n
\int_{\beta S_n} \di^n u\, \omega^\beta_C(\tau_{\mathrm{i}u_1}Q_A(u_1),\dots, \tau_{\mathrm{i}u_n}Q_A(u_n))
\\
&= 
\sum_{n>1}(-1)^n\frac{1}{n!}
\int_{\beta {B}_n} \di^n u\, \omega^\beta_C(\tau_{\mathrm{i}u_1}Q_A(u_1),\dots, \tau_{\mathrm{i}u_n}Q_A(u_n))
\end{aligned}\end{equation}
the extension of the integrals from $\beta S_n$, the $n$ dimensional simplex of edge $\beta$,  to $\beta B_n = (0,\beta)^n$ is done by symmetry arguments \cite{JoaoNicNic}. 
Furthermore here $Q_A$ was given in \eqref{eq:QA}, we notice that this corresponds to compute a trace.
We have actually the following proposition

\begin{proposition}\label{prop:rel-entropy-0}
The relative entropy of $\omega^{\beta A}$ relative to $\omega^\beta$ as states on $\mathcal{F}$ and given in \eqref{eq:rel-entropy} for the case $\phi_0=0$ can be expanded as
\begin{equation}\label{eq:rel-entropy-S0}
\begin{aligned}
\mathcal{S}_0
&=
 \Tr_{u} (\sum_{n>1}\frac{(-1)^n}{n} (A \Delta^{\beta})^n \chi)
\\
&=
\Tr_{u} (A \int_0^1 \di \lambda  (\Delta^\beta-\Delta^{\beta, \lambda A})\chi).
\end{aligned}
\end{equation}
Where $A$ is a multiplicative operator of compact spatial support, $\chi$ is a cutoff function which is $1$ on the support of $A$ and $\Tr_{u}$ is for the  trace on $L^{2}(\Sigma \times [0,\beta])$.
\end{proposition}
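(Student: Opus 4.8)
The plan is to start from the connected-correlation expansion of the relative entropy recorded in \eqref{eq:rel-entropy}, specialised to $\phi_0=0$. In this case $Q_A=Q_{A,2}=\int_\Sigma A\,|\Psi|^2$ is purely quadratic in the fluctuation field, so each insertion $\tau_{\mathrm{i}u_j}Q_{A,2}(u_j)$ carries exactly one $\Psi$ and one $\Psi^*$. I would then invoke Proposition \ref{pr:connected-simple-graphs} to write each $\omega^\beta_C(\tau_{\mathrm{i}u_1}Q_{A,2}(u_1);\dots;\tau_{\mathrm{i}u_n}Q_{A,2}(u_n))$ as a sum over simple connected graphs whose edges carry thermal propagators $\Delta^\beta$ and whose vertices carry the factors $A(u_j)$. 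Because every vertex is quadratic, each vertex has degree exactly two, and the normal ordering of $|\Psi|^2$ forbids self-contractions; hence the only connected graphs that survive are single closed loops through all $n$ vertices.

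First I would carry out the combinatorics. The relevant contractions pair each $\Psi_i$ with a $\Psi^*_{\sigma(i)}$ for a permutation $\sigma$, and connectedness forces $\sigma$ to be a single $n$-cycle, of which there are $(n-1)!$. After integrating the imaginary times over $\beta B_n=(0,\beta)^n$ and exploiting the periodicity of $\Delta^\beta$ fixed in Definition \ref{de:thermal-propagator}, all these cyclic contractions yield the same quantity: conjugating one $n$-cycle into another is merely a relabelling of the integration variables, and the chained propagators close up into the trace $\Tr_u\big((A\Delta^\beta)^n\big)$ on $L^2(\Sigma\times[0,\beta])$, the compositions of $\Delta^\beta$ with the multiplication operator $A$ being exactly those of the Dyson series of Proposition \ref{pr:thermal-propagator}. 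Inserting the cutoff $\chi$, which equals $1$ on $\supp{A}$ and hence satisfies $\chi A=A$, leaves the value unchanged by cyclicity while ensuring $(A\Delta^\beta)^n\chi$ is trace class. Collecting the combinatorial factor with the prefactor in \eqref{eq:rel-entropy} produces $\frac{(-1)^n}{n!}\,(n-1)!=\frac{(-1)^n}{n}$ at order $n$, which is precisely the first line of \eqref{eq:rel-entropy-S0}.

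For the second equality I would simply resum. Using the Dyson series \eqref{eq:thermal-interacting}, one has $\Delta^\beta-\Delta^{\beta,\lambda A}=\sum_{m\geq 1}(-1)^{m+1}\lambda^m(\Delta^\beta A)^m\Delta^\beta$, so that $A(\Delta^\beta-\Delta^{\beta,\lambda A})=\sum_{m\geq 1}(-1)^{m+1}\lambda^m(A\Delta^\beta)^{m+1}$. Integrating over $\lambda\in[0,1]$ supplies the factor $\int_0^1\lambda^m\,\di\lambda=\tfrac{1}{m+1}$, and after the reindexing $n=m+1$ the series becomes $\sum_{n\geq 2}\frac{(-1)^n}{n}(A\Delta^\beta)^n$; inserting $\chi$ and taking $\Tr_u$ reproduces the first line, closing the argument.

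The main obstacle I anticipate is not the algebra but the two points that make the formal loop sum rigorous: first, establishing that every $n$-cycle contributes the identical trace, which requires care with the directed contractions $\Psi\leftrightarrow\Psi^*$ (so that no spurious symmetry factor of $2$ appears and the coefficient is genuinely $1/n$) and with the periodic identification of the imaginary-time variable, so that the chained propagators literally close into $\Tr_u$; and second, the functional-analytic justification of the trace, namely that the cutoff $\chi$ together with the compact support of $A$ and the singularity structure of $\Delta^\beta$ used in the well-posedness proposition render $\Tr_u\big((A\Delta^\beta)^n\chi\big)$ finite for every $n\geq 2$ and the series absolutely convergent when $\beta\|A\|_\infty<1$.
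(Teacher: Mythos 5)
Your proposal is correct and follows essentially the same route as the paper: the graph expansion of the connected correlation functions with purely quadratic vertices reduces to single closed loops, the $(n-1)!$ cyclic orderings against the $1/n!$ prefactor produce the $\tfrac{(-1)^n}{n}$ coefficient, and the second equality is the resummation of the Dyson series \eqref{eq:thermal-interacting} with the $\lambda$-integral supplying $\tfrac{1}{m+1}$. You are somewhat more explicit than the paper about the directed-contraction counting and the $\lambda$-resummation, and you correctly defer the trace-class question, which the paper likewise handles only in the subsequent proposition.
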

\begin{proof}
This result follows directly from the expansion of the connected correlation functions in terms of connected graphs given in \eqref{eq:connnected-correlations-graph-expansion}, combining it  with the results of Proposition \ref{prop:one-point-two-point} valid for the case $\phi_0=0$ ($Q_{A,1}=0$)
and of Proposition \ref{pr:thermal-propagator} and the expansion of the interacting thermal propagator given in 
\eqref{eq:thermal-interacting}.
Notice that, the factor $1/n$ arises because we multiplied the factor $1/n!$ with the number of possible sequences of the first $n$ natural number which starts with $1$. This count to the possible ordering of the elements $\tau_{\mathrm{i}u_j}Q_A(u_j)$ present in the connected correlation functions keeping fixed the first one (at order $n$, this gives a factor $(n-1)!$). 
\end{proof}
Expanding the integrals in $u$, we have
\begin{align}\label{eq:rel-entropyThermalPropagators}
\mathcal{S}_0
&=
 {\Tr} (
 \int_0^{\beta} \di u A(u) \int_0^1 
\di \lambda
(
\Delta^{\beta}(0)
-\Delta^{\beta, \lambda A}(u,u) )\chi).
\end{align}
where now the trace is on $L^{2}(\Sigma)$ and $\Delta^{\beta, \lambda A}(u,v)$ is the integral kernel of the operator $\Delta^{\beta,\lambda A}$. 
We also observe that $\Delta^{\beta}(v)
-\Delta^{\beta, \lambda A}(u,u+v))$ is continuous in $v$ also near $0$ because $\Delta^{\beta}(0^+)-\Delta^{\beta}(0^-) =1$ and the same holds for $\Delta^{\beta A}(u,u+0^{\pm})$.
%
%
%
%
%
\begin{proposition}
Consider $A$ smooth and compactly supported, let $\chi\in C^\infty_0(\Sigma)$ be equal to $1$ on the support of $A$. The operator valued function  $\mathfrak{S}:u\mapsto A(\cdot,u) (\Delta^\beta(u)- \Delta^{\beta A}(u,u))\chi$ with domain $(0,\beta)$ and with range in the set of bounded linear operators on $L^2(\Sigma)$. 
It holds that for every $0<u<\beta$, if $A$ is sufficiently small $\mathfrak{S}$ is trace class.
\end{proposition}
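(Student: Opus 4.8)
The plan is to reduce $\mathfrak{S}(u)$ to a single imaginary-time integral of a product of two propagators separated by a multiplication operator, and then to prove trace-class membership by distributing the Schatten norms so that the trace-norm burden always falls on the propagator carrying a full heat smoothing. First I would make the difference precise. By the remark preceding the statement the coincidence limits of $\Delta^\beta(v)$ and $\Delta^{\beta A}(u,u+v)$ share the same jump (the identity), so $\Delta^\beta(u)-\Delta^{\beta A}(u,u)$ is to be read as the continuous diagonal value $\lim_{v\to0}\big(\Delta^\beta(v)-\Delta^{\beta A}(u,u+v)\big)$. From the Dyson series \eqref{eq:thermal-interacting} one has, as operators on $\mathcal H\times L^2(0,\beta)$, the identity $\Delta^{\beta A}=\Delta^\beta-\Delta^\beta A\,\Delta^{\beta A}$, whence the difference equals $(\Delta^\beta A\,\Delta^{\beta A})(u,u)$ and
\[
\mathfrak{S}(u)=\int_0^\beta \di w\; A(\cdot,u)\,\Delta^\beta(w-u)\,A(\cdot,w)\,\Delta^{\beta A}(w,u)\,\chi .
\]
The two $n=0$ pieces cancel, and for every $w\neq u$ both propagators are bounded operators on $L^2(\Sigma)$.

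The key structural observation is that on each side of $w=u$ exactly one of the two propagators is regular. For $w>u$ the factor $\Delta^{\beta A}(w,u)=\mathcal O_1(w,u)$ of \eqref{eq:DeltaO} has no $\theta$-contribution and carries the smoothing $\overline{\mathcal T}e^{-\int_0^u(K+A)}$ over the time $u>0$, while $\Delta^\beta(w-u)$ holds the coincidence singularity; for $w<u$ the roles are reversed, $\Delta^\beta(w-u)=\sum_{k\geq1}e^{-(w-u+k\beta)K}$ carrying a smoothing over time $\geq\beta-u>0$ and $\Delta^{\beta A}(w,u)$ carrying the singularity. I would then record two elementary facts. (i) The free propagator is bounded uniformly, $\|\Delta^\beta(t)\|\leq(1-e^{-\beta\epsilon})^{-1}$ for all $t$, since its spatial symbol is decreasing in $p^2$ and maximal at $p=0$ where $K=\epsilon$; and, for $A$ small enough that $\|A\|_\infty<\epsilon$, the factor $(1-\overline{\mathcal T}e^{-\int_0^\beta(K+A)})^{-1}$ in \eqref{eq:DeltaO} is bounded, so $\|\Delta^{\beta A}(w,u)\|$ is bounded uniformly in $w$. (ii) For $f\in C_0^\infty(\Sigma)$ and $s$ bounded below by a positive constant, the one-sidedly localised operators $f(X)e^{-sK}$ and $e^{-sK}f(X)$ are trace class: inserting a weight $\langle x\rangle^N$ with $N>3/2$ and writing $e^{-sK}f=(e^{-sK/2}\langle x\rangle^{-N})(\langle x\rangle^N e^{-sK/2}f)$ exhibits it as a product of two Hilbert–Schmidt operators, by the Gaussian kernel of $e^{-sK/2}$. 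Summing the geometric series in $k$, and the Dyson series for $\mathcal O_1$ (convergent since $\beta\|A\|_\infty<1$), upgrades this to the regular propagators $A(\cdot,w)\Delta^\beta(w-u)$ and $\Delta^{\beta A}(w,u)\chi$, whose trace norms stay bounded on the respective regions because the relevant smoothing time remains $\geq\min(u,\beta-u)>0$.

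With these facts I would estimate the integrand region by region through $\|XY\|_1\leq\|X\|\,\|Y\|_1$. For $w>u$ I group it as $[A(\cdot,u)\Delta^\beta(w-u)A(\cdot,w)]\cdot[\Delta^{\beta A}(w,u)\chi]$ and bound the trace norm by $\|A\|_\infty^2\,\|\Delta^\beta(w-u)\|\cdot\|\Delta^{\beta A}(w,u)\chi\|_1\leq C$; for $w<u$ I group it as $[A(\cdot,u)\Delta^\beta(w-u)]\cdot[A(\cdot,w)\Delta^{\beta A}(w,u)\chi]$ and bound it by $\|A(\cdot,u)\Delta^\beta(w-u)\|_1\cdot\|A\|_\infty\|\Delta^{\beta A}(w,u)\|\,\|\chi\|_\infty\leq C$. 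Both bounds are uniform in $w$, so $\|\mathfrak{S}(u)\|_1\leq\int_0^\beta C\,\di w<\infty$ and $\mathfrak{S}(u)$ is trace class.

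The hard part, and the reason the naive estimate fails, is precisely that one may not place the trace norm on the near-diagonal free propagator: in three dimensions $A(\cdot,u)\,\Delta^\beta(w-u)\,A(\cdot,w)$ has trace norm blowing up like $|w-u|^{-3/2}$ as $w\to u$ (equivalently $f(X)K^{-1}g(X)$ is Hilbert–Schmidt but not trace class on $L^2(\mathbb R^3)$), which is not integrable. The whole point of the argument is that the coincidence singularity survives only in the \emph{operator} norm, which stays bounded, whereas the trace-class property is supplied by the \emph{other} propagator, which on each region carries a smoothing over a time bounded away from zero; correctly identifying which propagator is regular on each side of $w=u$ is the crux of the proof.
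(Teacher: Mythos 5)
Your overall idea is the right one and matches the spirit of the paper's argument: the coincidence singularity only ever costs an operator norm, while the trace norm must be charged to a propagator whose imaginary-time smoothing is bounded away from zero. Your partial resummation $\Delta^\beta-\Delta^{\beta A}=\Delta^\beta A\,\Delta^{\beta A}$, reducing $\mathfrak{S}(u)$ to a single $w$-integral, is a nice reorganisation that the paper does not use. The region $w<u$ is handled correctly: $A(\cdot,u)\Delta^\beta(w-u)$ is a one-sidedly localised free semigroup with time $w-u+\beta\geq\beta-u$, and your Hilbert--Schmidt splitting with the weight $\langle x\rangle^N$ gives its trace norm uniformly.

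The gap is in the region $w>u$, where you bound $\|XY\|_1\leq\|X\|\,\|Y\|_1$ with $Y=\Delta^{\beta A}(w,u)\chi$. The assertion that ``summing the Dyson series for $\mathcal O_1$ upgrades'' fact (ii) to a uniform trace-norm bound on $Y$ is exactly the nontrivial step, and as stated it does not go through. The $n$-th Dyson term of $\Delta^{\beta A}(w,u)\chi$ is a chain $\Delta^\beta(y_1)A(s_1)\cdots A(s_n)\Delta^\beta(y_{n+1})\chi$ in which the individual gaps $y_j$ can all be small except for one of size $\geq u/(n+1)$; if that large gap is $y_1$, the splitting $\sqrt{\Delta^\beta(y_1)}\cdot\sqrt{\Delta^\beta(y_1)}A(s_1)\cdots\chi$ produces one factor, $\sqrt{\Delta^\beta(y_1)}$, with \emph{no} spatial cutoff on its left, which is bounded but not Hilbert--Schmidt; the product is then only Hilbert--Schmidt, not trace class. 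Pointwise Gaussian domination of the Feynman--Kac kernel does not rescue this, since kernel domination controls Hilbert--Schmidt norms but not trace norms. The fix is essentially what the paper does: do not separate the product. Keep $A(\cdot,w)$ attached to the left of $\Delta^{\beta A}(w,u)\chi$ (or, as in the paper, keep the whole chain for $\mathfrak{S}$ together), so that every free propagator is flanked on both sides by compactly supported multipliers; then at the guaranteed large gap $y\geq\beta/(n+1)$ one splits $\chi\Delta^\beta(y)\chi=\chi\sqrt{\Delta^\beta(y)}\cdot\sqrt{\Delta^\beta(y)}\chi$ into two Hilbert--Schmidt factors with $\|\chi\sqrt{\Delta^\beta(y)}\|_{HS}\lesssim y^{-3/4}\|\chi\|_2$, obtaining a trace-norm bound $\lesssim \|A\|_\infty^{n+1}C^n(n+1)^{3/2}\beta^{n-3/2}$ that is summable for $A$ small. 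With that regrouping your argument closes.
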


\begin{proof}
To prove this claim we use a strategy closely related to the proof of Lemma 9 in \cite{BruFrePin25}. We observe in particular that 
%
%
 for every $u$, $\mathfrak{S}(u)$ is a bounded operator, because $A$ is a bounded multiplicative operator and both $\Delta^\beta(u)$ and $\Delta^{\beta A}(u)$ are bounded thanks to the fact that the spectrum of $K$ is strictly positive. 
%
%
%
Notice that for every $u\neq 0$, $u\in (0,\beta]$ we also have that for some constant $C$
\[
\| \chi \sqrt{\Delta^\beta(u)}\|_{HS}=\| \sqrt{\Delta^\beta(u)}\chi\|_{HS} \leq C\frac{1}{{u}^{\frac{3}{4}}} \|\chi\|_2 
\]
because $\chi$ is of compact support and the for ${u}\neq 0$ and thanks to the spectral properties of $K$.

Consider now
\[
\mathfrak{S}
=
\sum_{n\geq 1}(-1)^{n+1}
\int_{\beta {B}_n}\di^n u\,
A(u)\Delta^\beta(\beta+u_1-u) A(u_1)
\Delta^\beta(u_2-u_1)
A(u_2)
\dots 
\Delta^\beta(u_n-u_{n-1}) A(u_n) \Delta^\beta(u-u_n)\chi.
\] 
For every element of the domain $\beta {B}_n$ at least one element $y\in\{\beta+u_1-u, u_2-u_1, \dots , u-u_n \}$
is larger than $\beta/{(n+1)}$. Decomposing the corresponding $\chi\Delta^\beta(u)\chi = \chi\sqrt{\Delta^\beta(u)}\sqrt{\Delta^\beta(u)}\chi$, we get that  for every $n$ the integrand is trace class because it is a product of two Hilbert Schmidt operators and bounded operators. 
The estimates given above permit to take the integrals and the sums if $A$ is sufficiently small. 
\end{proof}

\begin{proposition}\label{pr:rel-entropy-condensate}
The contribution 
due to the condensate to the 
relative entropy of $\omega^{\beta A}$ relative to $\omega^\beta$ as states on $\mathcal{F}$  is 
\[
\mathcal{S}_{\phi_0}-\mathcal{S}_0
= \sum_{n > 0} (-1)^{n+1}    \langle\phi_0 A, ( \Delta^{\beta}A)^{n-1} \Delta^\beta A \phi_0 \rangle_u
\]
where the scalar product is over $L^2( \Sigma\times [0,\beta])$.
In terms of the thermal propagator of the interacting theory, it can be written as
\[
\mathcal{S}_{\phi_0}-\mathcal{S}_0
=    \langle \phi_0 A, \Delta^{\beta A}  A \phi_0\rangle_u
\]
Furthermore $\Delta^{\beta A} $ is invertible and $(\Delta^{\beta A})^{-1} = \frac{\partial}{\partial u} + K +A $ and thus
\begin{align} \label{eq:rel-entropy-condensate}
\mathcal{S}_{\phi_0}-\mathcal{S}_0
&=
\langle  {A}\phi_0, \phi_0\rangle_u
-
\langle  \phi_0, 
K\phi_0\rangle_u
+
\langle  K \phi_0, 
\Delta^{\beta A}
K \phi_0\rangle_u .
\end{align}
\end{proposition}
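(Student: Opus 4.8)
The plan is to prove the three displayed identities in turn: first a diagrammatic evaluation, then a Dyson resummation, and finally an operator rearrangement. I begin by isolating $T_1=\mathcal{S}_{\phi_0}-\mathcal{S}_0$ from the linked–cluster expansion \eqref{eq:relative-entropy-basic}. As recalled just before the statement, the connected graphs contributing to $\mathcal{S}_{\phi_0}$ but not to $\mathcal{S}_0$ are precisely the open chains carrying the linear vertex $Q_{A,1}\sim\phi_0 A(\Psi+\Psi^*)$ at their two extreme points and the quadratic vertex $Q_{A,2}\sim A|\Psi|^2$ at the remaining internal points. This is forced by the graph expansion \eqref{eq:connnected-correlations-graph-expansion} of the quasi-free connected correlation functions: a vertex built from $Q_{A,1}$ carries a single field and can only be reached by one edge, whereas a $Q_{A,2}$ vertex carries two fields and must be reached by exactly two edges, so the only admissible connected topology is a path. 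Reading each edge as a thermal propagator $\Delta^\beta$ and each internal vertex as multiplication by $A$, exactly as in Proposition \ref{prop:one-point-two-point}, a chain with $N=n+1$ vertices evaluates to $\langle \phi_0 A,(\Delta^\beta A)^{n-1}\Delta^\beta A\,\phi_0\rangle_u$, the two factors $\phi_0$ arising at the endpoints. It then remains to fix the multiplicity: the number $\binom{N}{2}$ of ways of choosing the two endpoint slots, the number $(N-2)!$ of orderings of the internal vertices along the chain, and the factor $2$ from the two admissible ways of assigning $\Psi$ and $\Psi^*$ to the two extremes, combine to $\binom{N}{2}(N-2)!\cdot 2=N!$ and so cancel the $1/N!$ coming from symmetrising the simplex integral onto $(0,\beta)^N$; the sign $(-1)^N=(-1)^{n+1}$ survives, giving the first identity.

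The second identity is then immediate. Reindexing $n\mapsto n-1$ and comparing with the Dyson series \eqref{eq:thermal-interacting} for $\Delta^{\beta A}=\sum_{k\geq 0}(-1)^k(\Delta^\beta A)^k\Delta^\beta$ gives $\sum_{n>0}(-1)^{n+1}\langle\phi_0 A,(\Delta^\beta A)^{n-1}\Delta^\beta A\,\phi_0\rangle_u=\langle\phi_0 A,\Delta^{\beta A}A\,\phi_0\rangle_u$, the series being absolutely convergent under the hypothesis $\beta\|A\|_\infty<1$ of Proposition \ref{pr:thermal-propagator}.

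For the third identity I would exploit the operator inverse $(\Delta^{\beta A})^{-1}=\partial_u+K+A$ given in the statement, writing $G:=\Delta^{\beta A}$. Since the background $\phi_0$ is independent of the Euclidean time $u$, one has $\partial_u\phi_0=0$ and hence $A\phi_0=G^{-1}\phi_0-K\phi_0$. Substituting this on the right of $\langle\phi_0 A,G A\phi_0\rangle_u$ and using $GG^{-1}=1$ yields $\langle A\phi_0,\phi_0\rangle_u-\langle A\phi_0,GK\phi_0\rangle_u$. In the second term I substitute $A\phi_0=G^{-1}\phi_0-K\phi_0$ once more, now in the left entry; the contribution $\langle G^{-1}\phi_0,GK\phi_0\rangle_u$ reduces to $\langle\phi_0,K\phi_0\rangle_u$, because $(K+A)G=1-\partial_u G$ and, $\partial_u$ being skew-adjoint on the $u$-circle with $\partial_u\phi_0=0$, the term $\langle\phi_0,\partial_u G\,K\phi_0\rangle_u$ vanishes. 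Collecting the pieces produces exactly $\langle A\phi_0,\phi_0\rangle_u-\langle\phi_0,K\phi_0\rangle_u+\langle K\phi_0,\Delta^{\beta A}K\phi_0\rangle_u$, which is \eqref{eq:rel-entropy-condensate}.

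The main obstacle I expect is the combinatorial bookkeeping of the first step: one must check carefully that the endpoint and internal multiplicities together with the two field assignments at the extremes cancel the $1/N!$ precisely, leaving the clean coefficient $(-1)^{n+1}$, and that no chain is over- or under-counted in passing from the ordered simplex $\beta S_n$ to the symmetric cube $(0,\beta)^n$. A secondary point requiring care is that, since $\phi_0$ is constant on all of $\mathbb{R}^3$, the separated terms $\langle\phi_0,K\phi_0\rangle_u$ and $\langle K\phi_0,\Delta^{\beta A}K\phi_0\rangle_u$ are only meaningful with the spatial cutoff $g$ and the regulator $\epsilon=-\tilde{\tilde\mu}$ in place; the operator rearrangement in the last step should therefore be read at the level of the localised, finite quantity $\langle\phi_0 A,\Delta^{\beta A}A\,\phi_0\rangle_u$, with the splitting into three terms understood accordingly and the genuinely convergent statements (and the $\epsilon\to0$, $g\to1$ limits) postponed to the subsequent analysis.
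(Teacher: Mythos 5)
Your proof is correct and follows essentially the same route as the paper: the first identity from the connected-graph (linked-cluster) expansion restricted to the open chains with $Q_{A,1}$ endpoints, the second by resumming the Dyson series \eqref{eq:thermal-interacting}, and the third by substituting $(\Delta^{\beta A})^{-1}=\partial_u+K+A$ and using that $\phi_0$ is constant in $u$. You merely spell out the endpoint/ordering combinatorics and the skew-adjointness of $\partial_u$ more explicitly than the paper, which leaves these steps implicit.
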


\begin{proof}
The first expression follows directly from the expansion of $\log\omega^\beta(U_A(i\beta))$ in terms of connected correlation functions given in \eqref{eq:logZA} after discarding the zeroth order contribution.
The second expression for $\mathcal{S}_{\phi_0}-\mathcal{S}_0$ can be obtained owning the form of $\omega^{\beta A}$ given in \eqref{eq:thermal-interacting}.
Finally, we observe by direct inspection that 
\[
{\Delta^{\beta A}}^{-1} = \frac{\partial}{\partial u} + K +A
\]
hence, 
\begin{align*}
\mathcal{S}_{\phi_0}-\mathcal{S}_0
&=
\langle  {A}\phi_0, \phi_0\rangle_u
- 
\langle  A \phi_0, 
\Delta^{\beta A} (\frac{\partial}{\partial u}+K)
\phi_0\rangle_u
\\
&= 
\langle  {A}\phi_0, \phi_0\rangle_u
-
\langle  \phi_0, 
(\frac{\partial}{\partial u}+K)\phi_0\rangle_u
+
\langle  (\frac{\partial}{\partial u}+K)\phi_0, 
\Delta^{\beta A}
(\frac{\partial}{\partial u}+K)\phi_0\rangle_u 
\end{align*}
where we recall that the derivative in $u$ of $\Delta^{\beta A}$ is taken in the distributional sense. 
Finally, observing that $\phi_0$ is constant in $u$ we get the thesis.
\end{proof}
If $\phi_0$ is a stationary point of the classical Hamiltonian of the problem and if $A$ is of compact spatial support,  the previous analysis can be further improved considering $\chi_1$ and $\chi_2$ which  are  two smooth spatial cutoff functions which are positive, of compact spatial support and equal to $1$ on the domain of $A$ (domain of g).
With this $\chi_1$ and $\chi_2$ at disposal we have that the result of the previous proposition can be written in the following way
\[
\mathcal{S}_{\phi_0}-\mathcal{S}_0
=    \langle A\chi_1 \phi_0  ,\Delta^{\beta A}  A \chi
_2\phi_0\rangle_u
=    \langle  {A}\chi_1 \phi_0, \chi_2\phi_0\rangle_u
-
\langle  \chi_1 \phi_0, 
K\chi_2 \phi_0\rangle_u
+
\langle  K \chi_1 \phi_0, 
\Delta^{\beta A}
K \chi_2\phi_0\rangle_u .
\]


\section{Path integral representation}\label{se:5}

Before discussing the evaluation of $A$ on the Gaussian state with covariance $v$, it is useful to discuss the path integral representation of $\Omega$, of $\mathcal{S}_0$ and of $\mathcal{S}_{\phi_0}- \mathcal{S}_0$.
We start observing that $\Omega_0$ can be given in terms of a suitable integral over the Wiener measure. 
We shall then use the Feynman-Kac formula to take into account the external $x,u$ dependent potential $A(x,u)$ in the various propagators.
The theory we are considering is non relativistic and the description of $\langle f,\Omega g\rangle$ by means of a path integral is not plagued by the inconsistencies present in the case of relativistic theories. More precisely the measure over paths in the path integral representation of the integral kernel of $\langle f,\Omega g\rangle$ is a variant of the well defined Wiener measure. 

Notice also that the non commutativity of the $K$ with $A$ can be ignored in the path integral essentially because of an application of the Trotter product formula used in the derivation of the measure in the path integral representation of the integral kernel of $\Omega$, this plays a crucial role in the evaluation of the Gaussian field $A$ with $e^{\Gamma}$.

For vanishing $A$, with $m=1$ and 
with chemical potential $\mu<0$ the operator $K$ is such that
\[
\hat{K} = \frac{p^{2}}{2} -\mu
\]
The integral kernel of the corresponding operator $\Omega_0$ is
\[
\Omega_0(x,y) = \int_0^{\infty} \di t \; W(t,\beta,\mu)\int   \di \mu^{y,t}_{x,0}(\omega)
\]
where $\mu^{y,t}_{x,0}(\omega)$ is the Wiener measure of the path $\omega$ joining $x$ and $y$ in a time $t$. 
Furthermore, $W(t,\beta,\mu)$ is a suitable positive weight.
To obtain this result and the form the weight, we use the fact that the fundamental solution of the heat equation in Fourier domain is proportional to $e^{\frac{-t p^2}{2}}$ and hence 
\[
\hat{\Omega}_0 = \frac{1}{1-e^{-\beta \hat{K}}} = \sum_{n\geq 0} e^{- \beta n \hat{K}} = \sum_{n\geq 0} e^{-\beta n \frac{p^{2}}{2} + \beta n \mu}
\]
We get that the integral kernel of 
$\frac{e^{-u K}}{1-e^{-\beta K}}$ 
is
\begin{align*}
\Omega_0(x,y;u) 
&= 
\sum_{n\geq 0}
\int_0^{\infty} \di t \; e^{(\beta n+u) \mu } \delta{(t-\beta n-u)} \int   \di\mu^{y,t}_{x,0}(\omega)
\\
&= 
\sum_{n\geq 0} e^{(\beta n+u) \mu } \int   \di\mu^{y,\beta n+u}_{x,0}(\omega)
\end{align*}
We can now take into account the effect of $A$ by means of the {\bf Feynman-Kac} formula. 
For $u=0$ it reads
\begin{align}\label{eq:omega}
\Omega(x,y) 
&= 
\sum_{n\geq 0} e^{\beta n \mu } \int   \di\mu^{y,\beta n}_{x,0}(\omega) e^{- \sum_{0\leq j\leq n-1}\int_0^{\beta} A(s,\omega(s+j\beta)) \di s }.
\end{align}

\subsection{Path integral representation of the interacting propagator}

Recalling \eqref{eq:DeltaO} and Proposition \ref{pr:thermal-propagator} 
we have that for $u,\overline{u} \in [0,\beta)$ and for $x,y \in \Sigma$ the Feynman-Kac like formula for the thermal propagator perturbed with the external potential $A$ can be given as
\begin{equation}\label{eq:path-integral-thermal}
\begin{aligned}
\Delta^{\beta ,  A}(x,u;  y, \overline{u}) 
=&  
e^{(u-\overline{u}) \tilde{\tilde{\mu}}}  \int \di\mu^{x, u}_{y,\overline{u}}(\omega) e^{\int_{\overline{u}}^{u} A(\omega(u),s) \di s } 
\theta(u-\overline{u})
\\
&+\sum_{n\geq 1} e^{(\beta n+(u-\overline{u})) \tilde{\tilde{\mu}}}  \int \di\mu^{x, n\beta + u}_{y,\overline{u}}(\omega) e^{\int_{\overline{u}}^{n\beta +u} A(\omega(u),s) \di s } 
\end{aligned}
\end{equation}
where $A$ is extended to $\Sigma \times \mathbb{R}$ by periodicity. 
With this expression at disposal, we can now represent as suitable path integrals, the elements which appears in $T_1,T_2$ namely the 
relative entropy $\mathcal{S}_0$ present in equation \eqref{eq:rel-entropyThermalPropagators} of in Proposition \ref{prop:rel-entropy-0} 
and $\mathcal{S}_{\phi_0} - \mathcal{S}_0$ in equation \eqref{eq:rel-entropy-condensate} of Proposition \ref{pr:rel-entropy-condensate}.
More precisely, recalling that $W_A =  -\omega^\beta({V}_A) + \mathcal{S}_{\phi_0} = T_0 +T_1+T_2$, 
where $T_0= -\omega^\beta({V}_A)$,
$T_1=\mathcal{S}_{\phi_0}-\mathcal{S}_0$,
and that $T_2=\mathcal{S}_0$ we have
\begin{align}\label{eq:Sphi-propagator}
\mathcal{S}_{\phi_0}  &=T_{1} + T_2
\qquad T_1=
\langle \phi_0 A \Delta^{\beta A}  A \phi_0\rangle_u
\qquad 
T_2= 
\Tr_{u} (A \int_0^1 \di \lambda  (\Delta^\beta-\Delta^{\beta, \lambda A})\chi).
\end{align}
also these objects admits a representation in terms of path integrals obtained from the path integral representation of the propagators and the Feynman-Kac given in equation \eqref{eq:path-integral-thermal}.
%
%


%
%

\section{Convergence of the interacting equilibrium state constructed above}\label{se:6}

Up to the evaluation of $A$ in the Gaussian state with covariance $ v(x-x')\delta(u-u')$ we have finiteness of $\omega^{\beta A}$. 
We shall now discuss how the evaluation of the auxiliary Gaussian field  leads to a finite theory.
We start observing that the direct application of $\Gamma$ to polynomials of $A$ has a complex combinatoric and in particular the number of terms we obtain at order $2N$ in $A$ is $(2N-1)!!$, see the appendix \ref{ap:counting}. 
This large number does not permit a direct analysis of the convergence of the power series in $A$ present in $\omega^{\beta A}$.
The best we can hope is in the proof of Borel summability of the obtained expansion as e.g. in \cite{FroehlichKnowlesSchleinSohinger2} for the case of vanishing condensate. 
We observe at the same time that the following relations hold
\[
\left.e^{\Gamma} \prod_i e^{\langle A \rangle_{b_i}} \right|_{A=0}  = e^{-\sum_{i,j} \frac{1}{2} \tilde{v}(b_i,b_j)}
\]
where 
\begin{equation}\label{eq:<A>b}
\langle A\rangle_b = \int_{\Sigma \times [0,\beta]} A(x,u) b(x,u)\di^3 x\di u
\end{equation}
with $b$ a function on $\Sigma \times [0,\beta]$, we shall use below the same notation for $b$ being generalized functions. 
Furthermore, 
\[
\tilde{v}(b_i,b_j) = \int \di^3 x \di^3 x' du du' \int v(x-x')\delta(u-u') b_i(x,u) b_j(x',u').
\]
These relations reminds the commutation relations of an (exponentiated) Weyl algebra and permits to simplify the combinatoric present in the power expansions in $A$, actually the contribution of an infinite number of monomials in $A$ are obtained in a single finite object. 

Before applying $\Gamma$ it is thus advisable to expand the various elements in powers of exponentials of $A$. 
Notice that the Feynman-Kac formula used to expand the thermal propagator $\Delta^{\beta A}$ e.g. in equation \eqref{eq:path-integral-thermal} gives by construction the sought expansion in exponentials of $A$.
We furthermore have also the following

\begin{proposition}
The action of $e^{\Gamma}$ can be switched with the path integrals  which shows up in the expansion of the thermal propagator and the various expression involving the relative entropy. Namely
\[
C:=e^{\Gamma} \int \di \mu_{{x},0}^{{y},T}(\omega) e^{\int_0^{T} g(\omega(s))A(\omega(s),s)\di s} = \int \di \mu_{{x},0}^{y,T}(\omega) e^{\Gamma} e^{\int_0^{T} g(\omega(s))A(\omega(s),s)\di s}
\]
\end{proposition}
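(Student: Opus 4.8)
The plan is to reduce the statement to the eigenfunction property of $\Gamma$ recorded just above this proposition, and then to justify bringing $\Gamma$, and hence $e^{\Gamma}$, inside the Wiener integral. First I would fix a path $\omega$ joining $x$ to $y$ in time $T$ and write the integrand as $e^{\langle A\rangle_{b_\omega}}$, where $b_\omega$ is the distribution on $\Sigma\times[0,\beta]$ defined by $\langle A\rangle_{b_\omega}=\int_0^T g(\omega(s))A(\omega(s),s)\,\di s$, that is $b_\omega(x',u)=\int_0^T g(\omega(s))\,\delta(x'-\omega(s))\,\delta(u-s)\,\di s$, the arguments of $A$ and $\delta$ being read periodically in the imaginary time. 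Since $\frac{\delta}{\delta A(x',u)}e^{\langle A\rangle_{b_\omega}}=b_\omega(x',u)\,e^{\langle A\rangle_{b_\omega}}$, a direct computation gives
\[
\Gamma\, e^{\langle A\rangle_{b_\omega}}=-\tfrac12\,\tilde v(b_\omega,b_\omega)\,e^{\langle A\rangle_{b_\omega}},
\qquad
e^{\Gamma}\,e^{\langle A\rangle_{b_\omega}}=e^{-\frac12\tilde v(b_\omega,b_\omega)}\,e^{\langle A\rangle_{b_\omega}},
\]
which is exactly the single-factor case of the relation stated before this proposition.

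The crucial quantitative input is that the self-contraction is finite and bounded uniformly in $\omega$. Carrying out the contraction with the kernel $v(x-x')\delta(u-u')$ yields
\[
\tilde v(b_\omega,b_\omega)=\int_0^T\!\!\int_0^T g(\omega(s))\,g(\omega(s'))\,v(\omega(s)-\omega(s'))\,\delta(s-s')\,\di s\,\di s',
\]
whose diagonal contribution (together with the finitely many periodic images $s'=s+k\beta$ forced by the $\beta$-periodicity of $\delta$) equals $v(0)\int_0^T g(\omega(s))^2\,\di s$ plus manifestly bounded terms. Because $v$ is smooth, $v(0)$ is finite, and since $g$ is bounded with compact support one obtains $|\tilde v(b_\omega,b_\omega)|\le C_T$ with $C_T$ independent of $\omega$. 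This is precisely the point where keeping the interaction in the exponentiated form, rather than expanding it, pays off: the would-be tadpole along a single path is tame.

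With this bound I would then justify the interchange. The conditional Wiener measure $\mu_{x,0}^{y,T}$ is finite (its total mass $\int\di\mu_{x,0}^{y,T}(\omega)$ is the heat-kernel value), and for $A$ in a bounded set the integrand $e^{\langle A\rangle_{b_\omega}}$ and the factor $\tilde v(b_\omega,b_\omega)e^{\langle A\rangle_{b_\omega}}$ are bounded uniformly in $\omega$; hence one may differentiate under the integral sign, so that
\[
\Gamma\int \di\mu_{x,0}^{y,T}(\omega)\,e^{\langle A\rangle_{b_\omega}}
=\int \di\mu_{x,0}^{y,T}(\omega)\,\Gamma\,e^{\langle A\rangle_{b_\omega}}
=-\tfrac12\int \di\mu_{x,0}^{y,T}(\omega)\,\tilde v(b_\omega,b_\omega)\,e^{\langle A\rangle_{b_\omega}}.
\]
Iterating gives $\Gamma^n\int\di\mu\,e^{\langle A\rangle_{b_\omega}}=\int\di\mu\,\bigl(-\tfrac12\tilde v(b_\omega,b_\omega)\bigr)^n e^{\langle A\rangle_{b_\omega}}$, and the uniform bound $|\tilde v(b_\omega,b_\omega)|\le C_T$ makes $\sum_n\frac1{n!}\Gamma^n$ converge absolutely and uniformly against the finite measure. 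Summation therefore commutes with the integral, which yields
\[
e^{\Gamma}\int\di\mu_{x,0}^{y,T}(\omega)\,e^{\langle A\rangle_{b_\omega}}
=\int\di\mu_{x,0}^{y,T}(\omega)\,e^{-\frac12\tilde v(b_\omega,b_\omega)}e^{\langle A\rangle_{b_\omega}}
=\int\di\mu_{x,0}^{y,T}(\omega)\,e^{\Gamma}e^{\langle A\rangle_{b_\omega}},
\]
the asserted identity.

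The hard part will be this last interchange: making rigorous the differentiation under the Wiener integral and the term-by-term resummation of $e^{\Gamma}$. Everything rests on the uniform finiteness of $\tilde v(b_\omega,b_\omega)$, which is why the regularity of $v$ (so that $v(0)<\infty$) and the compact support of $g$ are indispensable; without them the per-path self-energy would diverge and the interchange could not be controlled.
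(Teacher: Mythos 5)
Your argument is correct, but it takes a genuinely different route from the paper. The paper discretizes the conditional Wiener measure into cylinder approximations $\mu_n$ built from heat kernels, invokes Kolmogorov's theorem for $\mu_{x,0}^{y,T}=\lim_n\mu_n$, and then justifies two interchanges: $e^{\Gamma}$ with the finite-dimensional integration over $(\mathbb{R}^3)^{n-1}$ (by dominated convergence, using boundedness of $A$ and $g$), and $e^{\Gamma}$ with the limit $n\to\infty$ (by pointwise convergence plus uniform boundedness of the functional derivatives along the sequence). You instead stay entirely in the continuum: you exploit the closed-form eigenfunction relation $e^{\Gamma}e^{\langle A\rangle_{b_\omega}}=e^{-\frac12\tilde v(b_\omega,b_\omega)}e^{\langle A\rangle_{b_\omega}}$ for the path-supported distribution $b_\omega$, isolate the uniform per-path bound $|\tilde v(b_\omega,b_\omega)|\le C_T$ (which rests on $v(0)<\infty$ and the $\delta(u-u')$ collapsing onto the diagonal $s=s'$ and its finitely many periodic images), and then differentiate under the finite Wiener integral and resum term by term. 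Your approach buys an explicit formula for the interchanged object — the weight $e^{-\frac12\tilde v(b_\omega,b_\omega)}$ inside the path integral, which is precisely what the paper uses immediately afterwards in the expression for $e^{\Gamma}\Omega(x,y)$ — and it makes the quantitative input (uniform finiteness of the single-path self-contraction) fully explicit, whereas the paper leaves the uniform bound on the functional derivatives of the cylinder approximants somewhat implicit. The paper's discretization route is more robust in situations where no closed-form action of $e^{\Gamma}$ on the integrand is available, but for the exponential-of-linear integrand at hand the two arguments are equally rigorous and reach the same conclusion.
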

\begin{proof}
Consider $\mu_{{x},0}^{{y},T}$ the measure over paths present in $C$.
Then, we split the parametrizing interval $[0,T]$ into subintervals:
\begin{equation}
\bigcup_{i = 1}^n [t_{i-1},t_i] = [0,T]
\end{equation}
where $t_0 = 0$ and $t_n = T$. Then, we define the following finite measures on $(\mathbb{R}^3)^{n-1}$:
\begin{equation}
	\di \mu_n({x}_1, \ldots, {x}_{n-1}) = \frac{1}{Z_n}\prod_{i=1}^n p(t_{i-1} - t_i,{x}_{i-1},{x}_i)\di^3 {x}_1 \cdots \di^3 {x}_{n-1}
\end{equation}
where $Z_n$ is some normalisation constant, ${x}_0 = {x}$ and ${x}_{n} = {y}$ and where 
\begin{equation}
	p(t,{x},{y}) := \frac{1}{(2 \pi)^{3/2}} \exp\left( -\frac{|{x} - {y}|^2}{2t} \right), \qquad t\in \mathbb{R}, x,y\in\mathbb{R}^3
\end{equation}
denotes the heat kernel.
Then, by Kolmogorov theorem it holds that:
\begin{equation}
	\mu_{{x},0}^{{y},T} = \lim_{n \to \infty} \mu_n.
\end{equation}
Therefore, we can rewrite:
\begin{equation}
	C=e^{\Gamma} \lim_{n \to \infty} \int \di \mu_n({x}_1,\ldots, {x}_{n-1}) e^{\sum_{i=1}^n g({x}_i)A({x}_i,t_i)\Delta t}.
\end{equation}
where $\Delta t = T/n$ and $t_i = i \Delta t$. Therefore, the question becomes whether we can switch $e^{\Gamma}$ with the $n \to \infty$ limit first and with the integration on $(\mathbb{R}^3)^{n-1}$ after. First of all we notice that since $A$ is bounded (and $\sup_{{x} \in \mathbb{R}^3} g({x}) =1$), we can switch $e^{\Gamma}$ with the integration over $(\mathbb{R}^3)^{n-1}$ using dominated convergence. So we only have to check the interchange with the limit $n \to \infty$. However, this also holds since the limit exists pointwise (converges to the Wiener measure) and the functional derivative of the elements in the sequence is uniformly bounded again because of the boundedness of $A$. Therefore getting the thesis.
\end{proof}


If we now apply $e^{\Gamma}$ before taking the path integral we have 
\begin{align*}
e^{\Gamma}\Omega(x,y) 
&= 
\sum_{n\geq 0} e^{\beta n \mu } \int   \di\mu^{y,\beta n}_{x,0}(\omega) e^{-  \frac{1}{2}
\sum_{0 \leq i,j \leq {n-1}}
\int_0^{\beta} \int_0^{\beta} 
v(\omega(s+j\beta)-\omega(s'+i\beta))\delta{(s-s')}
 \di s \di s' } 
 \\
&= 
\sum_{n\geq 0} e^{\beta n \mu } \int   \di\mu^{y,\beta n}_{x,0}(\omega) e^{-  \frac{1}{2}
\sum_{0 \leq i,j \leq {n-1}}
\int_0^{\beta}  
v(\omega(s+j\beta)-\omega(s+i\beta))
 \di s } 
 \end{align*}

Notice that for every path $\omega$ and for every $s$
\[
E=\sum_{0\leq i,j\leq n-1} v(\omega(s+j\beta) - \omega(s+i \beta))  
\]
is positive because the linear operator obtained by means of a convolution from $v$ defines a positive product
\[
\langle \overline\Psi,v * \Psi\rangle
\]
for $\Psi \in L^2(\mathbb{R})$.
For 
$\Psi_\epsilon = \sum_{0\leq j\leq n-1}\delta_{\epsilon}(x-\omega(s+j\beta))$
where $\delta_\epsilon$ are suitable mollification of the delta function, hence
for ever $\epsilon $, $\langle \overline \Psi_\epsilon,v*\Psi_\epsilon\rangle \geq 0$ and the same holds in the limit of vanishing mollification, namely
\[
E = \lim_{\epsilon \to 0}\langle \overline \Psi_\epsilon,v*\Psi_\epsilon \rangle \geq 0.
\]
The Wiener measure is also positive, hence, if $\mu\leq 0$ we have that 
\[
e^{\beta \mu n} 
e^{-  \frac{1}{2}
\sum_{0 \leq i,j \leq {n-1}}
\int_0^{\beta}  
v(\omega(s+j\beta)-\omega(s+i\beta))
 \di s } \leq 1
\]
hence we get the following bound for the two-point function
\begin{align*}
e^{\Gamma} \Omega(x,y) 
& \leq \Omega_0 (x,y)
\end{align*}


\subsection{Bounds satisfies by the partition function of the interacting equilibrium state}

In this subsection we discuss how to apply $e^{\Gamma}$ to $Z_A = e^{W_A}$ and to $S Z_A$ used to obtain the two-point function of the interacting theory in the equilibrium state we are considering as discussed in Proposition \ref{pr:what-we-compute}. 

We shall in particular make use of the Feynman-Kac formula to get an expansion of the thermal propagator in terms of  exponentials of suitably smeared $A$. 
Hence, we aim to expand the various elements of $Z$ and of $\left. e^{\Gamma}S Z_A\right|_{A=0}$ as a power series in the interacting thermal propagators $\Delta^{\beta \lambda A}$ and $\Delta^{\beta A}$ which appears in $\mathcal{S}_{\phi_0}$ and we discuss the absolute convergence of the obtained series. 
Up to a redefinition of the zeroth-order term, this is realized replacing $\Delta^{\beta \lambda A}$ by
$\xi \Delta^{\beta \lambda A}$
for various $\lambda$ in $\mathcal{S}_{\phi_0}$, 
expanding in powers of the {\bf auxiliary parameter} $\xi$ and estimating the truncated series for $\xi=1$.
To be more precise, the parameter $\xi$ appears in $W_A$ in the following way.
\begin{equation}\label{eq:WAxi}
W_A^\xi = -\omega^\beta{({V}_A)} + \Tr_{u} (g A \int_0^1 \di \lambda  (\Delta^\beta-\xi \Delta^{\beta, \lambda g A}))   +
\langle  g{A}\phi_0, \phi_0\rangle_u
+
\xi 
\langle  K \phi_0, 
(
\Delta^{\beta gA}
-\Delta^\beta)
K \phi_0\rangle_u 
\end{equation}
where we used $\mathcal{S}_0$ given in \eqref{eq:rel-entropy-S0} and $\mathcal{S}_{\phi_0}-\mathcal{S}_0$ given in \eqref{eq:rel-entropy-condensate} and where $\xi$ appears also in the last term proportional to $\Delta^\beta$ to simplify the estimate we are going to discuss.
The parameter $\xi$ will appear in the corresponding explicit expansion given in terms of the path integrals respectively as in \eqref{eq:Sphi-propagator}.
%
%

In view of the path integral expression for $\Omega$ in the following subsection, we analyze
the convergence of the power series in $\xi$ of the following expression.
\[
Z^\xi  = 
\left. e^{\Gamma } e^{W_A^\xi} \right|_{A=0}
\]
and of
\[
\left. e^{\Gamma }e^{\langle A \rangle_b} e^{W_A^\xi} \right|_{A=0} 
\]
for some function $b$ on $\Sigma \times [0,\beta]$. 

Recalling $W_A^\xi = T_0^\xi + T_{1}^\xi+T^\xi_2$ we start considering separately the contributions due to $T_2=\mathcal{S}_0$ and to $T_{1} = \mathcal{S}_{\phi_0}-\mathcal{S}_0$ in two separate subsection. We eventually combine the results.



\subsubsection{Bounds of the relative partition function valid for vanishing condensate $\phi_0=0$}

In this subsection we evaluate the contribution to $Z^\xi=e^{\Gamma}e^{W_A^\xi}|_{A=0}$ due to $W_{A,0}^\xi = T+T_0=\mathcal{S}_0-\omega^{\beta}(V_A)$ in $W_A$.
This is the contributions to the relative partition function which survives when the background value of the field $\Phi$ vanishes, namely when condensate is not considered ($W_{A,0}^{\xi}=\left. W_A^\xi\right|_{\phi_0=0}$).

We thus start considering 
\[
e^{\Gamma} e^{\langle A \rangle_b} e^{W_{A,0}^\xi}.
\]

We actually have the following proposition.

\begin{proposition}\label{prop:bounds-no-condenstate}
Consider 
\[
e^{\Gamma} e^{\langle A \rangle_b} e^{W_{A,0}^\xi}
\]
and
\[
\frac{e^{\Gamma} e^{\langle A \rangle_b} e^{W_{A,0}^\xi}}{e^{\Gamma} e^{W_{A,0}^\xi}}
\]
as power series in $\xi$ (powers of $\Delta^{\beta, \lambda A}$). 
The following absolute bounds hold uniformly in the order of truncation of the series for $\xi \leq 1$
\[
|e^{\Gamma} e^{\langle A\rangle_b}e^{W_{A,0}^\xi}|
\leq 
\exp E
,
\qquad
|e^{\Gamma} e^{W_{A,0}^\xi}|
\geq 
2-\exp E
,
\qquad
\frac{|e^{\Gamma} e^{\langle A\rangle_b}e^{W_{A,0}^\xi}|}{|e^{\Gamma}e^{W_{A,0}^\xi}|} \leq 
\frac{\exp E}{2-\exp E}
\]
where
\[
E=\frac{\tilde{{C}} {v(0)}^{\frac{1}{2}} \|g\|_1}{\beta}\xi
\]
hence, if ${E} < \log2$ the two series are absolutely convergent.
For $\xi=1$ this corresponds to have
\[
\frac{\tilde{{C}} {v(0)}^{\frac{1}{2}} \|g\|_1}{\beta} < \log 2
\]
\end{proposition}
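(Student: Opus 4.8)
The plan is to combine the Feynman--Kac representation \eqref{eq:path-integral-thermal} of the interacting propagators with the Weyl-type evaluation rule for the auxiliary field,
\[
\left.e^{\Gamma}\prod_i e^{\langle A\rangle_{b_i}}\right|_{A=0}=\exp\Big(-\tfrac12\sum_{i,j}\tilde v(b_i,b_j)\Big),
\]
and with the positivity of the quadratic form $\tilde v$ inherited from the positivity of $v$. First I would expand each interacting thermal propagator $\Delta^{\beta,\lambda gA}$ occurring in $W_{A,0}^\xi=T_0+T_2^\xi$ (with $T_0=-\omega^\beta(V_A)$ and $T_2^\xi$ the $\xi$-deformation of $\mathcal S_0$) through Feynman--Kac, so that each becomes a superposition over Wiener paths of exponentials $e^{\langle A\rangle_b}$, the path fixing the smearing $b$. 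Expanding the outer exponential $e^{W_{A,0}^\xi}$ and multiplying by $e^{\langle A\rangle_b}$ then yields, order by order in $\xi$, products $e^{\langle A\rangle_b}\prod_i e^{\langle A\rangle_{b_i}}$ integrated against the Wiener measures and the imaginary-time and $\lambda$ variables.

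Applying $e^\Gamma$ through the Weyl rule turns each product into the Gaussian weight $\exp(-\tfrac12\tilde v(b,b)-\sum_i\tilde v(b,b_i)-\tfrac12\tilde v(\sum_i b_i,\sum_i b_i))$. The self-interaction $-\tfrac12\tilde v(\sum_i b_i,\sum_i b_i)\le 0$ by positivity of $\tilde v$ --- the same mechanism that produced $e^\Gamma\Omega\le\Omega_0$ in the previous section and that tames the factorial proliferation of diagrams noted in the introduction. Bounding this factor by $1$ leaves, at order $\xi^k$, an integral of bare Feynman--Kac weights which I would control with the Hilbert--Schmidt estimate $\|\chi\sqrt{\Delta^\beta(u)}\|_{HS}\le C u^{-3/4}\|\chi\|_2$, the cutoff norm $\|g\|_1$, the coincidence value $v(0)$ supplied by $\tilde v$, and the $\beta$-scaling from the imaginary-time integrations. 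I expect the $\xi^k$ coefficient of both $e^\Gamma e^{\langle A\rangle_b}e^{W_{A,0}^\xi}$ and $e^\Gamma e^{W_{A,0}^\xi}$ to be bounded in absolute value by $E_0^k/k!$ with $E_0=\tilde C\,v(0)^{1/2}\|g\|_1/\beta$, the $1/k!$ coming from the symmetry of the ordered integrations; summing the majorant gives $|e^\Gamma e^{\langle A\rangle_b}e^{W_{A,0}^\xi}|\le\exp E$.

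For the denominator I would isolate the $\xi=0$ term, which after the redefinition of the zeroth-order contribution equals $1$, and bound the remainder by $\sum_{k\ge1}(E_0\xi)^k/k!=\exp E-1$, so that $|e^\Gamma e^{W_{A,0}^\xi}|\ge 1-(\exp E-1)=2-\exp E$. The bound on the ratio is then immediate, and when $E<\log 2$ one has $\exp E<2$: the denominator stays bounded away from zero and both series converge absolutely, which at $\xi=1$ is precisely the stated condition $\tilde C\,v(0)^{1/2}\|g\|_1/\beta<\log 2$.

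The main obstacle will be the uniform-in-order estimate $E_0^k/k!$: organising the combinatorics so that, once positivity of $\tilde v$ has been used to discard the self-interaction, the remaining sum over path configurations and the traces genuinely reproduce one factor $v(0)^{1/2}\|g\|_1/\beta$ per order together with the compensating $1/k!$, rather than a factorially large count. The delicate technical points are handling the integrable short-imaginary-time singularity $u^{-3/4}$ of $\Delta^\beta(u)$ inside the traces and verifying that the cross terms involving the external smearing $b$ do not spoil the $b$-independence of the bound $E$.
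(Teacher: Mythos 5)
Your overall strategy --- Feynman--Kac expansion of the interacting propagators into exponentials $e^{\langle A\rangle_b}$ over Wiener paths, the Weyl-type evaluation rule for $e^{\Gamma}$, and positivity of $\tilde v$ to discard the Gaussian self-interaction --- is exactly the paper's starting point, and your treatment of the denominator ($k=0$ term equal to $1$, remainder bounded by $\exp E-1$) matches the paper's. But there is a genuine gap at the step you yourself flag as ``the main obstacle'': the uniform bound $E_0^k/k!$ on the $\xi^k$ coefficient. The difficulty is that $W_{A,0}^{\xi}$ is not a superposition of pure exponentials of $A$: each order carries an explicit \emph{linear} prefactor $\langle A\rangle_{\delta_{x,s}}$ coming from the $A$ in $\Tr_u\bigl(A\int_0^1\di\lambda(\Delta^\beta-\xi\Delta^{\beta,\lambda A})\bigr)$, so the $k$-th power produces $k$ linear factors multiplying exponentials. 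After bounding the self-interaction Gaussian by $1$, contracting these $k$ linear factors under $e^{\Gamma}$ is a $k$-th Gaussian moment, which grows like $\Gamma(k)/\Gamma(k/2)\sim\Gamma((k+1)/2)\,2^{k}$ --- i.e.\ like $\sqrt{k!}$ times a geometric factor, not geometrically --- and this growth does not come from, nor is it cancelled by, ``the symmetry of the ordered integrations''. The $1/k!$ from the exponential is already spoken for; without an extra mechanism the per-order bound is not $E_0^k$.

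The paper's resolution, which your sketch is missing, is threefold. First, an integration-by-parts lemma (Lemma \ref{lm:lemma-analytic-s}) freezes the imaginary-time argument of $A$ in the exponent, with the remainder killed by $e^{\Gamma}$ because derivatives of the covariance's $\delta(u-u')$ are tested against constants. Second, each linear prefactor is rewritten as $\int\di t\,f(t)\,e^{-t\tilde A(x_j)}$ with $f=\partial_t\delta$, so that the entire integrand becomes a pure product of exponentials of $A$ to which the Weyl rule applies; the $t$-integrations then produce the Gaussian moment $\int\di^kp\,\sqrt{\det\eta^{-1}}\,\bigl(\prod_l|p_l|\bigr)e^{-\frac12 p_i\eta^{ij}p_j}$ with $\eta_{ij}=\tilde v(\delta_{x_i},\delta_{x_j})$, and Cauchy--Schwarz for $\tilde v$ disposes of all cross terms with $b$ and the path smearings $h_w$ (this is where $b$-independence of $E$ comes from). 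Third --- and this is the decisive combinatorial input --- Hadamard's inequality $\det\eta\le(\beta v(0))^k$ together with the AM--GM bound $\prod_j|p_j/\sqrt{\lambda_j}|\le k^{-k/2}\|p\|_2^k$ converts the moment into $\pi^{k/2}k^{-k/2}\frac{\Gamma(k)}{\Gamma(k/2)}(\beta v(0))^{k/2}\le\bigl(\tfrac{2}{e}\bigr)^{k/2}(\beta v(0))^{k/2}$, i.e.\ a genuinely geometric factor; the winding sum is controlled not by the Hilbert--Schmidt estimate you invoke but by the normalization $\int\di\tilde W_{n,x}=C(\beta n)^{-3/2}$ and the polylogarithm $\mathrm{Li}_{3/2}(e^{\beta\tilde{\tilde\mu}})$, which is where the $\|g\|_1/\beta$ in $E$ originates. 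Without the $\delta'$-Fourier representation and the Hadamard/AM--GM step, the claimed per-order bound does not follow.
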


\begin{proof}
    The proof is obtained combining the  steps obtained below through various lemmata.

In the evaluation of the expectation values in the thesis of the proposition we use  the following lemma. Which permits to discard the fact that $A$ depends on the imaginary time $u$ in $W_{A,0}^\xi$.   
\begin{lemma}\label{lm:lemma-analytic-s}
Consider
\begin{align}\label{eq:stepT}
T_2^\xi &= \int \di^3 x  \int_0^\beta \di s   g{A}(x,s) \left( 
 \sum_{n\geq 1} e^{\beta n \tilde{\tilde{\mu}} } \int   \di\mu^{x,\beta n}_{x,0}(\omega) 
 \int_0^1 \di  \lambda
 \left(1-\xi e^{-  
\lambda  \sum\limits_{j=0}
^{n-1}
\int_0^{\beta} gA(\omega(s_j+j\beta,s_j+s))
 \di s_j }
\right) \right)
\end{align}
and
\begin{align}\label{eq:stepTs}
T_{2,s}^\xi &= \int \di^3 x  \int_0^\beta \di s   g{A}(x,s) \left( 
 \sum_{n\geq 1} e^{\beta n \tilde{\tilde{\mu}} } \int   \di\mu^{x,\beta n}_{x,0}(\omega) 
 \int_0^1 \di  \lambda
 \left(1-\xi e^{-  
\lambda  
\sum\limits_{j=0}^{n-1}
\int_0^{\beta} gA(\omega(s_j+j\beta),s_j+\beta)
 \di s_j }
\right) \right)
\end{align}
For every sufficiently regular functional $F$ of $A$, it holds that 
\[
\left. e^{\Gamma} F e^{T_2^\xi}\right|_{A=0} = \left. e^{\Gamma} F e^{T_{2,s}^\xi}\right|_{A=0}.
\]
\end{lemma}
\begin{proof}
Consider $T_2^\xi$ in \eqref{eq:stepT}
In order to remove $s$ in the exponential in $T_2^\xi$ , we proceed integrating by parts in $s$ and we get
\[
T_2^\xi = T_{2,s}^\xi+R
\]
in the integration by parts, only the boundary term (for $s=\beta)$ matters and gives $T_{2,s}$, the reminder $R$ vanishes when $e^\Gamma$ is applied. 
This happens because  there 
\[
\frac{d}{ds} e^{-  
\lambda 
\int_{0}^{n \beta} gA(\omega(s_1),s_1+s)
 \di s_1 } = 
-\lambda 
e^{-  
\lambda 
\int_{0}^{n \beta} gA(\omega(s_1),s_1+s)
 \di s_1 }
\int_{0}^{n \beta} \partial_{2}gA(\omega(s_2),s_2+s)
 \di s_2 
\]
and this kind of contributions vanishes when $A(s_2)A(u)$ is replaced by $v$ be means of $\Gamma$. Actually
there the derivatives of the delta functions popups. The latter vanishes because that derivative of the delta function is integrated against constant functions. 
\end{proof}
Lemma \ref{lm:lemma-analytic-s} permits to use $T_{2,s}^\xi$ at the place of $T_2^\xi$ and to discard the reminder $R$ in the evaluation of $\left. e^{\Gamma} e^{\langle A \rangle_b} e^{W_{A,0}^\xi}\right|_{A=0}$. 
Discarding the reminder $T_2^\xi=T_{2,s}^\xi+R$, we rewrite $T_{2,s}^\xi$ in the following way
\begin{align*}
T_{2,s}^\xi &= \int \di^3 x\int_0^\beta ds  g(x)\langle A \rangle_{\delta_{x,s}} 
\sum_{n\geq 1} \int \di {\tilde{W}}_{n,x}(w)
e^{\beta \tilde{\tilde{\mu}} n} (1-\xi e^{-\langle A\rangle_{h_w}})
\\
&= \langle \int_0^\beta \di s {A}(x,s), g(x)  
\sum_{n\geq 1} \int \di {\tilde{W}}_{n,x}(w)
e^{\beta \tilde{\tilde{\mu}} n} (1-\xi e^{-\langle A\rangle_{h_w}})
\rangle
\end{align*}
where, here recalling equation \eqref{eq:<A>b}, $\delta_{x,s}$ is the Dirac delta centered in $(x,s)$ 
\[
\di {\tilde{W}}_{n,x}(w) =  \di\mu^{x,\beta n}_{x,0}(\omega)  \di\lambda 
\]
$\lambda \in [0,1]$
and $w=(\omega, \lambda)$.
Furthermore, this holds 
for $h_{w}$ 
which is a suitable linear combination of $A$ smeared along the paths and multiplied by $\lambda$. 
Hence combining these observations, 
\[
W_{A,0}^\xi = \mathcal{S}_0-\omega^\beta(V_A) =  T_{2,s}^\xi-\omega^{\beta}(V_A) +R
\]
and, discarding the reminder $R$, we have that 
\begin{align*}
W_{A,0}^\xi &= -\sum_{n\geq 1}
\xi \int \di s \int \di^3 x g(x)\langle A \rangle_{\delta_{x,s}} 
\int \di {\tilde{W}}_{n,x}(w)
e^{\beta \tilde{\tilde{\mu}} n} e^{-\langle A\rangle_{h_w}}
\end{align*}
where we used the fact that
\[
\omega^\beta(V_A)=
\sum_{n\geq 1}
\int \di s \int \di^3 x g(x)\langle A \rangle_{\delta_{x,s}}
\int \di {\tilde{W}}_{n,x}(w)
e^{\beta \tilde{\tilde{\mu}} n} 
=
\sum_{n\geq 1}
C\frac{ e^{\beta \tilde{\tilde{\mu}} n} }{(\beta n)^{\frac{3}{2}}}
\langle \tilde{A} \rangle_g = 
\frac{\tilde{C}(\beta \tilde{\tilde{\mu}})}{\beta^{\frac{3}{2}}}
\langle \tilde{A} \rangle_g
\]
In this expressions we used the notation introduced in \eqref{eq:<A>b}, furthermore we
 denoted by $\tilde{A}(x) =  \int_0^\beta \di s A(x,s)$
 and  we used 
the following Lemma. 
\begin{lemma}\label{lm:norm-state}
It exists a suitable positive constant $C$ such that  
\begin{equation}\label{eq:norm-measure}
\sum_{n\geq 1} e^{\beta \tilde{\tilde{\mu}} n } \int \di {\tilde{W}}_{n,x}
=
\sum_{n\geq 1} e^{\beta \tilde{\tilde{\mu}} n }
 \int \di\mu^{x,\beta n}_{x,0}(\omega) \int_0^1 \di \lambda 
=
C \sum_{n\geq 1} \frac{e^{\beta \tilde{\tilde{\mu}} n}}{(\beta n)^{\frac{3}{2}}} 
= \frac{C}{\beta^{\frac{3}{2}}} \mathrm{Li}_{\frac{3}{2}}(e^{\beta \tilde{\tilde{\mu}}}) 
 = \frac{\tilde{C}(\beta \tilde{\tilde{\mu}})}{\beta^{\frac{3}{2}}}
\end{equation}
where $\mathrm{Li}_s(y) = \sum_n \frac{y^n}{n^s}$ is the polylogarithm function and it is finite in $y=0$ for $s>1$, it is monotonically decreasing in $y$.
The result of this integral does not depend on $x$.
\end{lemma}
\begin{proof}
To prove this claim we observe that 
\[
\int \di {\tilde{W}}_{n,x} = \frac{1}{(2\pi)^3} \int e^{- \beta n \frac{p^2}{2}} \di ^3p = \frac{4\pi}{(2\pi)^3}  \sqrt{\frac{\pi}{2}} \frac{1}{(\beta n)^{\frac{3}{2}}}
\]
and in particular it does not depend on $x$.
\end{proof}

Now we evaluate $e^{W_{A,0}^\xi}$
where as before $\tilde{A}(x) = \int_0^\beta A(x,s)\di s$ and we discard $R$ because it will not contribute to the estimates thanks to Lemma \ref{lm:lemma-analytic-s}
\[
e^{W_{A,0}^\xi}=
\sum_{k\geq 0} \frac{(-1)^k \xi^k}{k!}\left(\langle \tilde{A}(x), g(x)  
\sum_{n\geq 1} \int \di {\tilde{W}}_{n,x}(w)
e^{\beta \tilde{\tilde{\mu}} n} e^{-\langle A\rangle_{h_w}}
\rangle
\right)^k
\]
\[
e^{W_{A,0}^\xi}=
\sum_{k\geq 0} \frac{(-1)^k \xi^k}{k!}\left(\int \di^3 x \int \di t e^{-t \tilde{A}(x)} g(x)  f(t)
\sum_{n\geq 1} \int \di {\tilde{W}}_{n,x}(w)
e^{\beta \tilde{\tilde{\mu}} n} e^{-\langle A\rangle_{h_w}}
\right)^k
\]
and
\[
e^{\Gamma} e^{-\langle A \rangle_b}e^{W_A^{0}}=
e^{\Gamma} e^{-\langle A \rangle_b}
\sum_{k\geq 0} \frac{(-1)^k \xi^k}{k!}\left(\int \di^3 x \int \di t e^{-t \tilde{A}(x)} g(x)  f(t)
\sum_{n\geq 1} \int \di {\tilde{W}}_{n,x}(w)
e^{\beta \tilde{\tilde{\mu}} n} e^{-\langle A\rangle_{h_w}}
\right)^k
\]
where
\[
f(t) =\frac{\partial}{\partial t} \delta(t)
\]
is the Fourier transform of
\[
\hat{f}(p) =  (-i p) .
\]
In order to apply $e^{\Gamma}$ 
we rearrange it in the following way
\begin{align*}
e^{\Gamma} e^{-\langle A \rangle_b}e^{W_{A,0}^\xi}=
\sum_{k\geq 0} \frac{(-1)^k \xi^k }{k!}
\prod_{j=1}^k
&\left(
\int \di^3 x_j  g(x_j)  
\int \di t_j
f(t_j)
\sum_{{n}_j\geq 1}
 \int dW_{n_j,x_j}(w_j)
e^{\beta \tilde{\tilde{\mu}} n_j}
\right) \cdot I
\end{align*}
where the integrand is
\begin{align*}
I=e^{\Gamma}
\left( e^{-\langle A \rangle_b}
\prod_{j=1}^k
 e^{-t_j \tilde{A}(x_j)}  
e^{-\langle A\rangle_{h_{w_j}}}
\right)
\end{align*}
we study the integrand (second line in the previous formula) and 
denoting by 
\[
\tilde{A}(x) = \langle A \rangle_x = \int \di u \int \di^3 y A(x,u )\delta(y-x)
\]
we have for $k>0$
\begin{align*}
I &= e^{\Gamma}
\left( e^{-\langle A \rangle_b}
\prod_{j=1}^k
 e^{-t_j \tilde{A}(x_j)}  
e^{-\langle A\rangle_{h_{w_j}}}
\right)
=
e^{\Gamma} e^{-\langle A\rangle_{b}
-\sum_{j}  t_j \langle A \rangle_{x_j}
-\sum_{j} \langle A \rangle_{h_{w_j}}}
\\
&=e^{-\frac{1}{2}\tilde{v}( -b
-\sum_{j}  t_j \delta_{x_j}
-\sum_{j} h_{w_j},
-b
-\sum_{j}  t_j \delta_{x_j}
-\sum_{j} h_{w_j})}
\end{align*}
to compute the following integral in $e^{\Gamma}e^{\langle A \rangle_b} e^{W_{A}^0}$ we start considering
\begin{align*}
B=\int \left(\prod_{j=1}^k \di t_j f(t_j)\right)I(t)   
\end{align*}
we assume that the various $x_j$
are chosen in such a way that
the matrix with entries $\eta_{ij}=\tilde{v}(\delta_{x_i},\delta_{x_j})$ is strictly positive. By Hadamard inequality it holds that $\det \eta \leq (\beta v(0))^n$ where $\beta v(0)= \eta_{jj}$.
This estimate ceases to be valid on the subset $D_n$ of $\Sigma^{n}$ corresponding to its diagonals (where it exist at least two $i,j$ such that $x_i\neq x_j$). This set of points is however a zero measure set on $\Sigma^n$. In the next two Lemmata we find a bound uniform on $\Sigma^n \setminus D_n$ hence after integrating against $g^n$ we get a valid bound.

\begin{lemma}\label{lm:step1}
Consider 
\begin{align*}
B=\int \left(\prod_{j=1}^k \di t_j f(t_j)\right)I(t).   
\end{align*}
As a function on $(x_1,\dots , x_k)\in \Sigma^k$.
Almost everywhere on $\Sigma^k$
it holds that 
\[
| B | \leq  \frac{1}{(2\pi)^{\frac{k}{2}}}
 \int \di^{k} p  \sqrt{\det\eta^{-1}} \left(\prod_l | p_l|\right)
e^{-\frac{p_i \eta^{ij} p_j}{2} }
\]
where $\eta$ is a positive definite square $k\times k$ matrix whose entries are $\eta_{ij} = \tilde{v}(\delta_{x_1},\delta_{x_{j}})$ and $\eta^{ij}$ denotes the components of its inverse $\eta^{-1}$.
\end{lemma}

\begin{proof}

We compute the Fourier transform, and after some translation we have that 
\begin{align*}
B&=\int \left(\prod_{j=1}^k \di t_j f(t_j)\right)I
\\
&=\frac{1}{(2\pi)^{\frac{k}{2}}}
 \int \di^{k} p  \sqrt{\det \eta^{-1}} \left(\prod_l \mathrm{i} p_l\right)
e^{-\frac{p_i \eta^{ij} p_j}{2} }
e^{-\mathrm{i} p_i \eta^{ij} \tilde{v}(\delta_{x_j},h)} 
e^{\frac{1}{2} \tilde{v}(\delta_{x_i},h) \eta^{ij} \tilde{v}(\delta_{x_j},h)} 
e^{- \frac{1}{2}\tilde{v}(h,h)}
\end{align*}
where $h= -b+\tilde{g} 
-\sum_{j} h_{w_j}$.


Notice that, 
since $\tilde{v}$ is positive, by Cauchy Schwarz inequality
\[
e^{\frac{1}{2} \tilde{v}(\delta_{x_i},h) \eta^{ij} \tilde{v}(\delta_{x_j},h)} 
e^{- \frac{1}{2}\tilde{v}(h,h)}\leq 1,
\]
and, 
\[
|e^{-\mathrm{i} p_i \eta^{ij} \tilde{v}(\delta_{x_j},h)} |\leq 1
\]
thus concluding the proof.
\end{proof}
We observe that the estimate obtained in Lemma \ref{lm:step1} is uniform in $x_j$ for every $j$, furthermore it is also uniform on the various $b$s and $h$s. To simplify the obtained bound, we introduce the following

\begin{lemma}\label{lm:integral}
It holds that 
\[
\int \di^{k} p \sqrt{\det\eta^{-1}} \left(\prod_{j=1}^k |p_j|\right) e^{-\frac{1}{2}p_a\eta^{ab}p_b}
\leq \pi^{\frac{k}{2}}(\frac{1}{k^{k/2}}) \frac{\Gamma(k)}{\Gamma(\frac{k}{2})}(\beta v(0))^{\frac{k}{2}}.
\]
\end{lemma}
\begin{proof}
To get this bound, proceed as follows, consider $P$ the unitary matrix which diagonalizes $\eta^{-1}$ and denote by $\lambda_i^{-1}$ the elements on the diagonal obtained after diagonalization, (these are the eigenvalues of $\eta^{-1}$ and they are positive). 
We observe that bounding the determinant of $\eta$ by 
\[
\det \eta \leq \prod_j \eta_{jj} \leq \prod_j\tilde{v}(\delta_{x_j},\delta_{x_j}) \leq (\beta v(0))^k
\]
and  using the arithmetic mean  geometric mean inequality we have 
\[
\prod_{j=1}^k|p_{j}|
=\frac{1}{\sqrt{\det \eta^{-1}}}\prod_{j=1}^k|\frac{p_{j}}{\sqrt{\lambda_j}}|
\leq \frac{1}{\sqrt{\det \eta^{-1}}}\frac{1}{k^{\frac{k}{2}}} \left(\sum_j \frac{|p_j|^2}{\lambda_j}\right)^\frac{k}{2}
\leq\frac{(\beta v(0))^{\frac{k}{2}}}{k^{\frac{k}{2}}} \|p\|_2^k
\]
where the norm $\| p \|_2^2 = p_i\eta^{ij}p_j$.
Using coordinates which diagonalizes $\eta^{-1}$ in the Fourier domain, we furthermore obtain that 
\[
\int \di^k p \sqrt{\det\eta^{-1}} \|p\|_2^k  e^{-\frac{1}{2}\|p\|_2^2}
= \pi^{\frac{k}{2}}\frac{\Gamma(k)}{\Gamma(\frac{k}{2})}. 
\] 
Combining the two estimates we get the first part of the thesis.

\end{proof}


Using Lemma \ref{lm:step1} and Lemma \ref{lm:integral} to estimate $B$, 
we obtain
\begin{align*}
|B|&\leq 
 \frac{{C}^k}{k^{\frac{k}{2}}}\frac{\Gamma(k)}{\Gamma(\frac{k}{2})} 
 (\beta v(0))^{\frac{k}{2}}
\end{align*}
for a suitable constant $C$ which does not depend on $k$.
This estimate is also uniform in $x$. 
Recalling that 
\begin{align*}
e^{\Gamma} e^{-\langle A \rangle_b}e^{W_{A,0}^\xi}=
\sum_{k\geq 0} \frac{(-1)^k\xi^k}{k!}
\prod_{j=1}^k
&\left(
\int \di^3 x_j  g(x_j)  
\sum_{{n}_j\geq 1}
 \int \di {\tilde{W}}_{n_j,x_j}(w_j)
e^{\beta \tilde{\tilde{\mu}} n_j}
\right) \cdot B
\end{align*} we get the estimate 
\begin{align*}
|e^{\Gamma} e^{-\langle A \rangle_b}e^{W_{A,0}^\xi}|
&\leq 
\sum_{k\geq 0} \frac{\xi^k}{k!}
\prod_{j=1}^k
\left(
\int \di^3 x_j  g(x_j)  
\sum_{{n}_j\geq 1}
 \int \di {\tilde{W}}_{n_j,x_j}(w_j)
e^{\beta \tilde{\tilde{\mu}} n_j}
\right) \cdot 
 \frac{{C}^k}{k^{\frac{k}{2}}} 
\frac{\Gamma(k)}{\Gamma(\frac{k}{2})} (\beta v(0))^{\frac{k}{2}}
\\
&\leq 
1+
\sum_{k\geq 1} \frac{\xi^k}{k!}
\left(
\|g\|_1  
\frac{\tilde{C}(\beta \tilde{\tilde{\mu}})}{\beta}
\right)^k \cdot 
 \frac{{C}^k}{k^{\frac{k}{2}}}\frac{\Gamma(k)}{\Gamma(\frac{k}{2})} 
(v(0))^{\frac{k}{2}}
\end{align*}
The asymptotic expansion of $k^{\frac{k}{2}}$ can be obtained by means of the Stirling approximation as  
\[
k^{k} \sim  \frac{e^{k}}{\sqrt{2\pi k}} 
\Gamma\left(k+1\right)
\]
We in particular have the following non optimal estimate
\[
\frac{\Gamma(k)}{k^{\frac{k}{2}}\Gamma(\frac{k}{2})} \leq \left(\frac{2}{e}\right)^{\frac{k}{2}}
\]
The previous sum is thus absolutely convergent and furnish an estimate uniform in $b$.
Which has the form for a suitably chosen $\tilde{{C}}$ which rescales $C$.
\begin{align*}
|e^{\Gamma}e^{-\langle A \rangle_b}e^{W_{A,0}^\xi}|
&\leq 
1+
\sum_{k\geq 1} \frac{\xi^k}{k!}
\left(
\|g\|_1  
\frac{\tilde{C}(\beta \tilde{\tilde{\mu}})}{\beta}
\right)^k \cdot 
\tilde{{C}}^k 
(\sqrt{v(0)})^{k}
\\
& 
\leq
\exp\left( \frac{\tilde{{C}} {v(0)}^{\frac{1}{2}} \|g\|_1}{\beta} \xi
\right)
\end{align*}
for a suitable constant $\tilde{C}$.
The obtained estimate proves that the series in $\xi$ defining $e^{\Gamma} e^{\langle A \rangle_b} e^{W_{A,0}^\xi}$ converges absolutely.
In the same way, we might obtain also a lower bound for $|e^{\Gamma}e^{{W_{A,0}^\xi}}|$ when $\beta$ is sufficiently large and $\xi\leq 1$.
In particular,
\begin{equation}\label{eq:denominator-lower-bound}
\begin{aligned}
|e^{\Gamma}e^{{W_{A,0}^\xi}}| 
&\geq 
1-
\sum_{k\geq 1} \frac{\xi^k}{k!}
\left(
\|g\|_1  
\frac{\tilde{C}(\beta \tilde{\tilde{\mu}})}{\beta}
\right)^k \cdot 
\tilde{C}^k 
(\sqrt{v(0)})^{k}
\\
&\geq 2 -
\exp\left( \frac{\tilde{{C}} {v(0)}^{\frac{1}{2}} \|g\|_1}{\beta} \xi
\right)
\end{aligned}
\end{equation}
hence, 
the the series in powers of the interacting thermal propagators defining the ratio $e^{\Gamma} e^{\langle A \rangle_b} e^{W_{A,0}^\xi}/Z$  satisfies the following bound
\[
\frac{|e^{\Gamma} e^{\langle A\rangle_b}e^{{W_{A,0}^\xi}}|}{|e^{\Gamma}e^{{W_{A,0}^\xi}}|} \leq 
\frac{
\exp\left( \frac{\tilde{{C}} {v(0)}^{\frac{1}{2}} \|g\|_1}{\beta}\xi
\right)}{2 -
\exp\left( \frac{\tilde{{C}} {v(0)}^{\frac{1}{2}} \|g\|_1}{\beta}\xi
\right)}
\]

Combining the observations of this subsubsection we have the proof of Proposition \ref{prop:bounds-no-condenstate}.
\end{proof}





\subsubsection{Bounds in the case of the condensate $T_{1}^\xi \neq 0$}

We recall from Proposition \ref{pr:rel-entropy-condensate} that for $\xi=1$
\[
T_{1} =  \int_0^\beta \di u_1
\int_{0}^\beta  \di u
\langle  g{A}(u_1) \phi_0 ,
\Delta^{\beta,gA}(u_1,u)
gA(u)\phi_0\rangle
\]
Furthermore, using the expression $\mathcal{S}_{\phi_0}-\mathcal{S}_0$ given in \eqref{eq:rel-entropy-condensate}
to rewrite the latter
and introducing the auxiliary parameter $\xi$ as in \eqref{eq:WAxi} we obtain
\[
T_{1}^\xi
=
 \langle  {A}g \phi_0, g \phi_0\rangle_u
+
\xi \langle  K \chi_1 \phi_0, 
(\Delta^{\beta gA}-\Delta^\beta )K  \chi_2 \phi_0 \rangle_u .
\]
where $\chi_i$ are equal to $1$ on the support of $g$ and they are positive smooth and of compact spatial support and they are constant on $u$.
Expanding the thermal propagators in terms of the the path integral representation given in \eqref{eq:path-integral-thermal}
we obtain
\begin{align*}
T_{1}^\xi
&=  \int \di^3 x \tilde{A}g^2 \phi^2_0 
+\xi \int_0^\beta \di u_1
\int_{0}^\beta  \di u \;
\theta(u-u_1) D(u,u_1,0)
+\xi \int_0^\beta \di u_1
\int_{0}^\beta  \di u \sum_{n\geq 1} D(u,u_1,n)
\end{align*}
here $\tilde{A}(x) = \int_0^\beta \di u A(x,u) $ and
\begin{align*}
D(u,&u_1,n):=\\
&\int \di^3 x \int \di^3 y 
K \chi_1(x)\phi_0(x) e^{(\beta n+(u_1-u)) \tilde{\tilde{\mu}}}  \int \di\mu^{x, n\beta + u_1}_{y,u}(\omega) \left(e^{-\int_{u}^{n\beta +u_1} gA(\omega(s),s) \di s } -1\right)
K \chi_2(y)\phi_0(y).
\end{align*}
Dividing the domain of $u$ integration in two parts to resolve the Heaviside step function, 
in 
$0<u < u_1 <\beta$ and 
$0< u_1 < u <\beta$.
Changing the variable of $u$ integration in the two domains respectively to $\delta u = u_1-u $ and to $\delta u = \beta + u_1-u $ and combining the integrals, we obtain 
\begin{align*}
T_{1}^\xi
&= \int \di^3 x \tilde{A}g^2 \phi^2_0 +
+\xi\int_0^\beta \di \delta u
\int_{0}^{\beta}  \di u \sum_{n\geq 0} D(u, u+\delta u,n).
\end{align*}
Using periodicity of $A$ in $u$ and then
integrating by parts in $\di u$, up to a reminder $R$ which can eventually be discarded for an argument similar to the one used in  Lemma 
\ref{lm:lemma-analytic-s} 
we get 
\begin{align*}
T^\xi_{1}
&=  \int \di^3 x \tilde{A}g^2 \phi^2_0 + \xi \beta \int_0^\beta \di \delta u  \sum_{n\geq 0} D_s(\delta u, n) +R
\end{align*}where now
\[
D_s(\delta u, n)\! :=\! 
\int\! \di^3 x \int\! \di^3 y 
K \chi_1(x)\phi_0(x) e^{(\beta n+\delta u) \tilde{\tilde{\mu}}}  \int \di\mu^{x, n\beta + \delta u}_{y,0}(\omega) \left(e^{-\int_{0}^{n\beta +\delta u}\!\! gA(\omega(s),s+\beta) \di s } -1\right)\!
K \chi_2(y)\phi_0(y)
\]

\color{black}

\begin{proposition}\label{lm:estimate-S-condenstante}
It holds that the expansion in powers of the auxiliary parameter $\xi$  of 
$e^{\Gamma} e^{-\langle A \rangle_b}e^{{T}_{1}^\xi}$ gives a series which is absolutely convergent. Actually the following bound holds
 \begin{align*}
|e^{\Gamma} e^{-\langle A \rangle_b}e^{{T}_{1}^\xi}
|
&\leq
\exp \left(  2\beta \phi_0^2 \|K \chi_1\|_2  \| \chi_2 \|_2 \xi\right)
\\
&\leq
\exp (  2\beta \phi_0^2 \epsilon \xi)
\end{align*}
and
\begin{align*}
|e^{\Gamma} e^{T_{1}^\xi}|
&\geq 1+e^{-\frac{\beta }{2}\tilde{v}(g,g)\phi_0^2} -
\exp (  2\beta \phi_0^2 \epsilon \xi)
\end{align*}
\end{proposition}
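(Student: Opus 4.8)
The plan is to mirror the strategy of Proposition \ref{prop:bounds-no-condenstate}, exploiting the fact that in $T_1^\xi$ the auxiliary field $A$ enters only through the Feynman--Kac exponentials $e^{-\langle A\rangle_{h_w}}$ attached to the Brownian bridges in the path-integral representation of $\Delta^{\beta gA}$, while the two endpoints of every bridge carry the fixed, $A$-independent weights $K\chi_1\phi_0$ and $K\chi_2\phi_0$. This is the decisive structural simplification with respect to the loop contribution $T_2^\xi$: there is no external $A$-leg to be represented through the $f(t)=\partial_t\delta(t)$ trick, so after the Gaussian evaluation $e^\Gamma$ one is left only with classical weights and the positivity of $\tilde v$ can be used immediately.

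First I would start from the path-integral form of $T_1^\xi$ derived just above, namely $T_1^\xi = \langle A\rangle_\ell + \xi\beta\int_0^\beta \di\delta u\sum_{n\ge0}D_s(\delta u,n) + R$ with linear source $\ell=g^2\phi_0^2$, discarding the remainder $R$ by the same integration-by-parts argument as in Lemma \ref{lm:lemma-analytic-s} (the derivative of the periodic delta is integrated against constants in $u$ and is therefore annihilated when $e^\Gamma$ replaces a pair $A(s)A(u)$ by $v$). Then I would expand $e^{T_1^\xi}=e^{\langle A\rangle_\ell}\sum_{k\ge0}\frac{\xi^k}{k!}\big(\beta\int\di\delta u\sum_n D_s\big)^k$ in powers of $\xi$ and, inside each $D_s$, expand the factor $(e^{-\langle A\rangle_{h_w}}-1)$. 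Multiplying by $e^{-\langle A\rangle_b}$ and applying $e^\Gamma$, every resulting monomial becomes, via the identity $e^\Gamma\prod_i e^{\langle A\rangle_{b_i}}\big|_{A=0}=e^{-\frac12\tilde v(\sum_i b_i,\sum_i b_i)}$, a single Gaussian factor $e^{-\frac12\tilde v(c_S,c_S)}$ whose modulus is $\le1$ because $\tilde v$ is a positive quadratic form (by the positivity assumption on $v_2$). The interchange of $e^\Gamma$ with the path integrals is justified by the preceding proposition on switching $e^\Gamma$ with the Wiener measure.

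Next I would collect the classical weights. Expanding $\prod_{j=1}^k(e^{-\langle A\rangle_{h_{w_j}}}-1)$ into $2^k$ sign-alternating subset terms and bounding each residual Gaussian by $1$, the order-$k$ coefficient is bounded in modulus by $\frac{2^k}{k!}M^k$, where $M=\beta\int_0^\beta\di\delta u\sum_{n\ge0}|D_s^{\mathrm{cl}}(\delta u,n)|$ is built only from the endpoint weights and the heat-kernel normalisation of the Wiener measure. The crucial computation is that, absorbing $e^{(n\beta+\delta u)\tilde{\tilde\mu}}$ into the free heat semigroup so that the bridge normalisation becomes $e^{-(n\beta+\delta u)K}$, the winding sum and the time integral telescope,
\[
\sum_{n\ge0}\int_0^\beta e^{-(n\beta+\delta u)K}\,\di\delta u=\int_0^\infty e^{-sK}\,\di s=K^{-1},
\]
a bounded operator precisely because $\tilde{\tilde\mu}=-\epsilon<0$ keeps the spectrum of $K$ away from $0$. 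Hence $M=\beta\phi_0^2\langle K\chi_1,K^{-1}K\chi_2\rangle=\beta\phi_0^2\langle K\chi_1,\chi_2\rangle\le\beta\phi_0^2\|K\chi_1\|_2\|\chi_2\|_2$ by Cauchy--Schwarz, and summing $\sum_k\frac{(2\xi)^k}{k!}M^k$ yields the claimed bound $\exp(2\beta\phi_0^2\|K\chi_1\|_2\|\chi_2\|_2\,\xi)$. The sharper estimate by $\exp(2\beta\phi_0^2\epsilon\,\xi)$ follows from $K\chi_1=\epsilon\chi_1$ on $\supp{\chi_2}$ (where $\chi_1$ is flat), so that $\langle K\chi_1,\chi_2\rangle=\epsilon\langle\chi_1,\chi_2\rangle$ is controlled by $\epsilon$. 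For the lower bound I would isolate the $\xi^0$ term, which is exactly the single Gaussian factor $e^\Gamma e^{\langle A\rangle_\ell}\big|_{A=0}=e^{-\frac12\tilde v(\ell,\ell)}$, equal in the paper's normalisation to $e^{-\frac{\beta}{2}\tilde v(g,g)\phi_0^2}$, and bound the $k\ge1$ remainder by $\exp(2\beta\phi_0^2\epsilon\xi)-1$, giving $|e^\Gamma e^{T_1^\xi}|\ge1+e^{-\frac{\beta}{2}\tilde v(g,g)\phi_0^2}-\exp(2\beta\phi_0^2\epsilon\xi)$.

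The main obstacle I expect is \emph{not} combinatorial: unlike the $T_2^\xi$ case there is no Hadamard/$\Gamma$-function gymnastics, since the absence of an external $A$-leg lets positivity of $\tilde v$ bound every Gaussian factor by $1$ at once. The genuinely delicate points are (i) the clean telescoping of the winding sum and the $\delta u$-integral into $K^{-1}$, which is what converts the a priori divergent-looking sum over bridges into a finite, $\beta$-linear constant and which relies essentially on $\epsilon>0$ together with the two factors of $K$ at the endpoints collapsing $K^{-1}K$ to the identity; and (ii) the bookkeeping of the $2^k$ subset terms produced by the $(e^{-\langle A\rangle_{h_w}}-1)$ factors, which degrades the natural constant $M$ into $2M$ but leaves the $\xi$-series summable. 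Care must also be taken that the integration-by-parts remainder $R$ is truly annihilated by $e^\Gamma$ and that the cutoffs $\chi_1,\chi_2$ are chosen flat on $\supp{g}$, so that the reduction of $\langle K\chi_1,\chi_2\rangle$ to the $\epsilon$-bound is available.
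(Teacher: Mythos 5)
Your proposal follows essentially the same route as the paper's proof: the positivity of $\tilde v$ is used to bound every Gaussian factor produced by $e^{\Gamma}$ by $1$, the bracket $\bigl(e^{-\langle A\rangle_{h_w}}-1\bigr)$ contributes the factor $2$, the winding sum and the $\delta u$-integral are resummed into $K^{-1}$ (the paper does this in Fourier space, obtaining $1/(\tfrac{p^2}{2m}-\tilde{\tilde{\mu}})$), Cauchy--Schwarz gives $2\beta\phi_0^2\|K\chi_1\|_2\|\chi_2\|_2$, and the lower bound is obtained by isolating the $\xi^0$ term $e^{-\frac{\beta}{2}\tilde v(g,g)\phi_0^2}$ and arguing as for the $\phi_0=0$ denominator. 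The only cosmetic differences are your subset expansion of the bracket into $2^k$ terms versus the paper's direct bound by $2$, and your justification of the final $\epsilon$-estimate via flatness of $\chi_1$ on $\mathrm{supp}(\chi_2)$ versus the paper's scaling limit of the cutoffs; neither changes the argument.
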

\begin{proof}
Consider $T_{1}^\xi$.
The two cutoffs $\chi_i$ contained there can be combined to give 
\[
\psi_i = \phi_0 K \chi_i.
\]
In the evaluation of
\[
e^{\Gamma} e^{\langle A\rangle_b} e^{T_{1}^\xi}
\]
since 
\[
e^{\Gamma}e^{-\langle A\rangle_b}
e^{-\langle A\rangle_f}
= e^{-\frac{1}{2}\tilde{v}(b+f,b+f)}\leq 1
\]
we can estimate 
\[
\left(e^{-\int_{0}^{n\beta +\delta u} gA(\omega(s),s+\beta) \di s } -1\right)
\]
by $2$.
The path integral can now be taken, integrating over $\delta u$ and summing over $n$ we obtain
that 
\[
\begin{aligned}
\int \di u \sum_n e^{(\beta n + u)\tilde{\tilde{\mu}}}\int \di^3 x \int \di^3 y \; |\psi_1(x)| \int \di\mu_{y,0}^{x,\beta n+u}(\omega)  |\psi_2(y)|
&=\int \di u \int \di^3 p \;  \overline{\hat{|\psi_1|}(p)}
\hat{|\psi_2|}(p)
\frac{e^{-u (\frac{p^2}{2m}-\tilde{\tilde{\mu}} )}}{1-e^{-\beta (\frac{p^2}{2m}-\tilde{\tilde{\mu}} )}}
\\
&= \int \di^3 p \; 
\frac{1}{\frac{p^2}{2m}-\tilde{\tilde{\mu}}}
\;
\overline{\hat{|\psi_1|}(p)}
\hat{|\psi_2|}(p)
\end{aligned}
\]
Combining we get the estimate
\begin{align*}
|e^{\Gamma} e^{-\langle A \rangle_b}e^{{T}_{1}^\xi}
|
&\leq
\exp \left( 2 \beta \xi
\langle |\psi_1|, K^{-1} |\psi_2| \rangle
\right)
\\
&\leq
\exp \left( 2 \beta \xi
\langle |K\chi_1 \phi_0|, K^{-1} |K\chi_2 \phi_0|| \rangle 
\right)
\end{align*}
Notice that, in the limit where $\chi_1$ and $\chi_2$ tends to $1$, 
($\chi_i^n(x) = \chi_i(x/n)$ for large $n$ and for $\epsilon = \tilde{\tilde{\mu}} \to 0$)
we have that $\|\psi_i\|_2 \to 0 $  as $\epsilon n^{3/2} + 1/\sqrt{n}$.
Furthermore,
\[
\|\frac{1}{p^2/2m-\tilde{\tilde{\mu}}}\|_\infty\leq \frac{1}{\epsilon}
\]
Combining the estimates we have by Cauchy Schwartz inequality
\[
2 \beta
\langle |K\chi_1 \phi_0|, K^{-1} |K\chi_2 \phi_0|| \rangle
\leq 2\beta \phi_0^2 \|K \chi_1\|_2  \| \chi_2 \|_2
\]
in the limit where $n$ in $\chi_1$ diverges and $\epsilon$ to $0$ in the appropriate way, we have that the argument of the exponential tends to $0$, independently on $\beta$ and $
\phi^2_0$.

To get the lower bound we observe that for $\xi=0$ 
\[
\left. e^{\Gamma} e^{T_{1}^\xi} \right|_{\xi=0} = 
e^{\Gamma}e^{\int du A(u) g^2\phi_0^2} 
=e^{-\frac{\beta }{2}\tilde{v}(g,g)\phi_0^2} 
\]
hence, arguing as in \eqref{eq:denominator-lower-bound} we get the thesis.
\end{proof}

\subsubsection{Combined estimates}


Combining the results of the previous subsections 
we have all the elements to evaluate the expectation values of the two-point functions of the interacting theory.
We actually have that for any $f,h\in\mathcal{H}$,
\[
e^{\Gamma}(\langle f, \Omega^{\phi_0}h\rangle e^{W_A})
\]
can be expanded as a series in powers of the auxiliary parameter $\xi$ given in \eqref{eq:WAxi}. 
To prove absolute convergence of that power series, we combine the estimates of Proposition \ref{prop:bounds-no-condenstate} to analyze the contribution  of $\mathcal{S}_0 -\omega^\beta(V_A)$ with Proposition \ref{lm:estimate-S-condenstante} to estimate the contribution of $\mathcal{S}_{\phi_0}- \mathcal{S}_0$. 
We thus have that 
with a suitable choice of $\beta$, of $g$, of $\epsilon$, of $\phi_0$ and if $v$ is sufficiently small the correlation functions of the equilibrium state for the interacting theories can be given and are finite. We have actually the following theorem
\begin{theorem}\label{th:convergence}
    Combining the estimates we have that
    \[
    I=\omega^{\beta V} (\Psi(f)\Psi^*(h)) = \left. \frac{e^{\Gamma} \langle f,\Omega^{\phi_0}h\rangle Z_A }{e^{\Gamma} Z_A} \right|_{A=0}
    \]
    and it holds that the truncated series in powers of the auxiliary parameter $\xi$ 
    which results after the application of $e^{\Gamma}$ 
    are bounded by
    \[
    |I| \leq \frac{C_{f,g}(1 + \phi_0^2)e^{R}}{1+e^{- \frac{\beta\phi_0^4}{2} \tilde{v}(g,g)}-e^{R}}  C
    \]
    for $\xi\leq 1$, $C_{f,g}$ a positive constant depending on $f,g$, and uniformly on the order of truncation
where 
\[
R  = \epsilon \beta  \phi_0^{2} \|g\| 
+ \frac{\sqrt{v(0)}}{{\beta}} \|g\|_1 \tilde{{C}}
\]
and $C$ is a suitable bound of $\Delta^{\beta}(0)$.

    For $\xi\leq 1$, the series is thus absolutely convergent if the parameter of the theory are chosen in such a way that 
    \[
    1+e^{- \frac{\beta\phi_0^4}{2} \tilde{v}(g,g)} > e^{R}
    \]
\end{theorem}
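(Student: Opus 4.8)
The plan is to combine the three groups of estimates already established into a single bound on the ratio \eqref{eq:S} of Proposition \ref{pr:what-we-compute}. Recall from that proposition that
\[
I = \omega^{\beta V}(\Psi(f)\Psi^*(h)) = \left. \frac{e^{\Gamma}\left(\langle f, \Omega^{\phi_0} h\rangle\, e^{W_A}\right)}{e^{\Gamma} e^{W_A}}\right|_{A=0},
\]
so the task reduces to bounding the numerator from above and the denominator from below, both uniformly in the truncation order of the $\xi$-expansion. First I would write $W_A^\xi = W_{A,0}^\xi + T_1^\xi$, separating the vanishing-background part $W_{A,0}^\xi = T_0^\xi + T_2^\xi$ (handled in Proposition \ref{prop:bounds-no-condenstate}) from the purely condensate part $T_1^\xi$ (handled in Proposition \ref{lm:estimate-S-condenstante}). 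The factor $\langle f, \Omega^{\phi_0} h\rangle$ I would split through the identity $\langle f, \Omega^{\phi_0} h\rangle = \langle f, \Omega h\rangle + \omega^{\beta A}(\Psi(f))\,\omega^{\beta A}(\Psi^*(h))$ recorded before Proposition \ref{prop:one-point-two-point}, estimating the connected part and the one-point product separately.

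Next I would expand every factor through its Feynman-Kac representation (as in \eqref{eq:omega} and \eqref{eq:path-integral-thermal}), so that all dependence on $A$ appears as products of exponentials $e^{\langle A\rangle_{b}}$ of linearly smeared $A$. The crucial tool is the identity $\left.e^{\Gamma}\prod_i e^{\langle A\rangle_{b_i}}\right|_{A=0} = e^{-\frac12\sum_{i,j}\tilde v(b_i,b_j)}$, which turns the product into a single Gaussian whose exponent splits into diagonal self-contractions and off-diagonal cross-contractions. The diagonal terms reproduce exactly the estimates already carried out: the $W_{A,0}^\xi$ block is controlled by Proposition \ref{prop:bounds-no-condenstate} and the $T_1^\xi$ block by Proposition \ref{lm:estimate-S-condenstante}. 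The cross-contractions, both among path sources and between the external source $b$ and the path sources, are non-negative by the assumed positivity of $\tilde v$, hence each such factor is bounded by $1$; this is the very mechanism behind $e^{\Gamma}\Omega \le \Omega_0$ at the end of Section \ref{se:6} and behind the Cauchy-Schwarz estimates of Lemma \ref{lm:step1}. In particular the connected part of the numerator is dominated by $\langle f, \Omega_0 h\rangle \le C_{f,g}\,C$ with $C$ a bound on $\Delta^\beta(0)=\Omega_0$, while each of the two one-point factors $\omega^{\beta A}(\Psi(f))$ and $\omega^{\beta A}(\Psi^*(h))$ carries one power of $\phi_0$, yielding the prefactor $(1+\phi_0^2)$.

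I would then assemble the pieces. For the numerator, the diagonal $W_{A,0}^\xi$-contribution is bounded by $\exp E$ with $E=\tilde{C}\, v(0)^{1/2}\|g\|_1\xi/\beta$ (Proposition \ref{prop:bounds-no-condenstate}) and the diagonal $T_1^\xi$-contribution by $\exp(2\beta\phi_0^2\epsilon\xi)$ (Proposition \ref{lm:estimate-S-condenstante}); since the cross-contractions are $\le 1$ these multiply, so the whole numerator is bounded by $C_{f,g}(1+\phi_0^2)\,C\,e^{R}$ with $R=\epsilon\beta\phi_0^2\|g\|+\sqrt{v(0)}\,\|g\|_1\tilde{C}/\beta$, uniformly in the truncation order for $\xi\le 1$. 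For the denominator I would use the matching lower bounds: the vanishing-background block gives $\ge 2-\exp E$ and the condensate block contributes, at $\xi=0$, the factor $e^{-\frac{\beta}{2}\phi_0^4\tilde v(g,g)}$, so that combining them as in \eqref{eq:denominator-lower-bound} yields $\bigl|e^{\Gamma}e^{W_A^\xi}\bigr| \ge 1+e^{-\frac{\beta\phi_0^4}{2}\tilde v(g,g)}-e^{R}$. Taking the ratio produces the announced bound on $|I|$, and absolute convergence of the $\xi$-series for $\xi\le 1$ follows as soon as the denominator bound is strictly positive, namely $1+e^{-\frac{\beta\phi_0^4}{2}\tilde v(g,g)}>e^{R}$.

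The main obstacle I expect is the uniformity in the truncation order together with the bookkeeping guaranteeing that all cross-contractions produced by $e^{\Gamma}$ are genuinely bounded by $1$ rather than merely finite. Applying $e^{\Gamma}$ to a product of $k$ exponentials generates $O(k^2)$ mixed pairings, and a naive estimate would destroy the $1/k!$ prefactors that make the series summable; the entire argument hinges on the positivity of the two-body form $\tilde v$ assumed for $v$, so that the diagonal blocks alone carry the growth recorded in the two propositions while every mixed factor is harmless. Verifying this reduction order by order in $\xi$, and checking that the source attached to $\Omega^{\phi_0}$ does not degrade the bound (this is where the factor $C_{f,g}$ and the power $\phi_0^2$ are tracked), is the delicate technical heart; once in place, the convergence criterion is an immediate consequence of the geometric-series structure of the resulting ratio.
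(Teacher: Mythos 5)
Your proposal follows the same route as the paper's proof: split $\Omega^{\phi_0}$ into the truncated part plus the product of one-point functions, expand everything via Feynman--Kac into exponentials of linearly smeared $A$ so that $e^{\Gamma}$ acts through the Gaussian identity with cross-contractions controlled by positivity of $\tilde v$, and then combine the upper bound of Proposition \ref{prop:bounds-no-condenstate} with that of Proposition \ref{lm:estimate-S-condenstante} for the numerator and the matching lower bounds for the denominator. Your account of where the factor $(1+\phi_0^2)$ and the constant $C_{f,g}$ arise, and of the multiplicative combination giving $e^{R}$ above and $1+e^{-\beta\phi_0^4\tilde v(g,g)/2}-e^{R}$ below, matches the paper's argument (which in addition records the explicit integration-by-parts rewriting of $\omega^{\beta A}(\Psi(f))$ in terms of $(K+\partial_u)\chi\phi_0$), so the proof is correct and essentially identical in structure.
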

\begin{proof}
Recall that 
\[
\langle f,\Omega^{\phi_0} h\rangle = \langle f,\Omega h\rangle + \omega^{\beta A}(\Psi(f))\omega^{\beta A}(\Psi^*(h)).
\]
Furthermore, 
\[
\langle f,\Omega h\rangle = 
\int \di^3 x\di^3 y  f(x)\sum_n \int \di\mu_{x,0}^{y,n \beta}  e^{-\sum_{l=0}^{n-1}
\int_0^{\beta } \di s  
\di u   A(\omega(\beta l +s),s) } h(y)
\]
and
\[
\sum_{n} \int \di\mu_{x,0}^{y,n \beta}  e^{-\sum_{l=0}^{n-1}
\int_0^{\beta } \di s  
\di u   A(\omega(\beta l +s),s) }
= \sum_n \int \di\mu_{x,0}^{y,n \beta}  e^{-\langle A\rangle_{h_\omega}}
\]
for a suitable $h_{\omega}$.
Furthermore, the sum over $n$ is bounded as in Lemma \ref{lm:norm-state}.

Using a method similar 
to the one discussed in Proposition \ref{lm:estimate-S-condenstante} we also have that 
\begin{align*}
\omega^{\beta A}(\Psi(f)) 
&= \int_0^\beta \di u \langle f, \Delta^{\beta A}(0,u) A(u) g\phi_0\rangle  
\\
&= 
\int_0^\beta \di u \langle f, \Delta^{\beta A}(0,u) A(u) g\chi \phi_0\rangle  
\\
&= \langle f, \chi \phi_0\rangle
-
\int_0^\beta \di u \langle f, \Delta^{\beta A}(0,u) (K+\frac{\partial}{\partial u})\chi \phi_0\rangle
\end{align*}
for a suitable smooth compactly supported function $\chi$ which is $1$ on the support of $g$.
Hence also $\omega^{\beta A}(\Psi(f))$ admits an expansion in term of an exponential of a suitably smeared $A$
\[
\omega^{\beta A}(\Psi(f)) =
\langle f, \chi \phi_0\rangle
-
\sum_n \int \di^3 x \di^3 y f(x)\int \di\mu_{x,0}^{y,\beta n} e^{-\langle A\rangle _{h_\omega}} (K+\frac{\partial}{\partial u})\chi(y)\phi_0
\]
and a similar expansion can be obtained for $\omega^{\beta A}(\Psi^*(h))$.
The proof of this theorem can now be obtained combining the estimates given in Proposition \ref{prop:bounds-no-condenstate} and in Proposition \ref{lm:estimate-S-condenstante}.
\end{proof}




\appendix

%
%
%

\section*{Acknowledgments}
The authors want to thank Benjamin Schlein for illuminating discussions on BEC and Klaus Fredenhagen for useful comments about the structure of interacting quantum field theory in the non relativistic limit. S.G. further and gratefully acknowledges Benjamin Schlein for his financial support and warm hospitality at the Department of Mathematics, University of Zürich. Additional support was provided through a Short-Term Scientific Mission funded by COST Action CA21109 – CaLISTA, supported by COST (European Cooperation in Science and Technology). 
The authors are  grateful for the support of the National Group of Mathematical Physics (GNFM-INdAM). 
The research presented in this paper was supported in part by the MIUR Excellence Department Project 2023-2027 awarded to the Department of Mathematics of the University of Genova, CUPD33C23001110001.
\\

{\bf Data availability:} 
Data sharing not applicable to this article as no datasets were generated or analyzed during
the current study. 
\\

{\bf Conflicts of interest statement:} 
All authors have no conflicts of interest to declare.

\section{Counting the number of contributions} \label{ap:counting}
To count the number of diagrams we can use ideas similar to those present in \cite{Cvitanovic}
Use a one dimensional theory. This amount to consider fields as numbers and the propagator is then $1$. 
With this idea,
for an uncharged field, the number of diagrams (also the non connected ones) in 
\[
\mathcal{T}(A^{2N})
\]
is computed to be 
\[
\left. 
e^{\frac{1}{2} \frac{\partial^2}{\partial A^2}}A^{2N} 
\right|_{A=0} = \frac{(2N)!}{2^N N!} = (2N-1)!!
\]
and it is $0$ for odd powers of the field.
For a charged field
\[
\left. 
e^{ \frac{\partial^2}{\partial \Phi\partial\Phi^*}}|\Phi|^{2N} 
\right|_{\Phi=\Phi^*=0} = N! 
\]
and $0$ for an uneven number of fields $\Phi$ and $\Phi^{*}$. 
The restriction to connected diagrams does not alter the leading asymptotic behavior.
It contributes at most with a factor.



\printbibliography
\end{document}